\newtheorem{theorem}{Theorem}[section]
\newtheorem{corollary}[theorem]{Corollary}
\newtheorem{proposition}[theorem]{Proposition}
\newtheorem{lemma}[theorem]{Lemma}
\theoremstyle{definition}
\newtheorem{definition}[theorem]{Definition}
\newcommand*\circled[1]{\tikz[baseline=(char.base)]{
            \node[shape=circle,draw,inner sep=2pt] (char) {#1};}}
\title{\bf Weighted Fair Division with Matroid-Rank Valuations: Monotonicity and Strategyproofness}
\author[1]{Warut Suksompong}
\author[2]{Nicholas Teh}
\affil[1]{National University of Singapore, Singapore}
\affil[2]{University of Oxford, UK}
\newcommand{\mwnwtie}{MWNW$^\text{tie}$}
\newcommand{\therule}{$\mathcal{R}_f$}
\newcommand{\ruletie}{$\mathcal{R}^\text{tie}_f$}
\newcommand{\alow}{\mathcal{L}}
\newcommand{\ahigh}{\mathcal{H}}
\newcommand{\truth}{\text{truth}}
\newcommand{\lie}{\text{lie}}
\date{\vspace{-10mm}}
\begin{document}

\maketitle

\begin{abstract}
    We study the problem of fairly allocating indivisible goods to agents with weights corresponding to their entitlements.
    Previous work has shown that, when agents have binary additive valuations, the maximum weighted Nash welfare rule is resource-, population-, and weight-monotone, satisfies group-strategyproofness, and can be implemented in polynomial time.
    We generalize these results to the class of weighted additive welfarist rules with concave functions and agents with matroid-rank (also known as binary submodular) valuations.    
\end{abstract}

\section{Introduction}
\label{sec:intro}

The problem of fairly allocating a set of scarce resources among interested parties is longstanding and fundamental in economics \citep{BramsTa96,Moulin03}.
Applications of fair division are wide-ranging, from allocating medical supplies to communities or schoolteachers to primary schools, to dividing assets in a divorce settlement and usage rights of a jointly invested facility.

While numerous approaches have been proposed and investigated in the fair division literature, an approach that has enjoyed significant attention is the \emph{maximum Nash welfare (MNW)} rule, which chooses an allocation that maximizes the product of the agents' utilities.
When the resource consists of indivisible goods and agents have additive valuations, \citet{CaragiannisKuMo19} showed that an MNW allocation satisfies the fairness notion of \emph{envy-freeness up to one good (EF1)} as well as the efficiency notion of \emph{Pareto-optimality}.
The remarkable fairness of MNW was further cemented when \citet{Suksompong23} characterized it as the unique rule within the class of \emph{additive welfarist rules} that always produces an EF1 allocation.
Additive welfarist rules select an allocation that maximizes a welfare notion given by the sum of some increasing function of the agents' utilities \citep[p.~67]{Moulin03}---in particular, MNW corresponds to taking the logarithm function.\footnote{\citet{YuenSu23} extended the characterization to all (not necessarily additive) welfarist rules.}

Although the majority of the fair division literature---including the aforementioned works of \citet{CaragiannisKuMo19} and \citet{Suksompong23}---assumes that all agents have the same entitlement to the resource, this assumption fails to hold in several practical scenarios.
For instance, when allocating resources among schools or communities of different sizes, larger schools and communities naturally deserve a larger proportion of the shared resource pool.
Similarly, in inheritance division, closer relatives are typically entitled to a larger share than distant ones.
As a consequence, several researchers have recently explored the fair allocation of indivisible goods when each agent has a \emph{weight} representing her entitlement \citep{FarhadiGhHa19,AzizMoSa20,AzizGaMi23,BabaioffEzFe21,BabaioffNiTa21,ChakrabortyIgSu21,ChakrabortyScSu21,ChakrabortySeSu22,SuksompongTe22,HoeferScVa23,ScarlettTeZi21}.
In the weighted setting, MNW can be generalized to the \emph{maximum weighted Nash welfare (MWNW)} rule, which selects an allocation maximizing the weighted product of the agents' utilities, where the weights appear in the exponents.
\citet{ChakrabortyIgSu21} showed that under additive valuations, MWNW satisfies Pareto-optimality as well as a weighted extension of EF1 called \emph{weak weighted envy-freeness up to one good}.
\citet{SuksompongTe22} proved that under \emph{binary} additive valuations---where each agent's utility for each good is either $0$ or $1$---MWNW is resource- and population-monotone, group-strategyproof, and can be computed in polynomial time.\footnote{To be more precise, their results hold for a version of MWNW with specific tie-breaking, which they called \mwnwtie.}
While these results established MWNW as a strong candidate rule for binary additive valuations, they still left open the possibility that more attractive rules exist in this domain.\footnote{Note that weighted fair division under binary additive valuations generalizes the well-studied setting of \emph{apportionment} \citep{BalinskiYo01,Pukelsheim14}, which corresponds to the case of identical goods (i.e., each agent's utility for each good is $1$).} 

Extending the definition of additive welfarist rules to the weighted setting, one can consider the class of \emph{weighted additive welfarist rules}, which select an allocation that maximizes a welfare notion given by the \emph{weighted} sum of some increasing function of the agents' utilities.
As with MNW in the unweighted setting, MWNW corresponds to taking the logarithm function in the weighted setting.
Recently, \citet[Sec.~6]{MontanariScSu22} proved that the \emph{maximum weighted harmonic welfare (MWHW)} rule, which takes a function based on the harmonic numbers, offers stronger guarantees in terms of weighted envy-freeness than MWNW.
More generally, they showed the same for a family of rules based on ``modified harmonic numbers''.
In fact, their findings continue to hold even for \emph{matroid-rank} valuations---also known as \emph{binary submodular} valuations---where the valuation function of each agent is submodular and the marginal utility for each good is always either $0$ or $1$.
Matroid-rank valuations generalize binary additive valuations and capture settings such as course allocation in universities or public housing allocation among ethnic groups \citep{BenabbouChIg21}; as such, they have received attention from a number of researchers in fair division \citep{BabaioffEzFe21-dichotomous,BarmanVe21,BarmanVe22,BenabbouChIg21,GokoIgKa22,MontanariScSu22,ViswanathanZi22,ViswanathanZi23}.

Our discussion thus far raises two natural questions:
do the results of \citet{SuksompongTe22} hold for other weighted additive welfarist rules besides MWNW, and do they continue to hold even for matroid-rank valuations?

\subsection{Our Contributions}
\label{sec:contribution}

In this paper, we provide positive answers to both of the above questions.
Specifically, we focus on weighted additive welfarist rules with a \emph{concave} function---this class of rules encompasses MWNW, MWHW, as well as all variations of MWHW based on modified harmonic numbers considered by \citet{MontanariScSu22}.
Such rules intuitively aim to distribute utilities fairly among agents: for instance, if all agents have binary additive valuations and equal weights and each agent has a value of~$1$ for each good, then a weighted additive welfarist rule with any concave function spreads the goods as equally as possible among the agents (whereas one with a convex function allocates all goods to the same agent, a patently unfair outcome in this scenario).\footnote{Interestingly, Theorem~3.14 of \citet{BenabbouChIg21} implies that in the \emph{unweighted} setting, all additive welfarist rules with a strictly concave function are equivalent to MNW.
However, as \citet{MontanariScSu22} showed, in the \emph{weighted} setting, different concave functions yield different weighted additive welfarist rules with varying fairness guarantees even for binary additive valuations.}
For a concave (and strictly increasing) function~$f$, we consider the weighted additive welfarist rule with function~$f$ with specific tie-breaking; we denote this rule by \ruletie{}.
We generalize the results of \citet{SuksompongTe22}, both in terms of the rules (from the logarithm function to an arbitrary concave function) and in terms of the valuation functions (from binary additive to matroid-rank valuations).
More specifically, our contributions are as follows.
\begin{itemize}
\item First, we show that \ruletie{} is resource-, population-, and weight-monotone under matroid-rank valuations. 
To the best of our knowledge, all three properties have not been studied in the matroid-rank setting.
In fact, our result on weight-monotonicity holds more generally and only requires the function~$f$ to be strictly increasing.\footnote{\citet{SuksompongTe22} did not consider weight-monotonicity. \citet{ChakrabortyScSu21} showed that MWNW satisfies weight-monotonicity under general additive valuations.}
\item Second, we show that \ruletie{} satisfies group-strategyproofness.
In addition to generalizing the corresponding result of \citet{SuksompongTe22} from the weighted binary additive setting, this also extends the work of \citet{BarmanVe22} from the unweighted matroid-rank setting.
\item Third, we exhibit an algorithm that computes an \ruletie{} allocation in polynomial time.
\end{itemize}
Taken together, these results imply that the desirable properties of MWNW established in prior work are in fact not unique to MWNW, but instead hold for other weighted additive welfarist rules as well.
In particular, combined with the aforementioned results of \citet{MontanariScSu22}, our findings demonstrate that MWHW and its variations are arguably stronger candidate rules for agents with arbitrary entitlements and matroid-rank (or binary additive) valuations.

\subsection{Further Related Work}

While several prior works have focused on the unweighted matroid-rank or the weighted (binary) additive setting as we mentioned earlier, the only two previous papers that considered the weighted matroid-rank setting (to our knowledge) are those by \citet{MontanariScSu22} and \citet{ViswanathanZi22}.
\citet{MontanariScSu22} investigated weighted envy-freeness notions---they proved that MWHW and its variants satisfy a notion called \emph{transferable WEF$(x,1-x)$}, which MWNW fails.
\citet{ViswanathanZi22} proposed a framework that can be used to compute an allocation satisfying a range of fairness objectives, including MWNW, in polynomial time.\footnote{\citet[App.~B]{MontanariScSu22} noted that the framework of \citet{ViswanathanZi22} can also handle MWHW and its variations.}
They also showed that the algorithms resulting from their framework satisfy (individual) strategyproofness.
We note that their techniques are based on those of \citet{BabaioffEzFe21-dichotomous}, which were used to show (individual) strategyproofness in the unweighted matroid-rank setting, and are therefore unlikely to be directly useful for establishing group-strategyproofness.
By contrast, we follow the approach of \citet{BarmanVe22} for exhibiting group-strategyproofness in the unweighted matroid-rank setting.
We also remark that neither \citet{MontanariScSu22} nor \citet{ViswanathanZi22} considered any of the monotonicity properties.

In the unweighted matroid-rank setting, an interesting rule is the \emph{prioritized egalitarian} mechanism put forward by \citet{BabaioffEzFe21-dichotomous}.
This rule is (individually) strategyproof and returns an allocation which is \emph{Lorenz-dominating}, meaning that it Lorenz-dominates every other allocation.\footnote{An allocation $\mathcal{A}$ is said to \emph{Lorenz-dominate} another allocation $\mathcal{B}$ if the smallest utility induced by $\mathcal{A}$ is at least as large as that induced by $\mathcal{B}$, the sum of the two smallest utilities induced by $\mathcal{A}$ is at least as large as that induced by $\mathcal{B}$, and so on.}
A Lorenz-dominating allocation maximizes the Nash welfare and satisfies \emph{envy-freeness up to any good (EFX)}; a key contribution of \citet{BabaioffEzFe21-dichotomous} was to establish that such an allocation always exists in their setting.
However, it is unclear how to extend the concept of Lorenz-domination to the weighted setting in such a way that existence is preserved.
For example, if we consider weighted utilities where each agent's utility is multiplied by her weight, then in an instance consisting of two agents with weight $1$ and~$2$, respectively, and two goods each yielding value~$1$ to each agent, no allocation is Lorenz-dominating.
Indeed, allocating both goods to the first agent yields weighted utilities $(2,0)$, allocating both goods to the second agent yields weighted utilities $(0,4)$, and allocating one good to each agent yields weighted utilities $(1,2)$.
A similar issue arises if we consider weighted utilities where each agent's utility is raised to an exponent corresponding to her weight.

\section{Preliminaries}
\label{sec:prelim}

In the allocation of indivisible goods, we are given a set $N = \{1,\dots,n\}$ of $n$ \textit{agents} and a set $G = \{g_1, \dots, g_m \}$ of $m$ \textit{goods}. 
Both $N$ and $G$ are not necessarily fixed, as extra agents or goods may be added.
Subsets of goods in $G$ are referred to as \textit{bundles}. Each agent $i \in N$ has a \textit{weight} $w_i > 0$ representing her entitlement, and a nonnegative \textit{valuation function} (or \emph{utility function}) $v_i$ over bundles of goods. 
The list $\mathbf{v} = (v_1,\dots,v_n)$ is called a \emph{valuation profile}.
The setting where all agents have the same weight is sometimes referred to as the \emph{unweighted setting}.
For convenience, we write $v_i(g)$ instead of $v_i(\{g\})$ for a single good~$g$.
We assume throughout the paper that $v_i$ is \emph{matroid-rank} (also known as \emph{binary submodular}); more precisely, $v_i$ satisfies the following properties:
\begin{itemize}
\item \emph{binary}: $v_i(G'\cup\{g\}) - v_i(G')\in \{0,1\}$ for all $G'\subseteq G$ and $g\in G\setminus G'$;
\item \emph{monotone}: $v_i(G') \le v_i(G'')$ for all $G'\subseteq G''\subseteq G$;
\item \emph{submodular}: $v_i(G'\cup\{g\}) - v_i(G') \ge v_i(G''\cup \{g\}) - v_i(G'')$ for all $G'\subseteq G''\subseteq G$ and $g\in G\setminus G''$;
\item \emph{normalized}: $v_i(\emptyset) = 0$.
\end{itemize}
An \emph{instance} consists of the set of agents~$N$, the set of goods~$G$, the agents' weights $(w_1,\dots,w_n)$, and the valuation profile $(v_1,\dots,v_n)$.
 
An \emph{allocation} $\mathcal{A} = (A_1, \dots, A_n)$ is a list of $n$ bundles such that no two bundles overlap, where agent~$i$ receives bundle $A_i$.
Note that it is not necessary that $\bigcup_{i\in N}A_i = G$.
For an allocation $\mathcal{A}$, let its \emph{utility vector} be $(v_1(A_1),\dots,v_n(A_n))$.
An \textit{allocation rule}, or simply a \emph{rule}, is a function that maps each instance to an allocation. 
Under matroid-rank valuations, an allocation $\mathcal{A}$ is said to be \emph{non-redundant}\footnote{The term \emph{non-redundant} was used by \citet{BabaioffEzFe21-dichotomous}.
The same concept has also been called \emph{clean} \citep{BenabbouChIg21} and \emph{non-wasteful} \citep{BarmanVe22}.} if $v_i(A_i) - v_i(A_i\setminus\{g\}) > 0$ (equivalently, $v_i(A_i) - v_i(A_i\setminus\{g\}) = 1$) for all $i\in N$ and $g\in A_i$.
It is known that $\mathcal{A}$ is non-redundant if and only if $v_i(A_i) = |A_i|$ for all $i\in N$ \citep[Prop.~3.3]{BenabbouChIg21}.
An allocation~$\mathcal{A}$ is \emph{Pareto-optimal} if there does not exist another allocation $\mathcal{A}'$ such that $v_i(A_i') \ge v_i(A_i)$ for all $i\in N$ and the inequality is strict for at least one $i\in N$; such an allocation~$\mathcal{A}'$ is said to \emph{Pareto-dominate} $\mathcal{A}$.
We denote by $N^+_\mathcal{A}\subseteq N$ the subset of agents receiving positive utility from~$\mathcal{A}$.
For a set of agents $S\subseteq N$, let  $\mathcal{A}_S$ denote the allocation derived from restricting $\mathcal{A}$ to the bundles of the agents in $S$ (in the same order as in $\mathcal{A})$. 

We can now state the definition of weighted additive welfarist rules.

\begin{definition}[\therule{}]
\label{def:Rf}
	Let $f : \mathbb{Z}_{\ge 0} \rightarrow [-\infty,\infty)$ be a strictly increasing function.
The \emph{weighted additive welfarist rule with function $f$}, denoted by \therule{}, chooses an allocation~$\mathcal{A}$ that maximizes the weighted welfare $\sum_{i\in N} w_i \cdot f(v_i(A_i))$.
If there are multiple such allocations, \therule{} may choose an arbitrary one.
The only exception is when the maximum possible weighted welfare is $-\infty$, in which case \therule{} first maximizes the number of agents $i\in N$ such that $f(v_i(A_i)) \neq -\infty$ (i.e., the number of agents receiving positive utility, $|N^+_\mathcal{A}|$),\footnote{Note that by definition of $f$, the only possible $k\in\mathbb{Z}_{\ge 0}$ such that $f(k) = -\infty$ is $k = 0$.} prioritizing agents with lower indices lexicographically in case of ties, then subject to that, maximizes the weighted welfare $\sum_{i \in N^+_\mathcal{A}} w_i \cdot f(v_i(A_i))$ among such agents.

An allocation is said to be an \emph{\therule{} allocation} if it can be chosen by the rule \therule{}.
\end{definition}

It follows from Definition~\ref{def:Rf} that an \therule{} allocation is always Pareto-optimal.
 
 Similarly to the additional tie-breaking specifications of the MNW and MWNW rules introduced by \cite{HalpernPrPs20} and \cite{SuksompongTe22}, respectively, we consider a specific version of the rule \therule{} called \ruletie{}.
 An allocation~$\mathcal{A}$ is said to be \emph{lexicographically dominating} within a set of allocations $\Pi$ if it maximizes the utility vector in a lexicographical order.
 More precisely, among all allocations in $\Pi$, the allocation $\mathcal{A}$ maximizes $v_1(A_1)$, then subject to that, maximizes $v_2(A_2)$, and so on.
 Then, we define \ruletie{} as follows.
 
 \begin{definition}[\ruletie{}] \label{def:ruletie}
 Given a function~$f$ as in Definition~\ref{def:Rf},
 	the rule \ruletie{} returns an allocation $\mathcal{A}$ such that
 	\begin{enumerate}
 		\item $\mathcal{A}$ is an \therule{} allocation that is also lexicographically dominating within the set of all \therule{} allocations; and
 		\item $\mathcal{A}$ is non-redundant.
 	\end{enumerate}
  An allocation is said to be an \emph{\ruletie{} allocation} if it can be chosen by the rule \ruletie{}.

 \end{definition}
 Clearly, there exists an allocation satisfying Condition~1 of Definition~\ref{def:ruletie}.
 Given such an allocation, one can find an allocation that additionally satisfies Condition~2 by iteratively removing a ``redundant'' good (i.e., a good whose removal does not decrease its owner's utility) until no such good exists.
 Hence, \ruletie{} is well-defined.
 If there are multiple allocations satisfying both conditions, \ruletie{} arbitrarily picks one of them.
 Note that even though there can be more than one \ruletie{} allocation, all such allocations have the same utility vector.
 As discussed in Section~\ref{sec:contribution}, we will focus on \ruletie{} rules where $f$ is concave, which means that
\begin{equation*}
		f(k+1) - f(k) \ge f(k+2) - f(k+1) \text { for all } k \geq 0.
\end{equation*}

Since the valuations that we consider in this paper may be non-additive, in order to reason about the running time of algorithms, we assume that an algorithm can query the value of any agent~$i$ for any bundle~$G'\subseteq G$ in constant time.
This \emph{value oracle} assumption is standard when dealing with non-additive valuations, including matroid-rank valuations \citep{BarmanVe21,BarmanVe21-XOS,BenabbouChIg21,GokoIgKa22,ViswanathanZi23}.

\subsection{Exchange Graphs and Path Augmentation}
\label{sec:exchange}

Next, following \citet{BarmanVe22}, we introduce several definitions and tools that will be useful for working with matroid-rank valuations.

Given two allocations $\mathcal{A} = (A_1,\dots,A_n)$ and $\mathcal{A}' = (A'_1,\dots,A'_n)$, let $\alow(\mathcal{A}',\mathcal{A}) :=\{i \in N : |A'_i| < |A_i|\}$ and $\ahigh(\mathcal{A}',\mathcal{A}) :=\{i \in N : |A'_i| > |A_i|\}$.
An \emph{exchange graph} of an allocation $\mathcal{A}$ is defined as the directed graph $\mathcal{G}(\mathcal{A}) = (G,E)$, with the set of vertices $G$ (i.e., each vertex corresponds to a good), and a directed edge $(g,g') \in E$ if and only if for some $i \in N$ it holds that $g \in A_i$, $g' \not\in A_i$, and $v_i(A_i) = v_i(A_i \cup \{g'\} \setminus \{g\})$.
For any directed path $P = (g_1,\dots,g_t)$ in the exchange graph $\mathcal{G}(\mathcal{A})$ and any $i\in N$,  define the bundle $A_i \triangle P$, which is the result of ``augmenting'' the bundle~$A_i$ along the path~$P$, as follows: starting from $A_i$, for each $j\in \{1,2,\dots,t-1\}$, if $g_j\in A_i$, replace $g_j$ with $g_{j+1}$.
Moreover, for any bundle $B$ and any $i\in N$, let $F_i(B)$ be the set of goods $g\not\in B$ such that $v_i(B\cup\{g\}) - v_i(B) = 1$.

We now state three lemmas on exchange graphs and path augmentation from prior work.
The first lemma was due to \citet{Schrijver03} and stated as Lemma~2.4 in the work of \citet{BarmanVe21}; it establishes the changes in bundle size of agents involved in a path augmentation.\footnote{The lemma is also stated as Lemma~1 and proved in the extended version of the work of \citet{BarmanVe21}.}
\begin{lemma} [\citep{Schrijver03,BarmanVe21}] \label{lemma:gsp_schrijver}
    Let $\mathcal{X}$  be any non-redundant allocation. 
    For any two agents $i,j \in N$, let $Q = (g_1,\dots,g_t)$ be a shortest path in the exchange graph $\mathcal{G}(\mathcal{X})$ between the vertex sets $F_i(X_i)$ and $X_j$ (in particular, $g_1 \in F_i(X_i)$ and $g_t \in X_j$). 
    Then, we have
    \begin{itemize}
    \item $v_i((X_i \triangle Q) \cup \{g_1\}) = |X_i| + 1$;
    \item $v_j(X_j \setminus \{g_t\}) = |X_j| - 1$; and
    \item $v_k(X_k \triangle Q) = |X_k|$ for all $k \in N \setminus \{i,j\}$.
    \end{itemize}      
\end{lemma}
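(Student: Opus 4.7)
The plan is to follow the classical matroid-theoretic argument underlying this result, in which a \emph{shortest} augmenting path in the exchange graph cleanly transfers one unit of value from agent~$j$ to agent~$i$, with all intermediate swaps in other agents' bundles cancelling out.

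First, I would characterize the edges of $Q$ agent-by-agent. By definition of $\mathcal{G}(\mathcal{X})$, each directed edge $(g_s, g_{s+1})$ is witnessed by some agent $k_s \in N$ satisfying $g_s \in X_{k_s}$, $g_{s+1} \notin X_{k_s}$, and $v_{k_s}(X_{k_s} \cup \{g_{s+1}\} \setminus \{g_s\}) = v_{k_s}(X_{k_s})$. Since $\mathcal{X}$ is non-redundant, so $v_{k_s}(X_{k_s}) = |X_{k_s}|$, each such edge represents a value-preserving single swap within $X_{k_s}$.

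Second, I would use the shortest-path property of $Q$ to derive structural constraints on how the path weaves through the bundles: no two consecutive edges can be witnessed by the same agent (otherwise one could shortcut the path through that agent's bundle), and more generally, the goods of $Q$ lying in any fixed $X_k$ form an interleaving pattern that rules out internal shortcuts. In particular, the path meets $X_j$ only at its endpoint $g_t$, since otherwise the path could be truncated earlier and still reach $X_j$.

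Third, for each agent $k \notin \{i, j\}$, I would analyze $X_k \triangle Q$ as a composition of the value-preserving single swaps identified in Step~1. The nontrivial content is that composing value-preserving swaps could in principle decrease aggregate value due to submodular interactions; but the minimality of $Q$, combined with the exchange properties of matroid-rank valuations (invoking Schrijver's simultaneous exchange lemma for matroids, applied separately to each agent's valuation), rules this out and yields $v_k(X_k \triangle Q) = |X_k|$. For agent~$j$, only $g_t$ is removed from $X_j$, and non-redundancy immediately gives $v_j(X_j \setminus \{g_t\}) = |X_j| - 1$. For agent~$i$, the augmentation modifies $X_i$ via the same sort of neutral swaps along the interior of $Q$, so $v_i(X_i \triangle Q) = |X_i|$; adjoining $g_1 \in F_i(X_i)$ then contributes exactly one additional unit of value, because one can verify that $g_1$ remains a ``free'' good with respect to the augmented bundle (again using shortest-path minimality to preclude $g_1$ having become redundant after the interior swaps).

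The main obstacle will be Step~3: showing that individually neutral swaps remain \emph{jointly} neutral. Without the shortest-path hypothesis, one can construct examples where the aggregate effect strictly decreases value, so the argument must genuinely exploit minimality. Concretely, the proof would assume for contradiction that $v_k(X_k \triangle Q) < |X_k|$ for some $k$, or that $g_1$ fails to be a marginal-$1$ good after augmentation for agent~$i$, and in each case derive either a strictly shorter path between $F_i(X_i)$ and $X_j$ in $\mathcal{G}(\mathcal{X})$ or a direct edge that bypasses part of $Q$, contradicting the minimality of $Q$.
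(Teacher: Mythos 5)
The paper does not actually prove this lemma: it is imported verbatim from prior work (Schrijver's book and Lemma~2.4 / Lemma~1 of Barman and Verma), so there is no in-paper argument to compare against. Your outline follows the same classical route as those sources---interpret each edge of $Q$ as a rank-preserving single swap for the agent owning its tail, use minimality of $Q$ to rule out shortcut edges (and to ensure $Q$ meets $X_j$ only at $g_t$ and $F_i(X_i)$ only at $g_1$), and then argue that the composed swaps are jointly neutral for every agent. Two remarks. First, a small inaccuracy: consecutive edges cannot be witnessed by the same agent simply by the definition of the exchange graph (the first edge needs $g_{s+1}\notin X_k$, the second needs $g_{s+1}\in X_k$), not by minimality of $Q$; minimality is what kills \emph{non}-consecutive shortcuts. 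Second, and more substantively, your Step~3 is exactly where the entire content of the lemma lives, and you discharge it by invoking ``Schrijver's simultaneous exchange lemma'' as a black box. That is the unique-perfect-matching (Krogdahl-type) argument: for each agent $k$, the no-shortcut condition guaranteed by minimality of $Q$ forces the bipartite exchange structure between the removed and added goods of $X_k$ to have a unique perfect matching, which is what lets the individually neutral swaps compose without rank loss, and the same argument (applied with $g_1$ appended for agent~$i$) shows $g_1$ is still a marginal-$1$ good for $X_i\triangle Q$. Since the statement you were asked about is itself a cited result, reducing it to that cited matroid lemma is legitimate and matches how the literature (and this paper) handles it; but be aware that your write-up is a reduction to that lemma rather than a self-contained proof, so if one were required, the unique-matching argument would still need to be supplied.
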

Intuitively, Lemma~\ref{lemma:gsp_schrijver} indicates that if we perform augmentation along a shortest path between $F_i(X_i)$ and $X_j$, add $g_1$ to $i$'s bundle, and remove $g_t$ from $j$'s bundle, we obtain a new non-redundant allocation in which agent $i$'s utility increases by $1$, agent $j$'s utility decreases by $1$, and the utilities of the other agents remain unchanged.

The second lemma was stated as a consequence of Lemma 6 in the extended version of the work of \citet{BarmanVe22}. 
\begin{lemma}[\citep{BarmanVe22}] \label{lemma:gsp_BarmanVerma}
    Let $\mathcal{X} = (X_1,\dots,X_n)$ be a non-redundant allocation and $\mathcal{A} = (A_1,\dots,A_n)$ be a Pareto-optimal non-redundant allocation such that there exists an agent $h \in \ahigh(\mathcal{X},\mathcal{A})$. Then, there exists an agent $\ell \in \alow(\mathcal{X},\mathcal{A})$ such that
    \begin{enumerate}
    \item Starting from $\mathcal{X}$, we can obtain a non-redundant allocation in which agent~$\ell$'s utility increases by~$1$, agent~$h$'s utility decreases by $1$, and every other agent receives the same utility as before;
    \item Complementarily, starting from $\mathcal{A}$, we can obtain a non-redundant allocation in which agent~$\ell$'s utility decreases by~$1$, agent~$h$'s utility increases by~$1$, and every other agent receives the same utility as before.
    \end{enumerate}
\end{lemma}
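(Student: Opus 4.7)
My plan is to invoke Lemma~\ref{lemma:gsp_schrijver} twice---once in the exchange graph $\mathcal{G}(\mathcal{X})$ to obtain Condition~1, and once in $\mathcal{G}(\mathcal{A})$ to obtain Condition~2---and to do so with a common agent $\ell \in \alow(\mathcal{X},\mathcal{A})$. Throughout I would use non-redundancy to identify bundle sizes with utilities.

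I would begin by checking that the starting-vertex sets required by Lemma~\ref{lemma:gsp_schrijver} are non-empty. Since $|X_h| > |A_h|$ and both allocations are non-redundant, submodularity of $v_h$ forces some $g \in X_h \setminus A_h$ to satisfy $v_h(A_h \cup \{g\}) = |A_h| + 1$, giving $g \in F_h(A_h)$; symmetrically, for every $\ell' \in \alow(\mathcal{X},\mathcal{A})$ some good in $A_{\ell'} \setminus X_{\ell'}$ lies in $F_{\ell'}(X_{\ell'})$, so each $F_{\ell'}(X_{\ell'})$ is non-empty as well.

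For Condition~1, I would take $S := \bigcup_{\ell' \in \alow(\mathcal{X},\mathcal{A})} F_{\ell'}(X_{\ell'})$ and look for a shortest directed path $Q = (g_1,\dots,g_t)$ in $\mathcal{G}(\mathcal{X})$ from $S$ to $X_h$. If no such path existed, the set $R$ of goods reachable from $S$ in $\mathcal{G}(\mathcal{X})$ would be disjoint from $X_h$; I would then argue that modifying $\mathcal{A}$ over $R$ in accordance with $\mathcal{X}$ produces an allocation strictly Pareto-dominating $\mathcal{A}$, contradicting its Pareto-optimality. Given $Q$, I would pick $\ell$ to be any agent in $\alow(\mathcal{X},\mathcal{A})$ with $g_1 \in F_\ell(X_\ell)$, and apply Lemma~\ref{lemma:gsp_schrijver} with $i = \ell$, $j = h$ to deliver Condition~1.

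For Condition~2, I would exhibit a shortest path in $\mathcal{G}(\mathcal{A})$ from $F_h(A_h)$ to $A_\ell$ for the same $\ell$, then apply Lemma~\ref{lemma:gsp_schrijver} with $i = h$, $j = \ell$. The main obstacle I anticipate is precisely this coupling: why should one $\ell$ validate both invocations simultaneously? My intended resolution is a matroid-union-style augmenting-path argument on a combined structure built from $\mathcal{G}(\mathcal{X})$ and $\mathcal{G}(\mathcal{A})$, in which Pareto-optimality of $\mathcal{A}$ plays the role of maximality and the forward exchange already identified for $\mathcal{X}$ supplies a reversible certificate producing the required path in $\mathcal{G}(\mathcal{A})$.
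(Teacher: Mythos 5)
First, a framing point: the paper does not prove this lemma at all---it is imported verbatim from \citet{BarmanVe22} (as a consequence of Lemma~6 in the extended version of their work), so there is no in-paper proof to match; your attempt has to stand on its own. Your preliminary observations (non-emptiness of $F_h(A_h)$ and of each $F_{\ell'}(X_{\ell'})$ via the binary-submodular augmentation property, and the reduction of Condition~1 to the existence of a path in $\mathcal{G}(\mathcal{X})$ from some $F_\ell(X_\ell)$ with $\ell\in\alow(\mathcal{X},\mathcal{A})$ to $X_h$, followed by Lemma~\ref{lemma:gsp_schrijver}) are fine. However, the two steps that carry the actual content of the lemma are not proved. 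For the path-existence step, you assert that if no path from $S=\bigcup_{\ell'\in\alow(\mathcal{X},\mathcal{A})}F_{\ell'}(X_{\ell'})$ to $X_h$ exists, then ``modifying $\mathcal{A}$ over the reachable set $R$ in accordance with $\mathcal{X}$'' strictly Pareto-dominates $\mathcal{A}$. As stated this does not go through: hybrid bundles such as $(A_i\setminus R)\cup(X_i\cap R)$ need not have value equal to their size (a union of two independent sets is generally dependent), it is unclear which agent would strictly gain, and no matroid-exchange or rank-counting argument is given to control the values of these mixed bundles---yet this is exactly where the Pareto-optimality of $\mathcal{A}$ must be brought to bear.

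The decisive gap is the coupling, which you yourself flag: the lemma asserts that one and the same $\ell\in\alow(\mathcal{X},\mathcal{A})$ works for both conditions, and your resolution is only the phrase ``a matroid-union-style augmenting-path argument on a combined structure built from $\mathcal{G}(\mathcal{X})$ and $\mathcal{G}(\mathcal{A})$,'' with no construction or invariant. Your plan for Condition~2---find a path in $\mathcal{G}(\mathcal{A})$ from $F_h(A_h)$ to $A_\ell$ for the same $\ell$---merely restates the problem, since nothing you have established implies such a path exists for that particular $\ell$. In the source, the two properties are obtained simultaneously from a single carefully chosen path in $\mathcal{G}(\mathcal{X})$ whose source is required to lie in $A_\ell\cap F_\ell(X_\ell)$ (note this extra requirement in the statement of Lemma~\ref{lemma:gsp_BarmanVerma_path}); that membership in $A_\ell$ is the device tying the two allocations together, and the modification of $\mathcal{A}$ is then derived from that same path rather than from a separate search in $\mathcal{G}(\mathcal{A})$. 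Without that device, or a worked-out substitute (for instance, the simultaneous exchange property of the polymatroid of achievable utility vectors of non-redundant allocations, with Pareto-optimality of $\mathcal{A}$ ruling out the trivial exchange), the proposal is an outline of intent rather than a proof.
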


The third lemma was stated as Lemma~2 in the work of \cite{BarmanVe22}.\footnote{This is also Lemma~6 in the extended version of their work.}
The second condition is the same as in Lemma~\ref{lemma:gsp_BarmanVerma} above, but the first condition guarantees the existence of a path in the exchange graph rather than a non-redundant allocation.
\begin{lemma}[\citep{BarmanVe22}] \label{lemma:gsp_BarmanVerma_path}
    Let $\mathcal{X} = (X_1,\dots,X_n)$ be a non-redundant allocation and $\mathcal{A} = (A_1,\dots,A_n)$ be a Pareto-optimal non-redundant allocation such that there exists an agent $h \in \ahigh(\mathcal{X},\mathcal{A})$.
    Then, there exists an agent $\ell \in \alow(\mathcal{X},\mathcal{A})$ along with a simple directed path $P = (g_k, g_{k-1}, \dots, g_2, g_1)$ in the exchange graph $\mathcal{G}(\mathcal{X})$ such that 
    \begin{enumerate}
    \item The source vertex is $g_k \in A_\ell \cap F_\ell(X_\ell)$ and the sink vertex is $g_1 \in X_h$;
    \item Starting from $\mathcal{A}$, we can obtain a non-redundant allocation in which agent~$\ell$'s utility decreases by~$1$, agent~$h$'s utility increases by~$1$, and every other agent receives the same utility as before.
    \end{enumerate}    
\end{lemma}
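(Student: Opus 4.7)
The plan is to combine a reverse-reachability construction in the exchange graph $\mathcal{G}(\mathcal{X})$ with a Pareto-optimality argument against $\mathcal{A}$. First I would run a reverse breadth-first search from $X_h$ in $\mathcal{G}(\mathcal{X})$, letting $R \subseteq G$ be the set of goods from which some vertex of $X_h$ is reachable by a directed path in $\mathcal{G}(\mathcal{X})$ (with $X_h \subseteq R$ by convention). This BFS simultaneously produces, for every $g \in R$, a shortest $g$-to-$X_h$ path; the path $P$ in the lemma will be one such, so condition~(1) reduces to showing that we can pick the source vertex inside $A_\ell \cap F_\ell(X_\ell)$ for a suitable~$\ell \in \alow(\mathcal{X},\mathcal{A})$.

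The crucial step is to show $A_\ell \cap F_\ell(X_\ell) \cap R \neq \emptyset$ for some $\ell \in \alow(\mathcal{X},\mathcal{A})$. Suppose for contradiction that no such $\ell$ exists. Then for every $\ell' \in \alow(\mathcal{X},\mathcal{A})$ and every $g \in A_{\ell'} \cap R$ we have $v_{\ell'}(X_{\ell'} \cup \{g\}) = v_{\ell'}(X_{\ell'})$, i.e.\ $g$ lies in the $v_{\ell'}$-closure of $X_{\ell'}$. Together with the closure property of $R$ under the edges of $\mathcal{G}(\mathcal{X})$ (if an edge $g \to g'$ is witnessed by some agent $i$ and $g' \in R$, then $g \in R$), I would apply matroid augmentation on $v_h$ to pick $g^* \in X_h \setminus A_h$ with $v_h(A_h \cup \{g^*\}) = |A_h|+1$, and trace $g^*$ back through $R$ to construct a redistribution of the goods of $R$ currently held in $\mathcal{A}$ that strictly raises $h$'s utility without lowering any other agent's, contradicting Pareto-optimality of $\mathcal{A}$.

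Once such $\ell$ and source $g_k \in A_\ell \cap F_\ell(X_\ell) \cap R$ are in hand, $P$ is the shortest $g_k$-to-$X_h$ path from the BFS---simple and satisfying condition~(1) by construction. For condition~(2), I would start from $\mathcal{A}$, remove $g_k$ from $A_\ell$ (dropping $\ell$'s utility by exactly one, by non-redundancy of $\mathcal{A}$), add $g^*$ to $A_h$ (raising $h$'s utility by one), and resolve any arising conflicts through neutral swaps inside $R$, invoking Lemma~\ref{lemma:gsp_schrijver} on a suitable Pareto-optimal sub-allocation of the affected agents. The main obstacle is the closure-and-redistribution step in the second paragraph: turning the hypothetical absence of a source in $A_\ell \cap F_\ell(X_\ell) \cap R$ into an explicit Pareto improvement of $\mathcal{A}$ demands simultaneously tracking $X_i$, $A_i$, and $R$ for every agent~$i$, and leveraging submodularity of each $v_i$ to propagate locally neutral swaps into a globally improving reallocation reaching $h$; everything else amounts to routine bookkeeping once this central closure claim is established.
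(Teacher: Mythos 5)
This lemma is not proved in the paper at all: it is imported verbatim from \citet{BarmanVe22} (their Lemma~2, Lemma~6 in the extended version), so the relevant question is whether your argument would constitute a complete self-contained proof. It would not. What you have written is a plan whose two substantive steps are exactly the ones you defer. The ``central closure claim''---that if no $\ell \in \alow(\mathcal{X},\mathcal{A})$ has a good of $A_\ell \cap F_\ell(X_\ell)$ inside the reverse-reachability set $R$ of $X_h$, then one can redistribute the goods of $R$ to Pareto-improve $\mathcal{A}$---is the entire content of condition~(1), and you only announce that you ``would'' construct such a redistribution. Worse, as stated the plan has a hole: your contradiction hypothesis constrains only agents in $\alow(\mathcal{X},\mathcal{A})$, but any redistribution of the $R$-goods must not lower the utility of \emph{any} agent, including agents $i$ with $|A_i| \le |X_i|$ (and $h$ itself) who may hold goods of $A_i \cap R$ with positive marginal value relative to $X_i$. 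The natural reallocation $B_i = (A_i \setminus R) \cup (X_i \cap R)$, which is what ``give the $R$-goods back according to $\mathcal{X}$'' amounts to, can strictly hurt such agents, so Pareto-optimality of $\mathcal{A}$ is not contradicted without a genuinely finer matroid/counting argument---which is precisely the part you leave as ``the main obstacle.''

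Condition~(2) is likewise not established. Your surgery on $\mathcal{A}$ (delete $g_k$ from $A_\ell$, insert an augmenting good $g^* \in X_h \setminus A_h$ into $A_h$, then ``resolve any arising conflicts through neutral swaps inside $R$'') ignores that $g^*$ may be owned in $\mathcal{A}$ by a third agent $j$, whose compensation without disturbing anyone else is exactly what the lemma asserts; ``invoking Lemma~\ref{lemma:gsp_schrijver} on a suitable Pareto-optimal sub-allocation'' is not meaningful as written, since that lemma applies to a shortest path in the exchange graph of one specified non-redundant allocation and changes the utilities of exactly the two designated agents---you specify neither the graph, nor the path, nor why the same pair $(\ell,h)$ from condition~(1) works, nor why non-redundancy is preserved. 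Note also that your path $P$ lives in $\mathcal{G}(\mathcal{X})$ while condition~(2) is a statement about modifying $\mathcal{A}$; bridging the two allocations is the crux of the Barman--Verma argument and is skipped here. So the proposal identifies a plausible high-level strategy (reverse reachability, closure of $R$ under predecessor edges, matroid augmentation for $v_h$) but does not prove either conclusion; if you do not want to reproduce the full argument, the correct move is simply to cite \citet{BarmanVe22}, as the paper does.
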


\section{Resource-Monotonicity}
\label{sec:res-mon}

In this section, we consider resource-monotonicity, an intuitive property which states that when an extra good is added, no agent should receive lower utility from the allocation output by the rule.
\begin{definition}[Resource-monotonicity]
    An allocation rule $\mathcal{F}$ is \emph{resource-monotone} if the following holds: For any two instances $\mathcal{I}$ and $\mathcal{I}'$ such that $\mathcal{I}'$ can be obtained from $\mathcal{I}$ by adding one extra good, if $\mathcal{F}(\mathcal{I}) = \mathcal{A}$ and $\mathcal{F}(\mathcal{I}') = \mathcal{A}'$, then for each $i \in N$, $v_i(A_i) \leq v_i(A'_i)$.
\end{definition}

\cite{ChakrabortyScSu21} showed that under general additive valuations, MNW fails to be resource-monotone in the unweighted setting regardless of tie-breaking. 
\cite{SuksompongTe22} subsequently showed that the MWNW rule with lexicographical tie-breaking, MWNW$^\text{tie}$, satisfies resource-monotonicity under binary additive valuations in the weighted setting.
We will generalize this positive result to \ruletie{} for any concave function $f$ and matroid-rank valuations.

We first make the following observation.

\begin{lemma} \label{lemma:resmon_agenthigher}
	Let $\mathcal{I}$ and $\mathcal{I}'$ be instances such that $\mathcal{I}'$ can be obtained from $\mathcal{I}$ by adding one extra good~$g$, and let $f$ be a strictly increasing function.
 Suppose that \ruletie{}$(\mathcal{I}) = \mathcal{A}$ and \ruletie{}$(\mathcal{I}') = \mathcal{A}'$.
	If $\alow(\mathcal{A}', \mathcal{A}) \neq \emptyset$, then $\ahigh(\mathcal{A}', \mathcal{A}) \neq \emptyset$ as well.
\end{lemma}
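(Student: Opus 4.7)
The plan is to prove the lemma by a short contradiction argument that leverages the Pareto-optimality of \therule{} allocations, which is noted immediately after Definition~\ref{def:Rf}. The key conceptual point is that any allocation for the instance $\mathcal{I}$ remains a feasible allocation for the enlarged instance $\mathcal{I}'$, because an allocation need not use every good.

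So I would suppose for contradiction that $\alow(\mathcal{A}', \mathcal{A}) \neq \emptyset$ while $\ahigh(\mathcal{A}', \mathcal{A}) = \emptyset$. The first step is to convert the size inequalities into utility inequalities: since $\mathcal{A}$ and $\mathcal{A}'$ are both non-redundant \ruletie{} allocations, we have $v_i(A_i) = |A_i|$ and $v_i(A'_i) = |A'_i|$ for every $i \in N$, by the characterization of non-redundant matroid-rank allocations recalled in Section~\ref{sec:prelim}. The hypothesis $\ahigh = \emptyset$ then yields $v_i(A'_i) \leq v_i(A_i)$ for all $i \in N$, and the hypothesis $\alow \neq \emptyset$ gives strict inequality for some $j \in \alow(\mathcal{A}', \mathcal{A})$.

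Next, I would view $\mathcal{A}$ as a feasible allocation for $\mathcal{I}'$; this is legitimate because $\mathcal{I}'$ is obtained from $\mathcal{I}$ by only adding one good, and the bundles of $\mathcal{A}$ remain pairwise disjoint subsets of the good set of $\mathcal{I}'$. The inequalities from the previous paragraph then say that $\mathcal{A}$ Pareto-dominates $\mathcal{A}'$ within the instance $\mathcal{I}'$. But $\mathcal{A}'$, being an \ruletie{} allocation for $\mathcal{I}'$, is in particular an \therule{} allocation, and hence Pareto-optimal by Definition~\ref{def:Rf}. This is the desired contradiction. I do not anticipate a substantial obstacle: the only subtlety is the non-redundancy step used to translate bundle-size gaps into utility gaps, which is guaranteed by Condition~2 of Definition~\ref{def:ruletie}.
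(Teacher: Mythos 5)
Your proposal is correct and follows essentially the same argument as the paper: assume $\alow \neq \emptyset$ and $\ahigh = \emptyset$, observe that $\mathcal{A}$ is still a valid allocation in $\mathcal{I}'$ and Pareto-dominates $\mathcal{A}'$, and contradict the Pareto-optimality of the \ruletie{} allocation $\mathcal{A}'$. Your explicit use of non-redundancy to turn bundle-size comparisons into utility comparisons is a detail the paper leaves implicit, but it is the same proof.
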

\begin{proof}
If $\alow(\mathcal{A}', \mathcal{A}) \neq \emptyset$ but $\ahigh(\mathcal{A}', \mathcal{A}) = \emptyset$, then $\mathcal{A}$ Pareto-dominates $\mathcal{A}'$.
On the other hand, since \ruletie{}$(\mathcal{I}') = \mathcal{A}'$, the allocation $\mathcal{A}'$ is Pareto-optimal in $\mathcal{I}'$, contradicting the fact that $\mathcal{A}$ is also a valid allocation in $\mathcal{I}'$ and Pareto-dominates $\mathcal{A}'$.
\end{proof}

We now establish the resource-monotonicity of the rule \ruletie{}.

\begin{theorem} \label{thm:resource_monotonicity}
    Under matroid-rank valuations, the rule \ruletie{} with any concave function $f$ is resource-monotone.
\end{theorem}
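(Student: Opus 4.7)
The plan is to argue by contradiction, combining the exchange-graph augmentation machinery with the concavity of~$f$ and the lexicographic tie-breaking built into $\ruletie{}$. Suppose $\alow(\mathcal{A}', \mathcal{A}) \neq \emptyset$; by Lemma~\ref{lemma:resmon_agenthigher}, $\ahigh(\mathcal{A}', \mathcal{A}) \neq \emptyset$ as well. Fix any $\ell_0 \in \alow(\mathcal{A}', \mathcal{A})$ and apply Lemma~\ref{lemma:gsp_BarmanVerma_path} inside the enlarged instance~$\mathcal{I}'$, with $\mathcal{X} = \mathcal{A}$ (extended to $\mathcal{I}'$ by leaving the new good $g$ unassigned, still non-redundant), with the Pareto-optimal allocation being~$\mathcal{A}'$, and with the distinguished high-utility agent being $\ell_0 \in \ahigh(\mathcal{A}, \mathcal{A}')$. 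This yields an agent $q_0 \in \ahigh(\mathcal{A}', \mathcal{A})$, a simple directed path~$P$ in $\mathcal{G}(\mathcal{A})$ whose source lies in $A'_{q_0} \cap F_{q_0}(A_{q_0})$ and whose sink lies in $A_{\ell_0}$, and a non-redundant allocation $\tilde{\mathcal{A}}'$ obtained from $\mathcal{A}'$ in which $v_{q_0}$ decreases by~$1$, $v_{\ell_0}$ increases by~$1$, and all other utilities are unchanged. Augmenting $\mathcal{A}$ along a shortest such path via Lemma~\ref{lemma:gsp_schrijver} yields a non-redundant allocation $\tilde{\mathcal{A}}$ with $v_{q_0}$ raised by~$1$ and $v_{\ell_0}$ lowered by~$1$.

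The key structural observation is that any such path in $\mathcal{G}(\mathcal{A})$ must avoid the new good~$g$ entirely, so $\tilde{\mathcal{A}}$ is in fact a feasible allocation in the original instance~$\mathcal{I}$. Indeed, since $g$ lies in no bundle of~$\mathcal{A}$, the exchange graph $\mathcal{G}(\mathcal{A})$ has no edges leaving~$g$; this excludes~$g$ from being the source of a positive-length path (which needs an outgoing edge at the source), from being an interior vertex (which needs both an incoming and an outgoing edge), or from being the sink (which lies in $A_{\ell_0} \subseteq G$, covering the length-zero case as well). Hence the augmentation takes place entirely within~$G$.

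Next, a short concavity computation shows that
\[
[\text{welfare}(\tilde{\mathcal{A}}) - \text{welfare}(\mathcal{A})] + [\text{welfare}(\tilde{\mathcal{A}}') - \text{welfare}(\mathcal{A}')] \;\geq\; 0,
\]
since the $q_0$-contribution equals $w_{q_0}\bigl[\bigl(f(v_{q_0}(A_{q_0})+1) - f(v_{q_0}(A_{q_0}))\bigr) - \bigl(f(v_{q_0}(A'_{q_0})) - f(v_{q_0}(A'_{q_0})-1)\bigr)\bigr] \geq 0$ by concavity of~$f$ combined with $v_{q_0}(A'_{q_0}) \geq v_{q_0}(A_{q_0})+1$, and similarly for the $\ell_0$-contribution using $v_{\ell_0}(A_{\ell_0}) \geq v_{\ell_0}(A'_{\ell_0})+1$. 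Because $\mathcal{A}'$ maximizes weighted welfare in~$\mathcal{I}'$ and $\mathcal{A}$ maximizes it among the feasible allocations in~$\mathcal{I}$ (among which $\tilde{\mathcal{A}}$ now sits), both bracketed differences are at most~$0$; combined with the inequality above, both must vanish, making $\tilde{\mathcal{A}}$ and $\tilde{\mathcal{A}}'$ themselves $\therule{}$ allocations in $\mathcal{I}$ and $\mathcal{I}'$, respectively.

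The contradiction finally follows from lex tie-breaking. The utility vectors of $\tilde{\mathcal{A}}$ and $\mathcal{A}$ (respectively of $\tilde{\mathcal{A}}'$ and $\mathcal{A}'$) differ only at coordinates $q_0$ and $\ell_0$, which are distinct since $\ahigh \cap \alow = \emptyset$. If $q_0 < \ell_0$, then at the first differing coordinate~$q_0$, $\tilde{\mathcal{A}}$ has the strictly larger value $v_{q_0}(A_{q_0})+1$, so $\tilde{\mathcal{A}}$ lexicographically dominates $\mathcal{A}$, contradicting $\mathcal{A} = \ruletie{}(\mathcal{I})$; if instead $q_0 > \ell_0$, the symmetric argument in~$\mathcal{I}'$ has $\tilde{\mathcal{A}}'$ lexicographically dominating $\mathcal{A}'$, contradicting $\mathcal{A}' = \ruletie{}(\mathcal{I}')$. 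The most delicate step is the structural claim that the path avoids~$g$; this reduces cleanly to the no-outgoing-edges observation, which is the main fulcrum of the argument.
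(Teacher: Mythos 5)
Your handling of the case $f(0) \neq -\infty$ is essentially correct, and your route is in fact a mild simplification of the paper's: where the paper adds a dummy agent~$0$ owning the new good (so that Lemma~\ref{lemma:gsp_BarmanVerma} can be applied in an augmented instance and the resulting augmentations can be pulled back to $\mathcal{I}$ and $\mathcal{I}'$), you apply Lemma~\ref{lemma:gsp_BarmanVerma_path} directly in $\mathcal{I}'$ with $\mathcal{X}=\mathcal{A}$ and observe that the unallocated good $g$ has no outgoing edges in $\mathcal{G}(\mathcal{A})$, so no path ending in $A_{\ell_0}$ can touch it and the augmented allocation $\tilde{\mathcal{A}}$ is feasible in $\mathcal{I}$. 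The concavity computation and the final lexicographic dichotomy ($q_0<\ell_0$ versus $\ell_0<q_0$) mirror the paper's argument, which packages the same dichotomy as strictness conditions on its two optimality inequalities.

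The genuine gap is the case $f(0)=-\infty$, which includes the motivating rule MWNW ($f=\log$) and which the paper spends a substantial part of its proof on (its Claims 1--4). Your key step --- ``both bracketed differences are at most $0$, their sum is at least $0$, hence both vanish'' --- presupposes that all of $f(|A_{\ell_0}|-1)$, $f(|A'_{\ell_0}|)$, $f(|A_{q_0}|)$, $f(|A'_{q_0}|-1)$ are finite. If, say, $|A_{\ell_0}|=1$, then $\tilde{\mathcal{A}}$ drives agent $\ell_0$ to utility $0$, the term $f(0)=-\infty$ enters, the $\ell_0$-contribution in your sum is $-\infty$, and the inequality chain collapses into undefined $\infty-\infty$ arithmetic. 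Worse, when the maximum weighted welfare in an instance is $-\infty$, Definition~\ref{def:Rf} says \therule{} does not maximize $\sum_i w_i\cdot f(v_i(\cdot))$ at all: it first maximizes the number of agents with positive utility, breaking ties lexicographically by agent index, so the statement ``$\mathcal{A}$ (resp.\ $\mathcal{A}'$) maximizes weighted welfare, hence each bracketed difference is $\le 0$'' is not even the correct optimality property in that regime. To close the gap you need the analogue of the paper's four claims: exploiting the positive-utility-count maximization and the index-based tie-breaking to show $|A_{\ell_0}|\ge 2$, $|A'_{\ell_0}|\ge 1$, $|A_{q_0}|\ge 1$, and $|A'_{q_0}|\ge 2$ (the boundary subcases are ruled out by deriving both $\ell_0<q_0$ and $q_0<\ell_0$), after which no $f(0)$ term can appear and your finite-case computation applies verbatim. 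Without this, the proof does not cover MWNW, so as written it is incomplete.
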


\begin{proof}
    Assume that $f$ is concave.
    Let $\mathcal{A}$ be an \ruletie{} allocation in an instance $\mathcal{I}$, and let $N$ and $G$ denote the set of agents and the set of goods in $\mathcal{I}$, respectively.
    Suppose that an instance $\mathcal{I}'$ can be obtained from $\mathcal{I}$ by adding an extra good $g^*$. 

    First, assume that there exists an \ruletie{} allocation $\mathcal{A}'$ in $\mathcal{I}'$ such that $g^* \notin A'_i$ for all $i \in N$.
    Since $\mathcal{A}$ and $\mathcal{A}'$ are both \ruletie{} allocations, we have $v_i(A_i) = v_i(A'_i)$  for all $i\in N$.
    Since each agent receives the same utility from all \ruletie{} allocations in a given instance, resource-monotonicity holds between $\mathcal{I}$ and $\mathcal{I}'$.

    Assume from now on that for every \ruletie{} allocation $\mathcal{A}'$ in $\mathcal{I}'$ we have $g^* \in A'_i$ for some $i \in N$, and fix an \ruletie{} allocation $\mathcal{A}'$ in $\mathcal{I}'$.   
    Suppose for a contradiction that $|A_i| > |A'_i|$ for some agent $i \in N$. 
    Then, by Lemma \ref{lemma:resmon_agenthigher}, there must exist some other agent $j \in N \setminus \{i\}$ such that $|A_j| < |A'_j|$.
    Let $S \subseteq N$ be the set of all agents that experienced an increase in bundle value (which is the same as bundle size due to non-redundancy), i.e., for every agent $j \in S$,     \begin{equation*}
        |A_j| < |A'_j|.
    \end{equation*}    
    
    Suppose that an instance $\mathcal{I}^*$ can be obtained from $\mathcal{I}'$ by adding\footnote{For notational convenience, we use $0$ as an agent number.} an extra agent $0$ with an arbitrary positive weight, and let $N^* = N\cup\{0\}$.
    In $\mathcal{I}^*$, let $\mathcal{A}^*$ be the allocation that is the same as $\mathcal{A}$ except that there is an additional bundle $A^*_0 = \{g^*\}$, and let $\mathcal{A}'^*$  be the allocation that is the same as $\mathcal{A}'$ except that there is an additional bundle $A'^*_0 = \emptyset$.
    Also, let the valuation $v_0$ of agent~$0$ be a binary additive function such that
    \begin{equation*}
        v_0(g^*)=1 \text{ and } v_0(g) = 0 \text{ for all } g \in G.
    \end{equation*}
    Since an allocation is non-redundant if and only if each agent has the same value for her bundle as the bundle size, and both $\mathcal{A}$ and $\mathcal{A}'$ are non-redundant, $\mathcal{A}^*$ and $\mathcal{A}'^*$ are also non-redundant.
    Moreover, since $\mathcal{A}'$ is an \ruletie{} allocation in $\mathcal{I}'$, and every \ruletie{} allocation in $\mathcal{I}'$ allocates the good~$g^*$ to one of the agents in~$N$, the allocation~$\mathcal{A}'^*$ is Pareto-optimal in $\mathcal{I}^*$.
    Observe also that for each $k \in N$,
    \begin{equation}  \label{lem:resmon_same}
        |A_k| = |A_k^*| \text{ and } |A'_k| = |A'^*_k|.
    \end{equation}
    Further, we have $0 \in \alow(\mathcal{A}'^*,\mathcal{A}^*)$ and  $0 \notin \ahigh(\mathcal{A}'^*,\mathcal{A}^*)$.   

    Let $\ell\in N$ be an agent such that $|A_\ell| > |A'_\ell|$.
    By (\ref{lem:resmon_same}), we have that $|A^*_\ell| > |A'^*_\ell|$.
    This means that $\ell \in \ahigh(\mathcal{A}^*,\mathcal{A}'^*)$, where $\ell \neq 0$.
    Since $\mathcal{A}^*$ is non-redundant and $\mathcal{A}'^*$ is Pareto-optimal and non-redundant in $\mathcal{I}^*$, by the second property of Lemma \ref{lemma:gsp_BarmanVerma},\footnote{Note here that the roles of agents~$\ell$ and $h$ are reversed from those in Lemma~\ref{lemma:gsp_BarmanVerma}.} there exists an agent $h \in \alow(\mathcal{A}^*,\mathcal{A}'^*)$ such that from $\mathcal{A}'^*$, we can obtain a non-redundant allocation $\widehat{\mathcal{A}}'^*$ such that
    \begin{equation*}
        |\widehat{A}'^*_\ell|= |A'^*_\ell|+1, |\widehat{A}'^*_h|= |A'^*_h| - 1,  \text{ and } |\widehat{A}'^*_k| = |A'^*_k| \text{ for all } k \in N^* \setminus \{\ell,h\}.
    \end{equation*}
    Note that $h \in \ahigh(\mathcal{A}'^*,\mathcal{A}^*)$ and $\ell \in \alow(\mathcal{A}'^*,\mathcal{A}^*)$.
    Since $0 \notin \ahigh(\mathcal{A}'^*,\mathcal{A}^*)$, we have $h\neq 0$.
    Since $\ell \neq 0$ and $h \neq 0$, we have that $|\widehat{A}'^*_0| = |A'^*_0|$.
    This means that from $\mathcal{A}'$, we can similarly obtain a non-redundant allocation $\widehat{\mathcal{A}}'$ in $\mathcal{I}'$
     such that 
    \begin{equation*}
        |\widehat{A}'_\ell| = |A'_\ell|+1, |\widehat{A}'_h| = |A'_h| - 1, \text{ and } |\widehat{A}'_k| = |A'_k| \text{ for all } k \in N \setminus \{\ell,h\}.
    \end{equation*}
    
    Complementarily, by the first property of Lemma \ref{lemma:gsp_BarmanVerma}, from $\mathcal{A}^*$ we can obtain an allocation $\widehat{\mathcal{A}}^*$ such that
    \begin{equation*}
        |\widehat{A}^*_\ell| = |A^*_\ell|-1, |\widehat{A}^*_h| = |A^*_h|+1, \text{ and } |\widehat{A}^*_k| = |A^*_k| \text{ for all } k \in N^* \setminus \{\ell,h\}.
    \end{equation*}
    Since $\ell \ne 0$ and $h \neq 0$, we have that $|\widehat{A}^*_0| = |A^*_0|$. 
    This means that from $\mathcal{A}$, we can similarly obtain a non-redundant allocation $\widehat{\mathcal{A}}$ in $\mathcal{I}$ such that 
        \begin{equation*}
        |\widehat{A}_\ell| = |A_\ell|-1, |\widehat{A}_h| = |A_h|+1, \text{ and } |\widehat{A}_k| = |A_k| \text{ for all } k \in N \setminus \{\ell,h\}.
    \end{equation*}
    Moreover, since $\ell \in \alow(\mathcal{A}'^*, \mathcal{A}^*)$ and $\ell \neq 0$, 
    \begin{equation} \label{eqn:resmon_tiebreak_L}
        |A'_\ell| < |A_\ell|.
    \end{equation}
    Similarly, since $h \in \ahigh(\mathcal{A}'^*,\mathcal{A}^*)$ and $h\ne 0$,
    \begin{equation} \label{eqn:resmon_tiebreak_H}
        |A'_h| > |A_h|.
    \end{equation}
    
    Assume first that the function $f$ satisfies $f(0) = -\infty$. 
    We prove four claims.
    \begin{description}
        \item[Claim 1: $|A_h| \ge 1$.]
        Assume for a contradiction that $|A_h| = 0$.
        We know from (\ref{eqn:resmon_tiebreak_L}) that $|A_\ell| \geq 1$. If $|A_\ell| > 1$, then the allocation $\widehat{\mathcal{A}}$ with  utility vector\footnote{Even though we write agent~$\ell$'s utility before agent~$h$'s in the utility vector, the actual order would be reversed if $h < \ell$.} $(|A_\ell| - 1, |A_h| + 1)$ for agents $\ell$ and $h$ has strictly more agents receiving positive utility than $\mathcal{A}$, thereby contradicting the fact that $\mathcal{A}$ is an \ruletie{} allocation in $\mathcal{I}$.
        Thus, $|A_\ell| = 1$, which by (\ref{eqn:resmon_tiebreak_L}) means $|A'_\ell| = 0$.
        Since the utility vector $(|A_\ell|, |A_h|)$ is preferred to $(|A_\ell|-1, |A_h|+1)$ because $\mathcal{A}$ is an \ruletie{} allocation in $\mathcal{I}$, it must be that $\ell < h$.
        
        On the other hand, we know from (\ref{eqn:resmon_tiebreak_H}) that $|A'_h| \geq 1$. 
        If $|A'_h| > 1$, then the allocation $\mathcal{\widehat{A}}'$ with  utility vector $(|A'_\ell| + 1, |A'_h| - 1)$ for agents $\ell$ and $h$ has strictly more agents receiving positive utility than $\mathcal{A}'$, thereby contradicting the fact that $\mathcal{A}'$ is an \ruletie{} allocation in $\mathcal{I}'$. 
        Thus, $|A'_h| = 1$.
        Since the utility vector $(|A'_\ell|, |A'_h|)$ is preferred to $(|A'_\ell| + 1, |A'_h| - 1)$ because $\mathcal{A}'$ is an \ruletie{} allocation in $\mathcal{I}'$, we must have that $h < \ell$, a contradiction with the previous paragraph.
        Hence, it must be that $|A_h| \geq 1$.

        \item[Claim 2: $|A'_h| \ge 2$.]
        This follows immediately from (\ref{eqn:resmon_tiebreak_H}) and Claim~1.

        \item[Claim 3: $|A'_\ell| \ge 1$.] 
        Assume for a contradiction that $|A'_\ell| = 0$.
        By Claim~2, the allocation $\mathcal{\widehat{A}}'$ with  utility vector $(|A'_\ell| + 1, |A'_h| - 1)$ for agents $\ell$ and $h$ has strictly more agents receiving positive utility than $\mathcal{A}'$, thereby contradicting the fact that $\mathcal{A}'$ is an \ruletie{} allocation in $\mathcal{I}'$. 
        Hence, it must be that $|A'_\ell| \ge 1$.

        \item[Claim 4: $|A_\ell| \ge 2$.]
        This follows immediately from (\ref{eqn:resmon_tiebreak_L}) and Claim~3.
    \end{description}
    Thus, we can assume henceforth that either $f(0) \neq -\infty$, or if $f(0) = -\infty$, then
    \begin{equation*}
        |A_\ell| \geq 2, \quad |A_h| \geq 1, \quad |A'_\ell| \geq 1, \quad |A'_h| \geq 2.
    \end{equation*}
    
    Since $\mathcal{A}'$ is an \ruletie{} allocation in $\mathcal{I}'$, and the rule \ruletie{} could have chosen the allocation $\widehat{\mathcal{A}}'$, it must be that
    \begin{equation} \label{resmon_newgoodaugment}
    \begin{split}
        w_\ell \cdot f(|A'_\ell|) + w_h \cdot f(|A'_h|) & \geq w_\ell \cdot f(|\widehat{A}'_\ell|) + w_h \cdot f(|\widehat{A}'_h|) \\
        & = w_\ell \cdot f(|A'_\ell|+1) + w_h \cdot f(|A'_h|-1),
    \end{split}
    \end{equation}
    where equality holds only if $h < \ell$.
    Note that by our claims, if $f(0) = -\infty$, then none of the terms $|A'_\ell|, |A'_h|, |A'_\ell| + 1, |A'_h| - 1$ can be $0$.
    
    Similarly, since $\mathcal{A}$ is an \ruletie{} allocation in $\mathcal{I}$, and the rule \ruletie{} could have chosen the allocation $\widehat{\mathcal{A}}$, 
    \begin{equation} \label{eqn:resmon_originalaugment}
        \begin{split}
            w_\ell \cdot f(|A_\ell|) + w_h \cdot f(|A_h|) & \geq w_\ell \cdot f(|\widehat{A}_\ell|) + w_h \cdot  f(|\widehat{A}_h|) \\
            & = w_\ell \cdot f(|A_\ell|-1) + w_h \cdot  f(|A_h|+1),
        \end{split}
    \end{equation}
    where equality holds only if $\ell < h$.
    Note that by our claims, if $f(0) = -\infty$, then none of the terms $|A_\ell|, |A_h|, |A_\ell| - 1, |A_h| + 1$ can be $0$.
    
    Now, by (\ref{eqn:resmon_tiebreak_L}) and the concavity of $f$, we have
    \begin{equation*} 
        f(|A'_\ell|+1) - f(|A'_\ell|) \ge f(|A_\ell|) - f(|A_\ell|-1),
    \end{equation*}
    which means that
    \begin{equation} \label{resmon_l_concavity}
        f(|A_\ell|-1) - f(|A_\ell|) \ge f(|A'_\ell|) - f(|A'_\ell|+1).
    \end{equation}
    Similarly, by (\ref{eqn:resmon_tiebreak_H}) and the concavity of $f$, we have
    \begin{equation} \label{resmon_h_concavity}
        f(|A_h|+1) - f(|A_h|) \ge f(|A'_h|) - f(|A'_h|-1).
    \end{equation}
    It follows that
    \begin{align*}
        0 & \geq w_\ell \cdot f(|A_\ell|-1) - w_\ell \cdot f(|A_\ell|) + w_h \cdot  f(|A_h|+1) - w_h \cdot f(|A_h|) \\
        & = w_\ell \cdot [f(|A_\ell|-1) -  f(|A_\ell|)] + w_h \cdot  [f(|A_h|+1) -  f(|A_h|)] \\
        & \ge w_\ell \cdot [f(|A'_\ell|) - f(|A'_\ell|+1)] + w_h\cdot [f(|A'_h|) - f(|A'_h| - 1)] \\
        & = w_\ell \cdot f(|A'_\ell|) - w_\ell \cdot f(|A'_\ell|+1) + w_h \cdot f(|A'_h|) - w_h \cdot f(|A'_h| - 1) \\
        &  \geq 0,
    \end{align*}
    where the first inequality holds by (\ref{eqn:resmon_originalaugment}), the second inequality by (\ref{resmon_l_concavity}) and (\ref{resmon_h_concavity}), and the last inequality by~(\ref{resmon_newgoodaugment}).
    Note also that the first inequality can be an equality only if $\ell < h$ while the last inequality can be an equality only if $h < \ell$.
    Hence, at least one of these two inequalities must be strict, which gives us the desired contradiction.    
\end{proof}

\section{Algorithm}
\label{sec:algo}

The resource-monotonicity of \ruletie{} from Section~\ref{sec:res-mon} provides us with an important tool for establishing the existence of a polynomial-time algorithm to compute \ruletie{} allocations.
Unlike in the binary additive case \citep{SuksompongTe22}, a more intricate argument is required when agents have matroid-rank valuations.
We begin with a few lemmas.
Note that we use the term ``nondegenerate path'' to mean a path containing at least one edge.
Recall also the definitions of $F_i$ and $\mathcal{G}$ from Section~\ref{sec:exchange}.

\begin{lemma} \label{lem:correctness_nopath}
    Let $\mathcal{I}$ and $\mathcal{I}'$ be instances with a set of agents~$N$ such that $\mathcal{I}'$ can be obtained from $\mathcal{I}$ by adding one extra good~$g$, and let \ruletie{}$(\mathcal{I}) = \mathcal{A}$.
    Suppose that an instance $\mathcal{I}^*$ can be obtained from $\mathcal{I}'$ by adding an extra agent $0$ with an arbitrary positive weight and a binary additive valuation function such that $v_0(g) = 1$ and $v_0(\overline{g}) = 0$ for all $\overline{g} \neq g$.
    Let the allocation $\widehat{\mathcal{A}}$ in $\mathcal{I}^*$ be the same as $\mathcal{A}$ except that there is an additional bundle $\widehat{A}_0 = \{g\}$.
    
    If $g \notin F_i(\widehat{A}_i)$ for all $i \in N$ and there does not exist a nondegenerate path in the exchange graph $\mathcal{G}(\widehat{\mathcal{A}})$ that starts at a good in $F_j(\widehat{A}_j)$ for some $j\in N$ and ends at~$g$, then $\mathcal{A}$ is an \ruletie{} allocation in $\mathcal{I}'$.
\end{lemma}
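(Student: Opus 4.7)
The plan is a proof by contradiction. I would suppose there exists an \ruletie{} allocation $\mathcal{B}$ in $\mathcal{I}'$ whose utility vector is strictly preferred over $\mathcal{A}$'s under the rule's ordering---so $\mathcal{B}$ is in particular Pareto-optimal and non-redundant in $\mathcal{I}'$---and split on whether the new good $g$ is allocated in~$\mathcal{B}$. First, if $g \notin \bigcup_{j \in N} B_j$, then $\mathcal{B}$ is also a valid allocation in $\mathcal{I}$ with the same utility vector as in~$\mathcal{I}'$. Since the rule's preference depends only on utility vectors, weights, and agent indices, $\mathcal{B}$'s utility vector would also be strictly preferred over $\mathcal{A}$'s in~$\mathcal{I}$, contradicting $\mathcal{A}$ being an \ruletie{} allocation in~$\mathcal{I}$.

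The bulk of the argument lies in the complementary case, where $g \in B_i$ for some $i \in N$, necessarily non-redundantly. Here I would form the allocation $\mathcal{B}^*$ in $\mathcal{I}^*$ by setting $B^*_0 := \emptyset$ and $B^*_j := B_j$ for all $j \in N$, and first verify that $\mathcal{B}^*$ is Pareto-optimal and non-redundant in~$\mathcal{I}^*$. Non-redundancy is immediate. For Pareto-optimality, I would suppose some $\mathcal{C}$ in $\mathcal{I}^*$ Pareto-dominates $\mathcal{B}^*$ and split on whether $g \in C_0$. If $g \in C_0$, then $\mathcal{C}|_N$ is an allocation in $\mathcal{I}$ whose utility vector weakly dominates $\mathcal{B}|_N$'s coordinate-wise, and chasing this through the rule's ordering (together with $\mathcal{B}$'s utility vector being strictly preferred over $\mathcal{A}$'s) yields a contradiction with $\mathcal{A}$'s \ruletie{} status in~$\mathcal{I}$. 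If $g \notin C_0$, then $\mathcal{C}|_N$ is an allocation in $\mathcal{I}'$ that Pareto-dominates $\mathcal{B}$, contradicting $\mathcal{B}$'s Pareto-optimality in~$\mathcal{I}'$.

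Having established $\mathcal{B}^*$ as Pareto-optimal non-redundant in $\mathcal{I}^*$ and noting that $0 \in \ahigh(\widehat{\mathcal{A}}, \mathcal{B}^*)$ since $|\widehat{A}_0| = 1 > 0 = |B^*_0|$, I would then invoke Lemma~\ref{lemma:gsp_BarmanVerma_path} on the pair $(\widehat{\mathcal{A}}, \mathcal{B}^*)$ with the role of $h$ played by agent~$0$. This produces an agent $\ell \in \alow(\widehat{\mathcal{A}}, \mathcal{B}^*)$---which must lie in $N$ since agent~$0$ belongs to $\ahigh$ and therefore not to $\alow$---together with a simple directed path $P = (g_k, \dots, g_1)$ in $\mathcal{G}(\widehat{\mathcal{A}})$ whose source satisfies $g_k \in B^*_\ell \cap F_\ell(\widehat{A}_\ell)$ and whose sink is $g_1 = g$. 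If $k = 1$, then $g = g_k \in F_\ell(\widehat{A}_\ell)$ with $\ell \in N$, violating the first hypothesis; if $k \geq 2$, then $P$ is a nondegenerate path from $F_\ell(\widehat{A}_\ell)$ to~$g$ with $\ell \in N$, violating the second hypothesis. The step I expect to be the main obstacle is the Pareto-optimality verification for $\mathcal{B}^*$ in the subcase $g \in C_0$, since translating weak utility-vector dominance through the rule's full ordering---welfare comparison, lex tie-breaking, and the positive-agents criterion when $f(0) = -\infty$---requires careful case tracking.
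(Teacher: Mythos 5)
Your proposal is correct and follows essentially the same route as the paper's proof: both augment $\mathcal{A}$ and the $\mathcal{I}'$-optimal allocation with the dummy agent~$0$ in $\mathcal{I}^*$, apply Lemma~\ref{lemma:gsp_BarmanVerma_path} with agent~$0$ in the role of the ``high'' agent, and derive a path from some $F_\ell(\widehat{A}_\ell)$ with $\ell\in N$ to $g$ contradicting the hypotheses. The only difference is cosmetic: you verify Pareto-optimality of the augmented allocation via your strict-preference contradiction hypothesis (casing on whether $g\in C_0$), whereas the paper gets it by first disposing of the case where some \ruletie{} allocation in $\mathcal{I}'$ leaves $g$ unallocated; both justifications are sound.
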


\begin{proof}
    Let \ruletie{}$(\mathcal{I}')= \mathcal{A}'$.
    If there exists an \ruletie{} allocation $\mathcal{A}''$ in $\mathcal{I}'$ such that $g \notin A''_i$ for all $i \in N$, then $\mathcal{A}$ is an \ruletie{} allocation in $\mathcal{I}'$ as well.
    Thus, assume that for every \ruletie{} allocation $\mathcal{A}''$ in $\mathcal{I}'$ we have $g \in A''_i$ for some $i \in N$.
    In particular, $g \in A'_i$ for some $i \in N$.
    Suppose that $g \notin F_i(\widehat{A}_i)$ for all $i \in N$ and there does not exist a nondegenerate path in the exchange graph $\mathcal{G}(\widehat{\mathcal{A}})$ that starts at a good in $F_j(\widehat{A}_j)$ for some $j\in N$ and ends at~$g$.
    We will show that this situation is, in fact, impossible.

    Let $\widehat{\mathcal{A}}'$  be the allocation that is the same as $\mathcal{A}'$ except that there is an additional bundle $\widehat{A}'_0 = \emptyset$.
    Since an allocation is non-redundant if and only if each agent has the same value for her bundle as the bundle size, and both $\mathcal{A}$ and $\mathcal{A}'$ are non-redundant, $\widehat{\mathcal{A}}$ and $\widehat{\mathcal{A}}'$ are also non-redundant.
    Moreover, since $\mathcal{A}'$ is an \ruletie{} allocation in $\mathcal{I}'$, and every \ruletie{} allocation in $\mathcal{I}'$ allocates the good~$g$ to one of the agents in~$N$, the allocation~$\widehat{\mathcal{A}}'$ is Pareto-optimal in $\mathcal{I}^*$.

    Now, since $0 \in \ahigh(\widehat{\mathcal{A}}, \widehat{\mathcal{A}}')$, by Lemma~\ref{lemma:gsp_BarmanVerma_path}, there exists an agent $j \in \alow(\widehat{\mathcal{A}},\widehat{\mathcal{A}}')$ along with a path in $\mathcal{G}(\widehat{\mathcal{A}})$ starting from a good in $F_j(\widehat{A}_j)$ and ending at $g$, the only good in $\widehat{A}_0$.
    Note that $j \ne 0$ because $j \in \alow(\widehat{\mathcal{A}},\widehat{\mathcal{A}}')$ and $0 \in \ahigh(\widehat{\mathcal{A}}, \widehat{\mathcal{A}}')$.
    Since $g \notin F_k(\widehat{A}_k)$ for all $k\in N$, this path is nondegenerate.
    This contradicts the assumption that there does not exist a nondegenerate path in $\mathcal{G}(\widehat{\mathcal{A}})$ that starts at a good in $F_j(\widehat{A}_j)$ and ends at~$g$.
\end{proof}

\begin{lemma} \label{lemma:resmon_bound_oneagent}
Consider an instance $\mathcal{I}$ with a set of agents $N\cup\{0\}$, where agent~$0$ has an arbitrary positive weight and a binary additive valuation function such that $v_0(g) = 1$ for some good $g$ and $v_0(\overline{g}) = 0$ for all $\overline{g} \neq g$.
	Let $\mathcal{X} = (X_0,X_1,\dots,X_n)$ be a non-redundant allocation and $\mathcal{A} = (A_0,A_1,\dots,A_n)$ be a Pareto-optimal non-redundant allocation, and suppose that $|X_0|=0$.
 If there exists an agent $h \in N$ with $|X_h| > |A_h| + 1$, or if there exist two distinct agents $h, h' \in N$ belonging to $\ahigh(\mathcal{X},\mathcal{A})$, then there exists an agent in $\alow(\mathcal{X},\mathcal{A})$ different from agent~$0$.
\end{lemma}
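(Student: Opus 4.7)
The plan is to argue by contradiction: suppose $\alow(\mathcal{X},\mathcal{A})$ contains no agent from $N$. Since $\mathcal{A}$ is Pareto-optimal and, by hypothesis, $\ahigh(\mathcal{X},\mathcal{A})\cap N$ is non-empty, the allocation $\mathcal{X}$ cannot weakly Pareto-dominate $\mathcal{A}$, so $\alow(\mathcal{X},\mathcal{A})$ must itself be non-empty; therefore $\alow(\mathcal{X},\mathcal{A})=\{0\}$. In particular, $|A_0|>|X_0|=0$, and because agent~$0$'s valuation is binary additive with $g$ as the sole valued good, non-redundancy of $\mathcal{A}$ forces $A_0=\{g\}$ and $|A_0|=1$.

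Next I would apply Lemma~\ref{lemma:gsp_BarmanVerma} to the pair $(\mathcal{X},\mathcal{A})$, choosing the ``high'' agent $h$ from $\ahigh(\mathcal{X},\mathcal{A})\cap N$. Because $\alow(\mathcal{X},\mathcal{A})=\{0\}$, the lemma necessarily returns $\ell=0$ together with a non-redundant allocation $\mathcal{X}'$ satisfying $|X_0'|=1$, $|X_h'|=|X_h|-1$, and $|X_i'|=|X_i|$ for every other $i$. Inspecting $\alow(\mathcal{X}',\mathcal{A})$ agent by agent: for $i\in N\setminus\{h\}$, $|X_i'|=|X_i|\ge |A_i|$; for $i=h$, $|X_h'|=|X_h|-1\ge |A_h|$ because $|X_h|\ge |A_h|+1$; and for $i=0$, $|X_0'|=1=|A_0|$. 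Hence $\alow(\mathcal{X}',\mathcal{A})=\emptyset$.

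The final step is to verify that $\ahigh(\mathcal{X}',\mathcal{A})\cap N$ is still non-empty in both scenarios covered by the lemma statement. In the first scenario, I would take $h$ to be the agent with $|X_h|>|A_h|+1$, so that $|X_h'|=|X_h|-1\ge|A_h|+1>|A_h|$ and $h$ itself remains in $\ahigh(\mathcal{X}',\mathcal{A})$. In the second scenario, I would take $h$ to be either one of the two distinct agents $h,h'\in\ahigh(\mathcal{X},\mathcal{A})\cap N$, so that $h'$ is unaffected by the swap and $|X_{h'}'|=|X_{h'}|>|A_{h'}|$ still holds. A second application of Lemma~\ref{lemma:gsp_BarmanVerma} to the pair $(\mathcal{X}',\mathcal{A})$ is then legitimate (its preconditions: $\mathcal{X}'$ non-redundant, $\mathcal{A}$ Pareto-optimal and non-redundant, and a non-empty $\ahigh$, all hold) and would yield a non-empty $\alow(\mathcal{X}',\mathcal{A})$, contradicting the equality just derived.

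The main obstacle I expect is the bookkeeping that pins down $A_0=\{g\}$ from $\alow(\mathcal{X},\mathcal{A})=\{0\}$ and then certifies that after a single ``swap'' agent~$0$ exits $\alow$. This captures the underlying intuition that, because agent~$0$'s matroid rank saturates at $1$, this agent can absorb at most one unit of discrepancy, whereas each of the two hypothesized configurations creates at least two units that need to be absorbed, at least one of which must therefore fall on an agent in $N$.
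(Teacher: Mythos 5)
Your proof is correct and takes essentially the same route as the paper: apply Lemma~\ref{lemma:gsp_BarmanVerma} once to shift one unit to agent~$0$, then use the fact that agent~$0$'s utility is capped at $1$ (via non-redundancy) to force the remaining surplus of the high agent(s) to be matched by a deficit at some agent in $N$. The only difference is organizational---the paper argues directly, applying Lemma~\ref{lemma:gsp_BarmanVerma} a second time to the intermediate allocation and ruling out $\ell'=0$ because agent~$0$ cannot non-redundantly hold two goods, whereas you set up a contradiction, deduce $|A_0|=1$, and then contradict the Pareto-optimality of $\mathcal{A}$ after the swap.
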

\begin{proof}
	Since $h \in \ahigh(\mathcal{X},\mathcal{A})$, by the first property of Lemma \ref{lemma:gsp_BarmanVerma}, there exists an agent $\ell \in \alow(\mathcal{X},\mathcal{A})$ and a non-redundant allocation $\mathcal{A}^*$ such that
    \begin{equation*}
        |A^*_\ell| = |X_\ell| + 1, |A^*_h| = |X_h|-1, \text{ and } |A^*_k| = |X_k| \text{ for all } k \in (N \cup \{0\}) \setminus \{\ell,h\}.
    \end{equation*}
    If $\ell \neq 0$, we are done. 
    Assume therefore that $\ell = 0$.

    First, consider the case where $|X_h| > |A_h| + 1$.
    We have that $|A^*_h| = |X_h| - 1 > |A_h|$. 
    Since $h \in \ahigh(\mathcal{A}^*,\mathcal{A})$, by the first property of Lemma \ref{lemma:gsp_BarmanVerma}, there exists an agent $\ell' \in \alow(\mathcal{A}^*,\mathcal{A})$ and a non-redundant allocation $\widehat{\mathcal{A}}^*$ such that
    \begin{equation*}
        |\widehat{A}^*_{\ell'}| = |A^*_{\ell'}| + 1, |\widehat{A}^*_h| = |A^*_h|-1, \text{ and } |\widehat{A}^*_k| = |A^*_k| \text{ for all } k \in (N \cup \{0\}) \setminus \{\ell',h\}.
    \end{equation*}
    If $\ell' = 0$, then
    $|\widehat{A}^*_0| = |A^*_0| + 1 = |X_0| + 2 = 2$, which is a contradiction since $\widehat{\mathcal{A}}^*$ is non-redundant and agent $0$ only values good $g$ positively.
    Thus, $\ell' \neq 0$, and so $\ell' \neq \ell$.
    Now, $\ell' \in \alow(\mathcal{A}^*,\mathcal{A})$ implies that $|X_{\ell'}| = |A^*_{\ell'}| < |A_{\ell'}|$, where the equality follows from the fact that $h \in \ahigh(\mathcal{A}^*,\mathcal{A})$ and $\ell' \notin\{\ell, h\}$.
    Hence, $\ell' \in \alow(\mathcal{X},\mathcal{A})$, which means that there exists an agent in $\alow(\mathcal{X},\mathcal{A})$ different from agent~$0$, as desired.

    Next, consider the case where $h, h' \in \ahigh(\mathcal{X},\mathcal{A})$ and $h\ne h'$.
    Since $h' \in \ahigh(\mathcal{X},\mathcal{A})$ and $\ell \in \alow(\mathcal{X},\mathcal{A})$, we have that $h' \neq \ell$.
    Hence, $|A^*_{h'}| = |X_{h'}| > |A_{h'}|$, which means that $h' \in \ahigh(\mathcal{A}^*,\mathcal{A})$.
    By the first property of Lemma \ref{lemma:gsp_BarmanVerma}, there exists an agent $\ell'' \in \alow(\mathcal{A}^*,\mathcal{A})$ and a non-redundant allocation $\overline{\mathcal{A}}^*$ where
    \begin{equation*}
        |\overline{A}^*_{\ell''}| = |A^*_{\ell''}| + 1, |\overline{A}^*_{h'}| = |A^*_{h'}|-1, \text{ and } |\overline{A}^*_k| = |A^*_k| \text{ for all } k \in (N \cup \{0\}) \setminus \{\ell'',h'\}.
    \end{equation*}
    If $\ell'' = 0$, then
    $|\overline{A}^*_0| = |A^*_0| + 1 = |X_0| + 2 = 2$, which is a contradiction since $\overline{\mathcal{A}}^*$ is non-redundant and agent~$0$ only values good $g$ positively.
    Thus, $\ell'' \neq 0$, and so $\ell'' \neq \ell$.
    Now, $\ell'' \in \alow(\mathcal{A}^*,\mathcal{A})$ implies that $|X_{\ell''}| = |A^*_{\ell''}| < |A_{\ell''}|$, where the equality follows from the fact that $h \in \ahigh(\mathcal{A}^*,\mathcal{A})$ and $\ell'' \notin \{\ell,h\}$.
    Hence, $\ell'' \in \alow(\mathcal{X},\mathcal{A})$, which again means that there exists an agent in $\alow(\mathcal{X},\mathcal{A})$ different from agent~$0$.
\end{proof}
 
\begin{lemma} \label{lem:correctness_gotpath}
    Let $\mathcal{I}$ and $\mathcal{I}'$ be instances with a set of agents~$N$ such that $\mathcal{I}'$ can be obtained from $\mathcal{I}$ by adding one extra good~$g$, and let \ruletie{}$(\mathcal{I}) = \mathcal{A}$ and \ruletie{}$(\mathcal{I}')= \mathcal{A}'$, where $f$ is a concave function.
    Suppose that an instance $\mathcal{I}^*$ can be obtained from $\mathcal{I}'$ by adding an extra agent $0$ with an arbitrary positive weight and a binary additive valuation function such that $v_0(g) = 1$ and $v_0(\overline{g}) = 0$ for all $\overline{g} \neq g$.
    Let the allocation $\widehat{\mathcal{A}}$ in $\mathcal{I}^*$ be the same as $\mathcal{A}$ except that there is an additional bundle $\widehat{A}_0 = \{g\}$. 
    
    If $g \in F_i(\widehat{A}_i)$ for some $i \in N$ or 
    there exists a nondegenerate path $P_{g',g}$ in the exchange graph $\mathcal{G}(\widehat{\mathcal{A}})$ that begins at some good $g' \in F_i(\widehat{A}_i)$ for some $i \in N$ and ends at $g \in \widehat{A}_0$,
	then for exactly one agent $z \in N$, $|A'_z| = |A_z| + 1$, and for all other agents $y \in N \setminus \{z\}$, $|A'_y| = |A_y|$.
\end{lemma}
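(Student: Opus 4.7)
The plan is to show that the bundle sizes in $\mathcal{A}'$ differ from those in $\mathcal{A}$ in exactly one coordinate, and by exactly $+1$ there. Resource-monotonicity (Theorem~\ref{thm:resource_monotonicity}) already yields $|A'_i| \ge |A_i|$ for every $i \in N$, so it suffices to establish (a) an upper bound ensuring at most one agent in $N$ strictly increases, with that increase being exactly one, and (b) a lower bound ruling out the degenerate case in which every $|A'_i| = |A_i|$.

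For part~(a), I would work inside $\mathcal{I}^*$, comparing $\widehat{\mathcal{A}}$ against the allocation $\widehat{\mathcal{A}}'$ obtained from $\mathcal{A}'$ by giving agent~$0$ the empty bundle. A key preliminary is that $\widehat{\mathcal{A}}$ is Pareto-optimal in $\mathcal{I}^*$: any dominating allocation must assign $g$ to agent~$0$ (since $v_0$ is binary additive with support $\{g\}$ and agent~$0$ already attains value~$1$ in $\widehat{\mathcal{A}}$), and its restriction to the agents in $N$ would then Pareto-dominate $\mathcal{A}$ in $\mathcal{I}$, contradicting the Pareto-optimality of the \ruletie{} allocation $\mathcal{A}$. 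Both $\widehat{\mathcal{A}}$ and $\widehat{\mathcal{A}}'$ are non-redundant and $|\widehat{A}'_0| = 0$, so the hypotheses of Lemma~\ref{lemma:resmon_bound_oneagent} are satisfied with $\widehat{\mathcal{A}}'$ and $\widehat{\mathcal{A}}$ in the roles of ``$\mathcal{X}$'' and ``$\mathcal{A}$'', respectively. If two distinct agents in $N$ were to strictly increase, or if some single agent in $N$ were to gain more than one good, Lemma~\ref{lemma:resmon_bound_oneagent} would produce an agent $\ell \in \alow(\widehat{\mathcal{A}}', \widehat{\mathcal{A}})$ different from~$0$, i.e., $|A'_\ell| < |A_\ell|$, contradicting resource-monotonicity. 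Hence at most one agent $z \in N$ satisfies $|A'_z| > |A_z|$, and in that case $|A'_z| = |A_z| + 1$.

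For part~(b), I would assume toward contradiction that $|A'_i| = |A_i|$ for every $i \in N$ and exhibit a non-redundant allocation in $\mathcal{I}'$ that Pareto-dominates $\mathcal{A}'$. If $g \in F_i(\widehat{A}_i) = F_i(A_i)$ for some $i \in N$, then adding $g$ to $A_i$ already yields such a dominating allocation, and we are done. Otherwise, the hypothesis supplies a nondegenerate path $P = (g_1, \dots, g_t)$ in $\mathcal{G}(\widehat{\mathcal{A}})$ from some $g_1 \in F_j(\widehat{A}_j)$ with $j \in N$ to $g_t = g$; taking $P$ to be shortest and invoking Lemma~\ref{lemma:gsp_schrijver} with the roles of ``$i$'' and ``$j$'' played by $j$ and~$0$ produces a non-redundant allocation in $\mathcal{I}^*$ in which agent~$j$'s utility rises by~$1$, agent~$0$'s bundle becomes empty, and every other agent's utility is preserved. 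Discarding agent~$0$ then yields a non-redundant allocation in $\mathcal{I}'$ that strictly Pareto-dominates $\mathcal{A}'$, again a contradiction with the Pareto-optimality of the \ruletie{} allocation $\mathcal{A}'$.

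The main obstacle will be the bookkeeping in the embedding into $\mathcal{I}^*$: verifying the Pareto-optimality of $\widehat{\mathcal{A}}$, and translating the path-augmentation in the exchange graph of $\widehat{\mathcal{A}}$ back into a genuine modification of allocations in $\mathcal{I}'$ (so that the ``dropping agent~$0$'' step is clean). Once this infrastructure is in place, Lemmas~\ref{lemma:gsp_schrijver} and~\ref{lemma:resmon_bound_oneagent} perform the structural work, and the concavity of $f$ enters only indirectly, through the invocation of resource-monotonicity.
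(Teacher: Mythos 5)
Your proposal is correct and follows essentially the same route as the paper: resource-monotonicity gives $|A'_k|\ge|A_k|$, the "no agent gains" case is ruled out by a shortest-path augmentation via Lemma~\ref{lemma:gsp_schrijver} (or by directly adding $g$ when $g\in F_i(\widehat{A}_i)$), and both the "gain of two or more" and "two agents gain" cases are excluded by applying Lemma~\ref{lemma:resmon_bound_oneagent} to $\widehat{\mathcal{A}}'$ and the Pareto-optimal $\widehat{\mathcal{A}}$ in $\mathcal{I}^*$, exactly as in the paper's proof. The only cosmetic difference is that you merge the paper's two separate contradiction cases for the upper bound into one invocation of that lemma and reverse the order of the two parts.
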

\begin{proof}
	By definition of \ruletie{}, both $\mathcal{A}$ and $\mathcal{A}'$ are non-redundant.
    Moreover, by the resource-monotonicity of \ruletie{} (Theorem \ref{thm:resource_monotonicity}), $|A'_k| \geq |A_k|$ for all $k \in N$.
    Suppose that $g \in F_i(\widehat{A}_i)$ for some $i \in N$ or 
    there exists a nondegenerate path $P_{g',g}$ in the exchange graph $\mathcal{G}(\widehat{\mathcal{A}})$ that begins at some good $g' \in F_i(\widehat{A}_i)$ for some $i \in N$ and ends at $g \in \widehat{A}_0$.

    First, assume for a contradiction that $|A'_k| = |A_k|$ for all $k \in N$.
    If $g \in F_i(\widehat{A}_i)$ for some $i \in N$, by starting from $\widehat{\mathcal{A}}$ and transferring $g$ from $\widehat{A}_0$ to $\widehat{A}_i$, we obtain a non-redundant allocation in which agent~$i$'s utility increases by~$1$, agent~$0$'s utility decreases by~$1$, and every other agent's utility remains the same.
    Similarly, if $g\not\in F_j(\widehat{A}_j)$ for all $j\in N$ and there exists a nondegenerate path from $F_i(\widehat{A}_i)$ to $\widehat{A}_0$ for some $i\in N$, by considering a shortest such path, performing augmentation along this path, and applying Lemma~\ref{lemma:gsp_schrijver}, we obtain a non-redundant allocation in which agent $i$'s utility increases by~$1$, agent $0$'s utility decreases by~$1$, and every other agent's utility remains the same.
    Hence, in either case, we obtain an allocation that Pareto-dominates $\mathcal{A}'$ in $\mathcal{I}'$, contradicting the fact that $\mathcal{A}'$ is an \ruletie{} allocation in $\mathcal{I}'$.
    We therefore conclude that there exists an agent $j\in N$ such that $|A'_j| > |A_j|$.

    Next, assume for a contradiction that there exists an agent $z \in N$ such that $|A'_z| > |A_z| + 1$.
    Consider the allocation $\widehat{\mathcal{A}}'$ in $\mathcal{I^*}$ that is equivalent to $\mathcal{A}'$, but with the addition that $\widehat{A}'_0 = \emptyset$.
	We also have $|\widehat{A}'_z| > |\widehat{A}_z| + 1$.
    Note that $\widehat{\mathcal{A}}'$ and $\widehat{\mathcal{A}}$ are non-redundant, $\widehat{\mathcal{A}}$ is Pareto-optimal because agent~$0$ only positively values $g$ and $\mathcal{A}$ is Pareto-optimal in $\mathcal{I}$, and $|\widehat{A}'_0| = 0$.
	Invoking Lemma \ref{lemma:resmon_bound_oneagent}, we find that there exists an agent different from agent~$0$ belonging to $\alow(\widehat{\mathcal{A}}',\widehat{\mathcal{A}})$.
	This also means that there exists an agent different from agent~$0$ belonging to $\alow(\mathcal{A}',\mathcal{A})$.
    However, this contradicts the fact that $|A'_k| \ge |A_k|$ for all $k\in N$.
	Hence, we must have $|A_k| \leq |A'_k| \leq |A_k|+1$ for all $k \in N$.

    Finally, assume for a contradiction that there exist two distinct agents $y, z \in N$ such that $|A'_y| = |A_y| + 1$ and $|A'_z| = |A_z| +1$.
	Again, consider the allocation $\widehat{\mathcal{A}}'$ in $\mathcal{I}^*$ that is equivalent to $\mathcal{A}'$, but with the addition that $\widehat{A}'_0 = \emptyset$.
	We also have $|\widehat{A}'_y| = |\widehat{A}_y| + 1$ and $|\widehat{A}'_z| = |\widehat{A}_z| +1$, that is, $y,z \in \ahigh(\widehat{\mathcal{A}}',\widehat{\mathcal{A}})$.
	As in the previous paragraph, invoking Lemma~\ref{lemma:resmon_bound_oneagent}, we find that there exists an agent different from agent~$0$ belonging to $\alow(\widehat{\mathcal{A}}', \widehat{\mathcal{A}})$.
	This also means that there exists an agent different from agent~$0$ belonging to $\alow(\mathcal{A}',\mathcal{A})$.
    However, this contradicts the fact that $|A'_k| \ge |A_k|$ for all $k\in N$.
    We therefore conclude that for exactly one agent $r \in N$, $|A'_r| = |A_r| + 1$, and for all other agents $s \in N \setminus \{r\}$, $|A'_s| = |A_s|$.
\end{proof}

We are now ready to present an algorithm that computes an \ruletie{} allocation, which is shown as Algorithm~\ref{alg:therule}.
The algorithm starts with an empty allocation, and allocates one good at a time by calling the subroutine \texttt{AddOneGood} (Algorithm~\ref{alg:addonegood}) with the current allocation and the new good to be allocated.
Note that \texttt{AddOneGood} may reallocate some goods that have already been allocated.

\begin{algorithm}[tb]
    \caption{\ruletie{} algorithm}
    \label{alg:therule}
    \textbf{Input}: Set of agents $N=\{1,\dots,n\}$, set of goods $G=\{g_1,\dots,g_m\}$, weight vector $\mathbf{w} = (w_1, \dots, w_n)$, and valuation profile $\mathbf{v} = (v_1, \dots, v_n)$ accessible via queries\\
    \textbf{Output}: \ruletie{} allocation of the $m$ goods in $G$
    
    \begin{algorithmic}[1]
        \STATE Initialize the empty allocation $\mathcal{A}^0$ where $A_i^0 = \emptyset$ for all $i \in N$.
        \FOR{$t = 1,2,\dots,m$}
        \STATE $\mathcal{A}^t \leftarrow \texttt{AddOneGood}(\mathcal{A}^{t-1}, g_t)$ (see Algorithm~\ref{alg:addonegood}) \\
        \ENDFOR
        \STATE \textbf{return} $\mathcal{A}^m$
    \end{algorithmic}
\end{algorithm}

\begin{algorithm}[tb]
    \caption{\texttt{AddOneGood}}
    \label{proc:addonegood}
    \textbf{Input}: \ruletie{} allocation $\mathcal{A}^{t-1}$ ($t-1$ goods in total), and good $g$\\
    \textbf{Output}: \ruletie{} allocation $\mathcal{A}^{t}$ ($t$ goods in total)
    
    \begin{algorithmic}[1]
    \STATE Let $\widehat{\mathcal{A}}^{t-1}$ be the allocation that is the same as $\mathcal{A}^{t-1}$, but with an added dummy agent~$0$ such that $\widehat{A}_0^{t-1} = \{g\}$. Assume that agent~$0$ has an arbitrary positive weight and a binary additive valuation function such that $v_0(g) = 1$ and $v_0(\overline{g}) = 0$ for all $\overline{g} \neq g$. 
    \STATE Construct the exchange graph $\mathcal{G}(\widehat{\mathcal{A}}^{t-1})$.
    \IF{$g \notin F_i(\widehat{A}^{t-1}_i)$ for all $i \in N$ and there does not exist a nondegenerate path in $\mathcal{G}(\widehat{\mathcal{A}}^{t-1})$ that starts from a good in $F_j(\widehat{A}^{t-1}_j)$ for some $j\in N$ and ends with $g$}
        \STATE \textbf{return} the allocation $\mathcal{A}^{t-1}$ with $g$ not assigned to any agent
    \ENDIF
    \STATE $\mathcal{P} \leftarrow \emptyset$ 
    \FOR{each agent $i = 1, \dots, n$}
        \IF{$g \in F_i(\widehat{A}_i^{t-1})$}
            \STATE Add the tuple $(i, P_{g,g}, \mathbf{u}_i)$ to $\mathcal{P}$, where $\mathbf{u}_i$ is the  utility vector of the allocation $\widehat{\mathcal{A}}^{t-1}$ with $i$'s utility increased by~$1$, and $P_{g,g}$ is the degenerate path from $g$ to itself.
        \ENDIF
        \IF{a nondegenerate path from some $g' \in F_i(\widehat{A}_i^{t-1})$ to $g \in \widehat{A}_0^{t-1}$ exists}   
            \STATE Let $P_{g',g}$ be a shortest such path.
            \STATE Add the tuple $(i, P_{g',g}, \mathbf{u}_i)$ to $\mathcal{P}$, where $\mathbf{u}_i$ is the  utility vector of the allocation $\widehat{\mathcal{A}}^{t-1}$ with $i$'s utility increased by~$1$. 
        \ENDIF
    \ENDFOR 
    \STATE Select the tuple $Q \in \mathcal{P}$ with the maximum weighted welfare according to \therule{} across all tuples (computed using $\mathbf{u}_i$), breaking ties according to Definitions~\ref{def:Rf} and \ref{def:ruletie}. Let $j$ and $P$ be the agent and the path in the tuple~$Q$, respectively. \label{line:select} 
    \STATE Let $\mathcal{A}^t$ be the allocation where $A_i^t = A_i^{t-1} \triangle P$ for all $i \in N$ (see Section~\ref{sec:exchange} for the definition of~$\triangle$), and add the first good on the path~$P$ to $A_j^t$.
    \STATE \textbf{return} $\mathcal{A}^t$
    \end{algorithmic}
\label{alg:addonegood}
\end{algorithm}

\begin{theorem}
	Under matroid-rank valuations, for any concave function $f$, Algorithm \ref{alg:therule} computes an \ruletie{} allocation in polynomial time.
\end{theorem}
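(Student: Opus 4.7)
The plan is to proceed by induction on the iteration index $t$ of the outer loop of Algorithm~\ref{alg:therule}, showing that $\mathcal{A}^t$ is an \ruletie{} allocation for the instance consisting of the first $t$ goods. The base case $t=0$ is immediate since the empty allocation is the unique \ruletie{} allocation when there are no goods. For the inductive step I would assume $\mathcal{A}^{t-1}$ is an \ruletie{} allocation for the first $t-1$ goods, pass it together with the good $g_t$ to \texttt{AddOneGood}, and analyze the two branches of that subroutine separately using Lemmas~\ref{lem:correctness_nopath} and~\ref{lem:correctness_gotpath}.

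If the condition on line 3 of \texttt{AddOneGood} holds, applying Lemma~\ref{lem:correctness_nopath} to the auxiliary instance $\mathcal{I}^*$ immediately gives that $\mathcal{A}^{t-1}$, with $g_t$ left unassigned, is an \ruletie{} allocation for the instance with $t$ goods, matching the algorithm's output. Otherwise, Lemma~\ref{lem:correctness_gotpath} asserts that every \ruletie{} allocation for the instance with $t$ goods is obtained from $\mathcal{A}^{t-1}$ by raising the utility of exactly one agent $z \in N$ by $1$ while leaving all other agents' utilities unchanged. The correctness argument then hinges on two claims: (i) the candidate set $\mathcal{P}$ built in the inner loop contains a tuple for every agent that can play the role of $z$; and (ii) the allocation produced by augmenting along the path chosen on line~\ref{line:select} is a non-redundant allocation realizing the claimed utility vector.

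For (i), given any \ruletie{} allocation $\mathcal{A}'$ of the instance with $t$ goods and the associated $\widehat{\mathcal{A}}'$ with $\widehat{A}'_0 = \emptyset$, applying Lemma~\ref{lemma:gsp_BarmanVerma_path} with $\mathcal{X} = \widehat{\mathcal{A}}^{t-1}$ and $h = 0 \in \ahigh(\mathcal{X}, \widehat{\mathcal{A}}')$ yields some $\ell \in \alow(\mathcal{X}, \widehat{\mathcal{A}}')$ together with a path in $\mathcal{G}(\widehat{\mathcal{A}}^{t-1})$ from $F_\ell(\widehat{A}^{t-1}_\ell)$ to $g_t$; because the only agent with increased utility from $\mathcal{A}^{t-1}$ to $\mathcal{A}'$ is $z$, we must have $\ell = z$, so $z$ shows up in $\mathcal{P}$ either via the degenerate-path branch (when $g_t \in F_z(\widehat{A}^{t-1}_z)$) or the shortest-path branch. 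For (ii), applying Lemma~\ref{lemma:gsp_schrijver} to the chosen shortest path inside $\mathcal{I}^*$ yields a non-redundant allocation in which exactly the selected agent's utility rises by $1$ and the dummy agent $0$'s utility drops to $0$; removing agent $0$ then produces a non-redundant allocation of the instance with $t$ goods with the desired utility vector. Since all tuples in $\mathcal{P}$ correspond to realizable non-redundant allocations, the welfare-maximizing tuple selected on line~\ref{line:select} has the utility vector of some \ruletie{} allocation, and the augmentation step constructs one.

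The main subtlety I foresee is propagating the lexicographic refinement built into \ruletie{}, particularly in the $f(0) = -\infty$ regime where the algorithm must first maximize the number of positively-valued agents with the prescribed lexicographic tie-break before turning to weighted welfare. The saving observation is that the candidate utility vectors in $\mathcal{P}$ all differ from the current vector in exactly one coordinate by $+1$, so comparing them under the composite ordering of Definitions~\ref{def:Rf} and~\ref{def:ruletie} reduces to the comparison implemented on line~\ref{line:select}. For the runtime bound, each call to \texttt{AddOneGood} builds the exchange graph using $O(nm^2)$ value-oracle queries, finds all needed shortest paths to $g_t$ via a single reverse breadth-first search in $O(m^2)$ time, and iterates through $n$ tuples with constant-time welfare comparisons; with the outer loop running $m$ times, the total runtime is polynomial in $n$ and $m$.
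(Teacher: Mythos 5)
Your proposal is correct and follows essentially the same route as the paper's proof: induction over the goods, Lemmas~\ref{lem:correctness_nopath} and~\ref{lem:correctness_gotpath} for the two branches of \texttt{AddOneGood}, Lemma~\ref{lemma:gsp_BarmanVerma_path} to show that the unique agent $z$ contributes a tuple to $\mathcal{P}$, Lemma~\ref{lemma:gsp_schrijver} for realizability of each tuple, and the tie-breaking comparison on line~\ref{line:select}. The only detail to add is the paper's one-line verification that the extended allocation $\widehat{\mathcal{A}}'$ is Pareto-optimal in $\mathcal{I}^*$ before invoking Lemma~\ref{lemma:gsp_BarmanVerma_path}: since the \ruletie{} allocation of the $t$ goods Pareto-dominates $\mathcal{A}^{t-1}$, no allocation leaving $g$ unallocated can be \ruletie{}, so $g$ must be assigned to some agent in $N$.
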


\begin{proof}
	To prove that the allocation returned by the algorithm is an \ruletie{} allocation, we show that after every iteration, the allocation returned by the subroutine \texttt{AddOneGood} is \ruletie{} with respect to the goods that are already allocated.
	
	Let $t\in\{1,\dots,m\}$ be some iteration of the \textbf{for} loop of Algorithm~\ref{alg:therule}, and let $g$ be the good added during this iteration.
	As in Algorithm~\ref{alg:addonegood}, let $\widehat{\mathcal{A}}^{t-1}$ denote the allocation that is the same as $\mathcal{A}^{t-1}$, but with an additional agent~$0$ with bundle $\widehat{A}_0^{t-1}=\{g\}$.
    Assume that agent~$0$ has an arbitrary positive weight and a binary additive valuation function such that $v_0(g) = 1$ and $v_0(\overline{g}) = 0$ for all $\overline{g} \neq g$. 
	
	Consider the exchange graph $\mathcal{G}(\widehat{\mathcal{A}}^{t-1})$.
	If $g \notin F_i(\widehat{A}_i^{t-1})$ for all $i \in N$ and there does not exist a nondegenerate path in $\mathcal{G}(\widehat{\mathcal{A}}^{t-1})$ that starts from a good in $F_j(\widehat{A}^{t-1}_j)$ for some $j\in N$ and ends with~$g$, then by Lemma \ref{lem:correctness_nopath}, the allocation $\mathcal{A}^t = \mathcal{A}^{t-1}$ (with $g$ not assigned to any agent) is \ruletie{} for the first $t$ goods, and \texttt{AddOneGood} correctly returns this allocation.
    Suppose from now on that either $g \in F_i(\widehat{A}_i^{t-1})$ for some $i \in N$, or there is a nondegenerate path $P_{g',g}$ that begins at some good $g' \in F_i(\widehat{A}^{t-1}_i)$ for some $i \in N$ and ends at $g \in \widehat{A}^{t-1}_0$.
	This means that at least one tuple is added to $\mathcal{P}$ and this iteration of \texttt{AddOneGood} terminates with an allocation $\mathcal{A}^t$.
	
	Let $\widetilde{\mathcal{A}}^t$ be an \ruletie{} allocation of the $t$ goods.
	By Lemma \ref{lem:correctness_gotpath}, exactly one agent $z\in N$ has $|\widetilde{A}^t_z|=|A^{t-1}_z|+1$, while all other agents $y \in N\setminus\{z\}$ have $|\widetilde{A}_y^t|=|A_y^{t-1}|$.
	Let $\overline{\mathcal{A}}^t$ be the allocation that is the same as $\widetilde{\mathcal{A}}^t$, but with agent~$0$ with bundle $\overline{A}^t_0 = \emptyset$. 
    Then, we have that $\alow(\widehat{\mathcal{A}}^{t-1},\overline{\mathcal{A}}^t) = \{z\}$ and $\ahigh(\widehat{\mathcal{A}}^{t-1},\overline{\mathcal{A}}^t) = \{0\}$. 
    We consider two cases.

    \begin{description}
        \item[Case 1: $g \in F_z(\widehat{A}_z^{t-1})$.]
        Then the tuple $(z, P_{g,g} , \mathbf{u}_{z})$ is added to $\mathcal{P}$.
        By adding $g$ to $A_z^{t-1}$, we obtain an allocation in which agent~$z$'s utility increases by~$1$ and the utility of every other agent in~$N$ remains the same.
        \item[Case 2: $g \notin F_z(\widehat{A}_z^{t-1})$.]        
        Note that $\widehat{\mathcal{A}}^{t-1}$ and $\overline{\mathcal{A}}^t$ are non-redundant.
        Moreover, $\widetilde{\mathcal{A}}^t$ Pareto-dominates ${\mathcal{A}}^{t-1}$, which means that no allocation among the agents in $N$ that leaves $g$ unallocated can be an \ruletie{} allocation of the $t$ goods.
        This implies that $\overline{\mathcal{A}}^t$ is Pareto-optimal.
        Since $\alow(\widehat{\mathcal{A}}^{t-1},\overline{\mathcal{A}}^t) = \{z\}$, $\ahigh(\widehat{\mathcal{A}}^{t-1},\overline{\mathcal{A}}^t) = \{0\}$, and $\widehat{A}_0^{t-1}=\{g\}$, by Lemma \ref{lemma:gsp_BarmanVerma_path},  there exists a path
    in $\mathcal{G}(\widehat{A}^{t-1})$ from some good $g' \in F_z(\widehat{A}^{t-1}_z)$ to the good $g \in \widehat{A}^{t-1}_0$. 
    Since $g \notin F_z(\widehat{A}_z^{t-1})$, this path must be nondegenerate.
    Hence, a tuple $(z, P_{g'',g} , \mathbf{u}_{z})$ corresponding to a shortest path is added to $\mathcal{P}$.
    According to Lemma~\ref{lemma:gsp_schrijver}, by augmenting along this path, we obtain an allocation in which agent~$z$'s utility increases by~$1$ and the utility of every other agent in~$N$ remains the same.
    
    \end{description}
	In both cases, since the utility vector $\mathbf{u}_{z}$ is the same as the utility vector of $\widetilde{\mathcal{A}}^t$, and by the selection method in line~\ref{line:select} of Algorithm \ref{alg:addonegood} along with the fact that every tuple added to $\mathcal{P}$ corresponds to a valid allocation (which can be shown by a similar reasoning as in the two cases), the allocation $\mathcal{A}^t$ returned by \texttt{AddOneGood} is also an \ruletie{} allocation. 
	This completes the proof of correctness.
	
	Finally, we analyze the running time of Algorithm~\ref{alg:therule}. 
	In the main algorithm, there are $O(m)$ iterations of the \textbf{for} loop. In the \texttt{AddOneGood} subroutine, constructing the graph $\mathcal{G}(\widehat{A}^{t-1})$ takes $O(m^2)$ time, checking whether $g\not\in F_i(\widehat{A}_i^{t-1})$ for all~$i\in N$ takes $O(n)$ time, checking whether there exists a nondegenerate path in $\mathcal{G}(\widehat{\mathcal{A}}^{t-1})$ that starts from a good in $F_j(\widehat{A}^{t-1}_j)$ for some $j\in N$ and ends with~$g$ takes $O(m^2n)$ time using breadth-first search backward from $g$ \citep{CormenLeRi09}, and there are $O(n)$ iterations of the \textbf{for} loop. Within each iteration, determining whether a specified path exists and, if so, finding a shortest path can be done using breadth-first search backward from $g$, which takes $O(m^2)$ time. In the last step, selecting the tuple in $\mathcal{P}$ takes $O(n^2)$ time and computing the new allocation takes $O(m)$ time. Putting everything together, Algorithm \ref{alg:therule} terminates in $O(mn(m^2+n))$ time.
\end{proof}

\section{Population- and Weight-Monotonicity}

In this section, we address population- and weight-monotonicity.
Population-monotonicity states that if an extra agent is added, then no original agent's utility should increase.

\begin{definition}[Population-monotonicity]
    An allocation rule $\mathcal{F}$ is \emph{population-monotone} if the following holds: For any two instances $\mathcal{I}$ and $\mathcal{I}'$ such that $\mathcal{I}'$ can be obtained from $\mathcal{I}$ by adding one extra agent, if $\mathcal{F}(\mathcal{I}) = \mathcal{A}$ and $\mathcal{F}(\mathcal{I}') = \mathcal{A}'$, then for each agent $i$ in the original set of agents, $v_i(A_i) \geq v_i(A'_i)$.    
\end{definition}

As with resource-monotonicity, under general additive valuations in the unweighted setting, MNW fails to be population-monotone regardless of tie-breaking \citep{ChakrabortyScSu21}.
By contrast, in the weighted setting, MWNW with lexicographical tie-breaking satisfies population-monotonicity under binary additive valuations \citep{SuksompongTe22}.

We will leverage the subroutine \texttt{AddOneGood} from Section~\ref{sec:algo} to prove that \ruletie{} is population-monotone for any concave function~$f$.
Recall from Section~\ref{sec:prelim} that for an allocation $\mathcal{A}$ and a subset of agents $S\subseteq N$,  $\mathcal{A}_S$ denotes the allocation derived from restricting $\mathcal{A}$ to the bundles of the agents in $S$ (in the same order as in $\mathcal{A})$, and $N^+_\mathcal{A}\subseteq N$ denotes the subset of agents receiving positive utility from~$\mathcal{A}$.

\begin{lemma}\label{lemma:subset-therule}
	Suppose we have an \ruletie{} allocation $\mathcal{A} = (A_1,\dots,A_n)$ for the agents in $N$. 
	Then, for any subset of agents $S \subseteq N$, $\mathcal{A}_S$ is also an \ruletie{} allocation in the instance with the set of agents $S$ and the set of goods in $\mathcal{A}_S$.
\end{lemma}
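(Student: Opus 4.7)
The plan is to exploit a gluing construction. Let $G_S = \bigcup_{i \in S} A_i$, and consider the restricted instance $T$ with agent set $S$ and good set $G_S$. For any allocation $\mathcal{B}_S = (B_i)_{i \in S}$ in $T$, define the allocation $\mathcal{B}$ in the original instance by combining $\mathcal{B}_S$ with the unchanged bundles $(A_i)_{i \in N \setminus S}$. Since $G_S$ is disjoint from $\bigcup_{i \in N \setminus S} A_i$, this is a valid allocation on $N$, and the utility vectors of $\mathcal{A}$ and $\mathcal{B}$ coincide on every coordinate in $N \setminus S$. Non-redundancy of $\mathcal{A}_S$ in $T$ is immediate from that of $\mathcal{A}$, so by Definition~\ref{def:ruletie} it suffices to prove that $\mathcal{A}_S$ is an \therule{} allocation in $T$ that lex-dominates all other \therule{} allocations in $T$.

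First I would show that $\mathcal{A}_S$ is an \therule{} allocation in $T$. Assume towards a contradiction that some $\mathcal{B}_S$ beats $\mathcal{A}_S$ under the ordering of Definition~\ref{def:Rf}. If both $\mathcal{A}_S$ and $\mathcal{B}_S$ have finite weighted welfare in $T$, then the identical contributions from $N \setminus S$ imply that $\mathcal{B}$ has strictly larger welfare than $\mathcal{A}$ in $N$, contradicting \therule{}-optimality of $\mathcal{A}$. The subtle case is $f(0) = -\infty$ with $\mathcal{A}_S$ achieving $-\infty$ welfare in $T$: if the maximum welfare in $T$ is finite, then gluing any all-positive $\mathcal{B}_S$ yields $|N^+_{\mathcal{B}}| > |N^+_{\mathcal{A}}|$, contradicting \therule{}'s handling of the exceptional branch for $\mathcal{A}$; hence the maximum welfare in $T$ is $-\infty$ as well. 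In that event, any improvement of $\mathcal{B}_S$ over $\mathcal{A}_S$ in $T$ (via a larger positive-utility count, a lex-preferred positive-utility set within $S$, or a larger restricted welfare among positive-utility agents) transfers to $\mathcal{B}$ versus $\mathcal{A}$ in $N$, since the positive-utility coordinates outside $S$ are identical and the smallest index of disagreement necessarily lies inside $S$.

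Finally, I would handle the lex tie-breaking within \therule{} allocations. Suppose some \therule{} allocation $\mathcal{C}_S$ in $T$ lex-dominates $\mathcal{A}_S$. Running the transfer argument of the previous paragraph in reverse, the glued $\mathcal{C}$ is equivalent to $\mathcal{A}$ under the \therule{} ordering in $N$, so $\mathcal{C}$ is itself an \therule{} allocation in $N$. Since $\mathcal{C}$ and $\mathcal{A}$ agree on $N \setminus S$, the smallest coordinate at which their utility vectors differ lies in $S$ and is precisely the witness of $\mathcal{C}_S$'s lex dominance over $\mathcal{A}_S$ in $T$; hence $\mathcal{C}$ lex-dominates $\mathcal{A}$ in $N$, contradicting $\mathcal{A}$ being an \ruletie{} allocation. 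The main obstacle is bookkeeping rather than conceptual: one has to track carefully the layered tie-breaking rule of Definition~\ref{def:Rf} when the optimal welfare is $-\infty$, whereas the core gluing observation---that the $N \setminus S$ portion is invariant under the substitution---makes every comparison transparent once the right case has been identified.
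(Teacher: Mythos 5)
Your proposal is correct and follows essentially the same gluing/transfer argument as the paper's proof, which likewise glues any purportedly better allocation on $S$ with the unchanged bundles of $N\setminus S$ and transfers the improvement (the paper merely folds the lexicographic tie-break and the $f(0)=-\infty$ layers into a single case analysis rather than your two-step treatment). One small bookkeeping point: in your first case (both $\mathcal{A}_S$ and $\mathcal{B}_S$ with finite welfare in $T$), if $f(0)=-\infty$ and some agent in $N\setminus S$ has zero utility, then both glued welfares in $N$ are $-\infty$ and the comparison must instead be made inside the exceptional branch (identical sets $N^+$, strictly larger welfare among the positive-utility agents), which is exactly the transfer mechanism you already invoke in the other subcase.
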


\begin{proof}
	Suppose, for a contradiction, that $\mathcal{A}_S$ is not an \ruletie{} allocation. 
	This means that there exists another allocation $\mathcal{A}'_S$ of the goods in $\mathcal{A}_S$
    to the agents in $S$ that is preferred by the rule~\ruletie{}.
    For notational consistency, let $\mathcal{A}'$ be the allocation in which the agents in $N\setminus S$ are allocated according to~$\mathcal{A}$ while the agents in $S$ are allocated according to $\mathcal{A}'_S$.

    Assume first that $f(0) \ne -\infty$.
    We have that either
    \begin{enumerate}[label=(\roman*)]
    \item $\sum_{i \in S} w_i \cdot f(v_i(A_i))< \sum_{i \in S} w_i \cdot f(v_i(A'_i))$, or
    \item $\sum_{i \in S} w_i \cdot f(v_i(A_i))= \sum_{i \in S} w_i \cdot f(v_i(A'_i))$, but the utility vector of $\mathcal{A}'_S$ is lexicographically preferred to that of $\mathcal{A}_S$.
    \end{enumerate}
    In case~(i), since the agents in $N\setminus S$ are allocated in the same way in $\mathcal{A}'$ as in $\mathcal{A}$, we have $\sum_{i \in N} w_i \cdot f(v_i(A_i))< \sum_{i \in N} w_i \cdot f(v_i(A'_i))$, contradicting the fact that $\mathcal{A}$ is an \ruletie{} allocation for the agents in~$N$.
    In case~(ii), we have that $\sum_{i \in N} w_i \cdot f(v_i(A_i)) = \sum_{i \in N} w_i \cdot f(v_i(A'_i))$ but the utility vector of $\mathcal{A}'$ is lexicographically preferred to that of $\mathcal{A}$, again a contradiction.
   
    Next, assume that $f(0) = -\infty$.
    We have that
	\begin{enumerate}[label=(\roman*)]
		\item $|N^+_{\mathcal{A}_S}| < |N^+_{\mathcal{A}_S'}|$, or
		\item $|N^+_{\mathcal{A}_S}| = |N^+_{\mathcal{A}'_S}|$, but the indices of the agents in $N^+_{\mathcal{A}'_S}$ are lexicographically lower than those in $N^+_{\mathcal{A}_S}$, or
        \item $N^+_{\mathcal{A}_S} = N^+_{\mathcal{A}'_S}$, but $\sum_{i \in N^+_{\mathcal{A}_S}} w_i \cdot f(v_i(A_i))< \sum_{i \in N^+_{\mathcal{A}'_S}} w_i \cdot f(v_i(A'_i))$, or 
		\item $N^+_{\mathcal{A}_S} = N^+_{\mathcal{A}'_S}$ and $\sum_{i \in N^+_{\mathcal{A}_S}} w_i \cdot f(v_i(A_i)) = \sum_{i \in N^+_{\mathcal{A}'_S}} w_i \cdot f(v_i(A'_i))$, but the  utility vector of $\mathcal{A}'_S$ is lexicographically preferred to that of $\mathcal{A}_S$.
	\end{enumerate}	
	Case (i) means that the number of agents in $N^+_{\mathcal{A}}$ is not maximized, a contradiction.
	In case (ii), we have that $|N^+_{\mathcal{A}}| = |N^+_{\mathcal{A}'}|$ but the indices of the agents in $N^+_{\mathcal{A}'}$ are lexicographically lower than those in $N^+_{\mathcal{A}}$, a contradiction.
    In case~(iii), we have that $N^+_{\mathcal{A}} = N^+_{\mathcal{A}'}$ but $\sum_{i \in N^+_{\mathcal{A}}} w_i \cdot f(v_i(A_i))< \sum_{i \in N^+_{\mathcal{A}'}} w_i \cdot f(v_i(A'_i))$, again a contradiction.
    Finally, in case~(iv), we have that $N^+_{\mathcal{A}} = N^+_{\mathcal{A}'}$ and $\sum_{i \in N^+_{\mathcal{A}}} w_i \cdot f(v_i(A_i)) = \sum_{i \in N^+_{\mathcal{A}'}} w_i \cdot f(v_i(A'_i))$ but the utility vector of $\mathcal{A}'$ is lexicographically preferred to that of $\mathcal{A}$, yielding yet another contradiction.
\end{proof}

We remark that Lemma~\ref{lemma:subset-therule} holds even for arbitrary monotone and normalized valuations, provided that the domain of the function $f$ is extended from $\mathbb{Z}_{\ge 0}$ to $\mathbb{R}_{\ge 0}$.
Indeed, the same proof still works in that more general case.

\begin{theorem}\label{thm-populationmonotone}
    Under matroid-rank valuations, the rule \emph{\ruletie{}} with any concave function $f$ is population-monotone.
\end{theorem}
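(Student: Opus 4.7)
The plan is to reduce population-monotonicity to the combination of the restriction lemma (Lemma~\ref{lemma:subset-therule}) and resource-monotonicity (Theorem~\ref{thm:resource_monotonicity}), with the \texttt{AddOneGood} subroutine providing the mechanism to move between the two instances one good at a time.

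Let $\mathcal{I}' $ be obtained from $\mathcal{I}$ by adding a new agent, which without loss of generality we call agent~$n+1$; thus $\mathcal{I}$ has agent set $N=\{1,\dots,n\}$ and $\mathcal{I}'$ has agent set $N \cup \{n+1\}$, both sharing the same set of goods $G$. Let $\mathcal{A}=$~\ruletie{}$(\mathcal{I})$ and $\mathcal{A}'=$~\ruletie{}$(\mathcal{I}')$. Define $S := \bigcup_{i \in N} A'_i \subseteq G$, which is the set of goods that $\mathcal{A}'$ allocates to the original agents. First I would invoke Lemma~\ref{lemma:subset-therule} on $\mathcal{A}'$ with the subset $N \subseteq N \cup \{n+1\}$ to conclude that $\mathcal{A}'_N$ is an \ruletie{} allocation in the instance $\mathcal{J}$ consisting of agents $N$ (with their original weights and valuations) and the set of goods $S$.

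Next, I would compare $\mathcal{J}$ to $\mathcal{I}$. These two instances have the same agents, weights, and valuations and differ only in their good set: $\mathcal{I}$ contains all of $G$, while $\mathcal{J}$ contains only $S \subseteq G$. Therefore $\mathcal{I}$ can be obtained from $\mathcal{J}$ by adding the goods in $G \setminus S$ one at a time, producing a sequence of intermediate instances $\mathcal{J} = \mathcal{J}_0, \mathcal{J}_1, \dots, \mathcal{J}_k = \mathcal{I}$. Applying Theorem~\ref{thm:resource_monotonicity} at each step to any \ruletie{} allocations of $\mathcal{J}_{t-1}$ and $\mathcal{J}_t$ shows that every original agent's utility is non-decreasing along this chain. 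In particular, starting from $\mathcal{A}'_N$, which is an \ruletie{} allocation in $\mathcal{J}_0$, and using that all \ruletie{} allocations in a given instance have the same utility vector, we get that the utility vector of any \ruletie{} allocation in $\mathcal{J}_k=\mathcal{I}$ (and hence that of $\mathcal{A}$) dominates coordinate-wise the utility vector of $\mathcal{A}'_N$. That is, $v_i(A_i) \ge v_i(A'_i)$ for every $i \in N$, which is exactly population-monotonicity.

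The main subtlety is ensuring that $\mathcal{A}'_N$ really is an \ruletie{} allocation on goods $S$ with just the agents $N$; this is exactly the content of Lemma~\ref{lemma:subset-therule} and is the only place where the specific tie-breaking rules of Definitions~\ref{def:Rf} and~\ref{def:ruletie} are needed (for the lexicographic and $|N^+_{\mathcal{A}}|$ components to transfer to the restriction). Everything else is a routine telescoping application of resource-monotonicity, and one does not need to invoke \texttt{AddOneGood} explicitly in the proof --- it suffices that resource-monotonicity has already been established, since this is what justifies each single-good step of the chain from $\mathcal{J}$ to $\mathcal{I}$.
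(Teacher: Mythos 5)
Your proposal is correct, and its skeleton is the same as the paper's: restrict the allocation of the enlarged instance to the original agents via Lemma~\ref{lemma:subset-therule}, then reinstate the leftover goods one at a time and argue that no original agent's utility can drop at any step. The only difference is the justification of the per-good step: the paper runs the \texttt{AddOneGood} subroutine (which returns an \ruletie{} allocation and never shrinks any bundle), whereas you telescope Theorem~\ref{thm:resource_monotonicity} directly across the chain of intermediate instances, using that all \ruletie{} allocations of a given instance share the same utility vector. Both are valid; your route is slightly more self-contained since it bypasses the algorithm's correctness proof (which itself rests on resource-monotonicity), while the paper's route has the small bonus of being constructive. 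One cosmetic point: the goods you add back are all of $G\setminus S$, i.e.\ the new agent's bundle together with any goods left unallocated by $\mathcal{A}'$, which your formulation handles correctly.
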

\begin{proof}
    Assume that $f$ is concave.
    Consider an \ruletie{} allocation $\mathcal{A}$ for $n+1$ agents. Let $S$ be any subset of $n$ agents, and let $i$ be the agent not in $S$. 
    By Lemma~\ref{lemma:subset-therule}, $\mathcal{A}_S$ is an \ruletie{} allocation for the agents in~$S$. 
    Then, with the bundle of goods remaining to be allocated being that of agent $i$, iteratively use the subroutine \texttt{AddOneGood} to allocate all of these goods, giving us an \ruletie{} allocation of the original set of goods to the agents in~$S$. 
    Since \texttt{AddOneGood} never decreases the size of any agent's bundle, which is equal to the agent's utility (due to non-redundancy), population-monotonicity is satisfied.
\end{proof} 

The final monotonicity property that we consider is weight-monotonicity, which states that if the weight of some agent increases, then the agent's utility should not decrease.

\begin{definition}[Weight-monotonicity]
    An allocation rule $\mathcal{F}$ is \emph{weight-monotone} if the following holds: For any two instances $\mathcal{I}$ and $\mathcal{I'}$ such that $\mathcal{I}'$ can be obtained from $\mathcal{I}$ by increasing the weight of an agent $i \in N$, if $\mathcal{F}(\mathcal{I}) = \mathcal{A}$ and $\mathcal{F}(\mathcal{I}') = \mathcal{A}'$, then $v_i(A_i) \leq v_i(A'_i)$.
\end{definition}

We show next that \ruletie{} satisfies weight-monotonicity.
This result does not require $f$ to be concave, but we still assume as in Definition~\ref{def:Rf} that $f$ is strictly increasing.

\begin{theorem}
\label{thm:weight-mon}
	Under matroid-rank valuations, the rule \emph{\ruletie{}} with any function $f$ is weight-monotone.
\end{theorem}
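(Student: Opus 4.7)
The plan is to use the classical exchange-of-optima argument that underlies welfarist weight-monotonicity; notably, concavity of $f$ plays no role and only strict monotonicity is needed. Let $i^\star$ denote the agent whose weight is increased, and write $w$ (resp.\ $w'$) for the weight vector in $\mathcal{I}$ (resp.\ $\mathcal{I}'$), so $w'_{i^\star} > w_{i^\star}$ while $w'_j = w_j$ for all $j \ne i^\star$. Let $\mathcal{A}$ and $\mathcal{A}'$ be the corresponding \ruletie{} allocations; the goal is to show $v_{i^\star}(A_{i^\star}) \le v_{i^\star}(A'_{i^\star})$.

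First I would handle the case $f(0) \ne -\infty$, where \therule{}'s primary (and only) objective is the weighted welfare $\sum_j w_j \cdot f(v_j(\cdot))$, which is always finite. Since the set of feasible allocations in $\mathcal{I}$ and $\mathcal{I}'$ is identical, the optimality of $\mathcal{A}$ in $\mathcal{I}$ gives $\sum_j w_j f(v_j(A_j)) \ge \sum_j w_j f(v_j(A'_j))$, while the optimality of $\mathcal{A}'$ in $\mathcal{I}'$ gives the reverse inequality with weights~$w'$ and the roles of $\mathcal{A}, \mathcal{A}'$ swapped. Adding these two inequalities causes every coordinate $j \ne i^\star$ to cancel (since $w'_j = w_j$ there), leaving $(w'_{i^\star} - w_{i^\star})\bigl(f(v_{i^\star}(A'_{i^\star})) - f(v_{i^\star}(A_{i^\star}))\bigr) \ge 0$; combined with $w'_{i^\star} > w_{i^\star}$ and the strict monotonicity of~$f$, this yields $v_{i^\star}(A_{i^\star}) \le v_{i^\star}(A'_{i^\star})$.

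The main obstacle is the case $f(0) = -\infty$, because \therule{} then imposes additional pre-welfare tie-breaking: it first maximizes $|N^+_\mathcal{A}|$, then (among those) selects the lexicographically smallest $N^+_\mathcal{A}$, and only afterwards maximizes the weighted welfare restricted to $N^+_\mathcal{A}$. The key observation that unlocks the argument is that both pre-welfare criteria depend only on the valuations and agent indices, not on the weights; hence $N^+_\mathcal{A} = N^+_{\mathcal{A}'}$, a common set I call $N^+$. If $i^\star \notin N^+$, then $v_{i^\star}(A_{i^\star}) = 0 \le v_{i^\star}(A'_{i^\star})$ trivially. Otherwise, both $\mathcal{A}$ and $\mathcal{A}'$ lie in the common ``valid class'' of allocations realizing this $N^+$, all the $f$-values appearing in the sum over $N^+$ are finite, and each of $\mathcal{A}, \mathcal{A}'$ is a welfare-maximizer within that class against its own weight vector. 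Re-running the addition-of-inequalities argument---now over $N^+$ rather than $N$---produces the same single-term collapse at coordinate~$i^\star$, and the conclusion follows as before. The final lex-dominating refinement built into \ruletie{} is never invoked and so poses no additional complication.
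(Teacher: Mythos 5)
Your proposal is correct and follows essentially the same route as the paper's proof: the standard revealed-preference argument comparing the two optimality inequalities so that all coordinates except the reweighted agent cancel (the paper phrases it as a contradiction with one strict inequality multiplied by $w'_i-w_i$, which is algebraically the same), together with the observation that when $f(0)=-\infty$ the set $N^+$ of positively served agents is determined by the valuations alone, so the argument can be rerun over $N^+$. The only cosmetic difference is that you case-split on $f(0)=-\infty$ rather than on whether the achieved welfare is $-\infty$, which is harmless since in the remaining sub-case every welfare maximizer has $N^+=N$.
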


\begin{proof}
    Fix an instance $\mathcal{I}$ with a set of agents~$N$ and $\mathbf{w} = (w_1,\dots,w_n)$.
    Consider a modified instance~$\mathcal{I}'$ such that $w'_i > w_i$ for some $i \in N$ and $w'_k = w_k$ for all $k \in N \setminus \{i\}$.
    Let \ruletie{}$(\mathcal{I}) = \mathcal{A}$ and \ruletie{}$(\mathcal{I}')= \mathcal{A}'$.
    Suppose for a contradiction that $v_i(A_i) > v_i(A'_i)$.

    Assume first that $\sum_{j\in N} w_j\cdot f(v_j(A_j)) \ne -\infty$.
    Since the valuation functions in $\mathcal{I}$ and $\mathcal{I}'$ are the same, we must also have $\sum_{j\in N} w'_j\cdot f(v_j(A'_j)) \ne -\infty$.
    Because $v_i(A_i) > v_i(A'_i)$ and $f$ is strictly increasing, we have that 
    \begin{equation} \label{eqn:weightmon_increasingcond}
        f(v_i(A_i)) > f(v_i(A'_i)).
    \end{equation}
    Also, $w'_i - w_i > 0$. Multiplying $w'_i - w_i$ on both sides of (\ref{eqn:weightmon_increasingcond}), we get
    \begin{equation} \label{eqn:weight_monotonicity_difference}
        (w'_i - w_i) \cdot f(v_i(A_i)) > (w'_i - w_i) \cdot f(v_i(A'_i)).
    \end{equation}
    Note that
    \begin{align*}
        &(w'_i - w_i) \cdot f(v_i(A_i)) \\
        &= \left( w'_i \cdot f(v_i(A_i)) + \sum_{k\in N \setminus \{i\}} w_k \cdot f(v_k(A_k)) \right) - \left( w_i \cdot f(v_i(A_i)) + \sum_{k\in N \setminus \{i\}} w_k \cdot f(v_k(A_k)) \right)
    \end{align*}
    and
    \begin{align*}
        &(w'_i - w_i) \cdot f(v_i(A'_i))\\
        &= \left( w'_i \cdot f(v_i(A'_i)) + \sum_{k\in N \setminus \{i\}} w_k \cdot f(v_k(A'_k)) \right) - \left( w_i \cdot f(v_i(A'_i)) + \sum_{k\in N \setminus \{i\}} w_k \cdot f(v_k(A'_k)) \right).
    \end{align*}
    Using (\ref{eqn:weight_monotonicity_difference}) and rearranging, we get
    \begin{align*}
        &\left( w'_i \cdot f(v_i(A_i)) + \sum_{k\in N \setminus \{i\}} w_k \cdot f(v_k(A_k)) \right)
        - \left(w'_i \cdot f(v_i(A'_i)) + \sum_{k\in N \setminus \{i\}} w_k \cdot f(v_k(A'_k)) \right) \\
        &> \left( w_i \cdot f(v_i(A_i)) + \sum_{k\in N \setminus \{i\}} w_k \cdot f(v_k(A_k)) \right) - \left( w_i \cdot f(v_i(A'_i)) + \sum_{k\in N \setminus \{i\}} w_k \cdot f(v_k(A'_k)) \right) 
        \ge 0,
    \end{align*}
    where the latter inequality holds because $\mathcal{A}$ is an \ruletie{} allocation in $\mathcal{I}$ and is preferred to $\mathcal{A}'$ under~$\mathcal{I}$.
    This means that in $\mathcal{I}'$, the allocation $\mathcal{A}$ is preferred to $\mathcal{A}'$, a contradiction with \ruletie{}$(\mathcal{I}')= \mathcal{A}'$.

    Next, assume that $\sum_{j\in N} w_j\cdot f(v_j(A_j)) = -\infty$.
    Due to the tie-breaking in Definition~\ref{def:Rf}, and since the valuation functions in $\mathcal{I}$ and $\mathcal{I}'$ are the same, we have $N^+_{\mathcal{A}} = N^+_{\mathcal{A}'}$.
    Since $v_i(A_i) > v_i(A'_i)$, both $v_i(A_i)$ and $v_i(A'_i)$ are positive.
    We can then apply a similar argument as above with $N$ replaced by $N^+_{\mathcal{A}}$.
\end{proof}

Similarly to Lemma~\ref{lemma:subset-therule}, Theorem~\ref{thm:weight-mon} holds even for arbitrary monotone and normalized valuations, provided that the domain of the function $f$ is extended from $\mathbb{Z}_{\ge 0}$ to $\mathbb{R}_{\ge 0}$.
Indeed, the same proof still works in that more general case.

\section{Group-Strategyproofness}

In this section, we turn our attention to strategyproofness, an important property which states that no agent can strictly benefit by misreporting her preferences. 
A more robust version of strategyproofness is group-strategyproofness, which guarantees that no group of agents can misreport their preferences in such a way that every agent in the group strictly benefits.

\begin{definition}[Group-strategyproofness]
	An allocation rule $\mathcal{F}$ satisfies \emph{group-strategyproofness} if  there do not exist valuation profiles $\mathbf{v}$ and $\mathbf{v}'$ and a nonempty set of agents $S \subseteq N$ such that, if we denote the instances corresponding to $\mathbf{v}$ and $\mathbf{v}'$ by $\mathcal{I}$ and $\mathcal{I}'$, respectively, then
	\begin{itemize}
		\item $v_i(A'_i) > v_i(A_i)$ for all $i \in S$, and
		\item $v_j = v'_j$ for all $j \in N \setminus S$,
	\end{itemize}
    where $\mathcal{A} = \mathcal{F}(\mathcal{I})$ and $\mathcal{A}' = \mathcal{F}(\mathcal{I}')$.
\end{definition}

Prior work has shown that MWNW with lexicographical tie-breaking satisfies group-strategyproofness under binary additive valuations \citep{SuksompongTe22}. 
We strengthen this result by showing that in the matroid-rank setting, a more general class of rules provides the same guarantee.
We also remark that one could consider a stronger version of strategyproofness, where no group of agents can misreport their preferences in such a way that every agent in the group weakly benefits and at least one agent strictly benefits; however, \citet[Prop.~1]{SuksompongTe22} proved that even in the unweighted setting and under binary additive valuations, MNW with lexicographical tie-breaking fails this stronger version.

\begin{theorem}
    Under matroid-rank valuations, the rule \ruletie{} with any concave function $f$ is group-strategyproof.
\end{theorem}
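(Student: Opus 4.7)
The plan is to proceed by contradiction, adapting the strategy of \citet{BarmanVe22} for the unweighted matroid-rank setting and combining it with the concavity-and-tie-breaking endgame from the proof of Theorem~\ref{thm:resource_monotonicity}. Suppose for contradiction that profiles $\mathbf{v},\mathbf{v}'$ and a nonempty $S\subseteq N$ exist with $v_i(A'_i)>v_i(A_i)$ for every $i\in S$ and $v_j=v'_j$ for every $j\notin S$, where $\mathcal{A}=\ruletie{}(\mathcal{I})$ and $\mathcal{A}'=\ruletie{}(\mathcal{I}')$. Using matroid-rank structure, I first ``clean'' $\mathcal{A}'$ under the true profile $\mathbf{v}$: for each $i\in S$ let $B'_i\subseteq A'_i$ be a maximal $v_i$-independent subset, so $|B'_i|=v_i(A'_i)>v_i(A_i)=|A_i|$; for $j\notin S$ set $B'_j=A'_j$. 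The resulting $\mathcal{B}'$ is non-redundant under $\mathbf{v}$ with $S\subseteq\ahigh(\mathcal{B}',\mathcal{A})$. Symmetrically, clean $\mathcal{A}$ under $\mathbf{v}'$ to obtain a non-redundant $\mathcal{B}$; since $|A'_i|\ge v_i(A'_i)>|A_i|\ge v'_i(A_i)=|B_i|$, we have $S\subseteq\alow(\mathcal{B},\mathcal{A}')$.

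Next I would invoke Lemma~\ref{lemma:gsp_BarmanVerma} twice. Picking an arbitrary $h\in S$ and applying the lemma in $\mathcal{I}$ with $\mathcal{X}=\mathcal{B}'$ and the Pareto-optimal non-redundant allocation $\mathcal{A}$ produces some $\ell\in\alow(\mathcal{B}',\mathcal{A})$, which necessarily lies outside $S$, together with a non-redundant allocation $\widehat{\mathcal{A}}$ obtained from $\mathcal{A}$ in which $\ell$'s utility decreases by one and $h$'s increases by one. Because $\ell\notin S$ forces $|A_\ell|>|A'_\ell|$ and hence $\ell\in\ahigh(\mathcal{B},\mathcal{A}')$, applying the lemma again in $\mathcal{I}'$ with $\mathcal{X}=\mathcal{B}$, the Pareto-optimal non-redundant $\mathcal{A}'$, and the high-agent role played by $\ell$ yields some low agent together with a non-redundant $\widehat{\mathcal{A}}'$ obtained from $\mathcal{A}'$ in which $\ell$'s utility rises by one and the low agent's drops by one. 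The critical subclaim is that this low agent can be taken to be exactly $h$; this is where the principal difficulty lies, and I would handle it by iterating over the choice of $h\in S$ and using the path-augmentation Lemma~\ref{lemma:gsp_BarmanVerma_path} to route a specific alternating path in $\mathcal{G}(\mathcal{A}')$ from $F_\ell(A'_\ell)$ into $A'_h$.

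Once the aligned pair $(\ell,h)$ is secured, the endgame mirrors the proof of Theorem~\ref{thm:resource_monotonicity}. The \ruletie{}-optimality of $\mathcal{A}$ in $\mathcal{I}$, compared against $\widehat{\mathcal{A}}$, yields
\[
w_\ell\bigl(f(|A_\ell|)-f(|A_\ell|-1)\bigr)\;\ge\; w_h\bigl(f(|A_h|+1)-f(|A_h|)\bigr),
\]
with equality only possible if $\ell<h$. Symmetrically, the \ruletie{}-optimality of $\mathcal{A}'$ in $\mathcal{I}'$ yields
\[
w_h\bigl(f(|A'_h|)-f(|A'_h|-1)\bigr)\;\ge\; w_\ell\bigl(f(|A'_\ell|+1)-f(|A'_\ell|)\bigr),
\]
with equality only possible if $h<\ell$. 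Concavity of $f$ together with $|A'_\ell|<|A_\ell|$ and $|A_h|<|A'_h|$ gives the marginal comparisons $f(|A'_\ell|+1)-f(|A'_\ell|)\ge f(|A_\ell|)-f(|A_\ell|-1)$ and $f(|A_h|+1)-f(|A_h|)\ge f(|A'_h|)-f(|A'_h|-1)$. Chaining the four inequalities forces every one of them to be tight, contradicting the mutually exclusive index conditions $\ell<h$ and $h<\ell$. The $f(0)=-\infty$ case is handled by preliminary size claims on $|A_\ell|,|A_h|,|A'_\ell|,|A'_h|$ analogous to Claims~1--4 in the proof of Theorem~\ref{thm:resource_monotonicity}.

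The main obstacle, as indicated, is coupling the two applications of Lemma~\ref{lemma:gsp_BarmanVerma} so that the same pair $(\ell,h)$ appears in both swaps: the first application fixes $h\in S$ and discovers $\ell$, while the second fixes $\ell$ and must yield precisely $h$ as its low agent. This is delicate because, for $i\in S$, the valuations $v_i$ and $v'_i$ can genuinely differ, so the exchange graphs under $\mathbf{v}$ and under $\mathbf{v}'$ have different structures; Lemma~\ref{lemma:gsp_BarmanVerma_path} is likely the right tool for threading the same pair through both.
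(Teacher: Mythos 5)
You have correctly isolated the crux, but the crux is exactly what your proposal leaves unproved, and your suggested fix is unlikely to work as stated. After the first application of Lemma~\ref{lemma:gsp_BarmanVerma} (in $\mathcal{I}$, with the cleaned lie-allocation as $\mathcal{X}$ and $\mathcal{A}$ as the Pareto-optimal allocation) produces the pair $(\ell,h)$, nothing forces the second application (in $\mathcal{I}'$, with $\ell$ in the high role) to return $h$ as its low agent: the lemma only guarantees \emph{some} agent of $\alow(\mathcal{B},\mathcal{A}')$, which need not even lie in $S$, and ``iterating over $h\in S$'' does not help because for every choice of $h$ the second swap may keep landing on a different agent. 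Likewise, there is no reason a path in $\mathcal{G}(\mathcal{A}')$ from $F_\ell(A'_\ell)$ must reach $A'_h$ specifically, and any attempt to build such a path under $\mathbf{v}'$ runs into the difficulty you yourself flag, namely that the exchange graphs under $\mathbf{v}$ and $\mathbf{v}'$ differ on the bundles of agents in $S$. So the endgame chain of four inequalities, which presupposes the aligned pair $(\ell,h)$ on both sides, rests on an unestablished step.

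The paper resolves this by \emph{not} insisting on alignment. It chooses $h\in S$ up front as the agent maximizing the weighted marginal $w_i\bigl(f(|A_i^\lie|)-f(|A_i^\lie|-1)\bigr)$ over $S$ (lowest index among maximizers), applies Lemma~\ref{lemma:gsp_BarmanVerma_path} once with this $h$ to get $\ell\notin S$ together with the truth-side swap, and then, instead of a second application of Lemma~\ref{lemma:gsp_BarmanVerma}, takes a \emph{shortest} path in $\mathcal{G}(\widetilde{\mathcal{A}}^\lie)$ (under $\mathbf{v}$) from $F_\ell(\widetilde{A}_\ell^\lie)$ to the union $\bigcup_{i\in S}\widetilde{A}_i^\lie$. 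Minimality ensures every intermediate good belongs to a truthful agent, so the path survives in the exchange graph of $\mathcal{A}^\lie$ under $\mathbf{v}'$ and yields, via Lemma~\ref{lemma:gsp_schrijver}, a swap between $\ell$ and some possibly different agent $b\in S$. The mismatch between $b$ and $h$ is then absorbed by the extremal choice of $h$, giving the extra inequality (\ref{eqn:gsp_hb}) $w_b\bigl(f(|A_b^\lie|)-f(|A_b^\lie|-1)\bigr)\le w_h\bigl(f(|A_h^\lie|)-f(|A_h^\lie|-1)\bigr)$, and the lexicographic tie-breaking closes the chain through the impossible cycle $\ell<h\le b<\ell$. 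If you want to salvage your outline, replace the attempted alignment with this three-agent argument: the concavity endgame you describe is essentially right, but it needs the extra marginal comparison between $b$ and $h$ and the corresponding third tie-breaking condition.
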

\begin{proof}
    Assume that $f$ is concave.
    Suppose for a contradiction that there exist valuation profiles $\mathbf{v}$ and~$\mathbf{v}'$ and a nonempty set of agents $S \subseteq N$ such that 
    \begin{enumerate}[label=(\roman*)]
        \item $v_i(A_i^\lie) > v_i(A_i^\truth)$ for all $i \in S$, and
        \item $v_j = v'_j$ for all $j \in N \setminus S$,
    \end{enumerate}
    where $\mathcal{A}^\truth$ is an \ruletie{} allocation under the truthful valuation profile $\mathbf{v}$, and $\mathcal{A}^\lie$ is an \ruletie{} allocation under the valuation profile $\mathbf{v}'$, which coincides with $\mathbf{v}$ except possibly for agents in $S$.
    Let $\widetilde{\mathcal{A}}^\lie$ be a non-redundant allocation with respect to $\mathbf{v}$ such that $\widetilde{A}_k^\lie \subseteq A_k^\lie$ and $v_k(\widetilde{A}_k^\lie) = v_k(A_k^\lie)$ for each $k \in N$.
    This means that
    \begin{equation} \label{eqn:gsp_concave_nonredundantInLie}
        |A_k^\lie| \geq v_k(A_k^\lie) = v_k(\widetilde{A}_k^\lie) = |\widetilde{A}_k^\lie|
    \end{equation} 
    for all $k\in N$.
    Note that $\mathcal{A}^\truth$ is non-redundant with respect to $\mathbf{v}$ by definition of \ruletie{}.
    Combining this fact together with (i), we have that for all $i \in S$, 
    \begin{equation} \label{eqn:gsp_concave_nonredundant}
        |\widetilde{A}_i^\lie| = v_i(\widetilde{A}_i^\lie) = v_i(A_i^\lie) > v_i(A_i^\truth) = |A_i^\truth|.
    \end{equation}
    In other words, $S\subseteq \ahigh(\widetilde{\mathcal{A}}^\lie,\mathcal{A}^\truth)$.
    Let $C$ be the set containing all agents $i\in S$ who maximize the expression
    \begin{equation*}
         w_i \cdot f(|A_i^\lie|) - w_i \cdot f(|A_i^\lie| - 1).
    \end{equation*}
    among all agents in $S$.
    Note that the expression above is valid because, by (\ref{eqn:gsp_concave_nonredundantInLie}) and (\ref{eqn:gsp_concave_nonredundant}), $|A_i^\lie| \geq 1$ for all $i\in S$.\footnote{It is possible that $w_i \cdot f(|A_i^\lie|) - w_i \cdot f(|A_i^\lie| - 1) = \infty$, if $|A_i^\lie| = 1$ and $f(0) = -\infty$.}
    Let $h$ denote the agent with the smallest index among all agents in $C$; in particular, $h\in S$.
    Since $S\subseteq \ahigh(\widetilde{\mathcal{A}}^\lie,\mathcal{A}^\truth)$, we also have $h\in \ahigh(\widetilde{\mathcal{A}}^\lie,\mathcal{A}^\truth)$.
    The remainder of this proof will take the perspective of $\mathbf{v}$ unless otherwise stated.

    Observe that $\widetilde{\mathcal{A}}^\lie$ is non-redundant and $\mathcal{A}^\truth$ is non-redundant and Pareto-optimal (by definition of an \ruletie{} allocation).
    Thus, by Lemma \ref{lemma:gsp_BarmanVerma_path}, for the agent $h\in \ahigh(\widetilde{\mathcal{A}}^\lie,\mathcal{A}^\truth)$, there exists an agent $\ell \in \alow(\widetilde{\mathcal{A}}^\lie, \mathcal{A}^\truth)$ along with a simple directed path $P = (g_t,\dots,g_1)$ in the exchange graph $\mathcal{G}(\widetilde{\mathcal{A}}^\lie)$ such that the source vertex is $g_t \in A_\ell^\truth \cap F_\ell(\widetilde{A}_\ell^\lie)$ and the sink vertex is $g_1 \in \widetilde{A}_h^\lie$.

    Let $Q$ be a shortest path from $F_\ell(\widetilde{A}_\ell^\lie)$ to $\bigcup_{i \in S} \widetilde{A}_i^\lie$; the existence of $P$ guarantees that such a path exists.
    Suppose that $Q$ ends at $\widetilde{A}_b^\lie$ for some $b \in S$ (possibly $b = h$). 
    Then, for the path $Q$ in $\mathcal{G}(\widetilde{\mathcal{A}}^\lie)$, only its sink vertex lies in $\bigcup_{i \in S} \widetilde{A}_i^\lie$.
    In particular, all owners of goods on $Q$ besides $b$ belong to $N \setminus S$.
    
    Now, for every agent $j \in N \setminus S$, we have that $v_j = v'_j$ and so $\widetilde{A}_j^\lie = A_j^\lie$.
    This means that the path~$Q$ also exists in the exchange graph $\mathcal{G}(\mathcal{A}^\lie)$, where we define $\mathcal{G}(\mathcal{A}^\lie)$ with respect to $\mathbf{v}'$ (rather than to $\mathbf{v}$).  
    Note also that $\widetilde{A}_b^\lie \subseteq A_b^\lie$.
    Under $\mathbf{v}'$, since $\mathcal{A}^\lie$ is non-redundant, by considering a shortest path from $F_\ell(A_\ell^\lie)$ to $A_b^\lie$ and applying Lemma~\ref{lemma:gsp_schrijver}, we obtain an allocation $\mathcal{X}^\lie$ that is non-redundant with respect to $\mathbf{v}'$ such that  
    \begin{equation*}
        |X_\ell^\lie| = |A_\ell^\lie|+1, |X_b^\lie| = |A_b^\lie|-1, \text{ and } |X_k^\lie| = |A_k^\lie| \text{ for all } k \in N \setminus \{\ell,b\}.
    \end{equation*}
    Also, by our choice of $h$, we have
    \begin{equation} \label{eqn:gsp_hb}
        w_b \cdot f(|A_b^\lie|) - w_b \cdot f(|A_b^\lie|-1) \leq w_h \cdot f(|A_h^\lie|) - w_h \cdot f(|A_h^\lie|-1),
    \end{equation}
    where equality holds only if $h\le b$.
    
    Next, the second condition of Lemma~\ref{lemma:gsp_BarmanVerma_path} implies that, from $\mathcal{A}^\truth$, we can obtain an allocation $\mathcal{X}^\truth$ that is non-redundant with respect to $\mathbf{v}$ such that
    \begin{equation*}
        |X_\ell^\truth| = |A_\ell^\truth| - 1, |X_h^\truth| = |A_h^\truth| + 1, \text{ and } |X_k^\truth| = |A_k^\truth| \text{ for all } k \in N \setminus \{\ell,h\}.
    \end{equation*}
    Moreover, since $\ell \in \alow(\widetilde{\mathcal{A}}^\lie,\mathcal{A}^\truth)$ and $S\subseteq \ahigh(\widetilde{\mathcal{A}}^\lie,\mathcal{A}^\truth)$, we have $\ell\in N\setminus S$.
    This means that $\widetilde{A}_\ell^\lie = A_\ell^\lie$, and so
    \begin{equation}\label{eqn:gsp_tiebreak_L}
        |A_\ell^\lie| = |\widetilde{A}_\ell^\lie| < |A_\ell^\truth|.
    \end{equation}
    Also, since $b \in S \subseteq \ahigh(\widetilde{\mathcal{A}}^\lie,\mathcal{A}^\truth)$,
    \begin{equation}\label{eqn:gsp_tiebreak_B}
        |A_b^\lie| \geq |\widetilde{A}_b^\lie| > |A_b^\truth|.
    \end{equation}
    Similarly, since $h \in S \subseteq  \ahigh(\widetilde{\mathcal{A}}^\lie,\mathcal{A}^\truth)$, 
    \begin{equation}\label{eqn:gsp_tiebreak_H}
        |A_h^\lie| \geq |\widetilde{A}_h^\lie| > |A_h^\truth|.
    \end{equation}

    Assume first that the function $f$ satisfies $f(0) = -\infty$. We prove five claims. 
    \begin{description}
        \item[Claim 1: $|A^\truth_h| \geq 1$.] 
        Assume for a contradiction that $|A^\truth_h| = 0$.
        We know from (\ref{eqn:gsp_tiebreak_L}) that $|A^\truth_\ell| \geq 1$. If $|A^\truth_\ell| > 1$, then the allocation $\mathcal{X}^\truth$ with utility vector\footnote{Even though we write agent~$\ell$'s utility before agent~$h$'s in the utility vector, the actual order would be reversed if $h < \ell$.} $(|A^\truth_\ell| - 1, |A^\truth_h| + 1)$ for agents $\ell$ and $h$ has strictly more agents receiving positive utility under $\mathbf{v}$ than $\mathcal{A}^\truth$, thereby contradicting the fact that $\mathcal{A}^\truth$ is an \ruletie{} allocation in $\mathbf{v}$.
        Thus, $|A^\truth_\ell| = 1$, which by (\ref{eqn:gsp_tiebreak_L}) means $|A^\lie_\ell| = 0$.
        Since the utility vector $(|A^\truth_\ell|, |A^\truth_h|)$ is preferred to $(|A^\truth_\ell|-1, |A^\truth_h|+1)$ because $\mathcal{A}^\truth$ is an \ruletie{} allocation, it must be that $\ell < h$.
        
        On the other hand, we know from (\ref{eqn:gsp_tiebreak_B}) that $|A^\lie_b| \geq |\widetilde{A}^\lie_b| \geq 1$. 
        Recall from the previous paragraph that $|A^\lie_\ell| = 0$.
        If $|A^\lie_b| > 1$, then the allocation $\mathcal{X}^\lie$ with  utility vector $(|A^\lie_\ell| + 1, |A^\lie_b| - 1)$ for agents $\ell$ and $b$ has strictly more agents receiving positive utility under $\mathbf{v}'$ than $\mathcal{A}^\lie$, thereby contradicting the fact that $\mathcal{A}^\lie$ is an \ruletie{} allocation in $\mathbf{v}'$. 
        Thus, $|A^\lie_b| = 1$.
        Since the utility vector $(|A^\lie_\ell|, |A^\lie_b|)$ is preferred to $(|A^\lie_\ell| + 1, |A^\lie_b| - 1)$ because $\mathcal{A}^\lie$ is an \ruletie{} allocation under $\mathbf{v}'$, we must have that $b < \ell$.
        Combined with the previous paragraph, this gives us $b < h$.
        However, since $|A^\lie_b| = 1$ and $f(0) = -\infty$, this contradicts the definition of $h$.
        Hence, it must be that $|A^\truth_h| \geq 1$.

        \item[Claim 2: $|A^\lie_h| \geq 2$.]
        This follows immediately from (\ref{eqn:gsp_tiebreak_H}) and Claim 1.

        \item[Claim 3: $|A^\lie_b| \geq 2$.]
        By Claim 2, $f(|A_h^\lie|) \neq -\infty$ and $f(|A_h^\lie|-1) \neq -\infty$.
        We know from (\ref{eqn:gsp_tiebreak_B}) that $|A_b^\lie| \geq 1$.
        Assume for a contradiction that $|A_b^\lie| = 1$.
        Then, $f(|A_b^\lie|) \neq -\infty$ and $f(|A_b^\lie|-1) = f(0)= -\infty$, which contradicts the definition of $h$.
        Hence, $|A^\lie_b| \geq 2$.
        
        \item[Claim 4: $|A^\lie_\ell| \geq 1$.]
        Assume for a contradiction that $|A^\lie_\ell| = 0$.
        By Claim 3, the allocation  $\mathcal{X}^\lie$ with utility vector $(|A^\lie_\ell| + 1, |A^\lie_b|-1)$ for agents $\ell$ and $b$ has strictly more agents receiving positive utility than $\mathcal{A}^\lie$ under $\mathbf{v}'$, thereby contradicting the fact that $\mathcal{A}^\lie$ is an \ruletie{} allocation in $\mathbf{v}'$.
        Hence, it must be that $|A^\lie_\ell| \geq 1$.

        \item[Claim 5: $|A^\truth_\ell| \geq 2$.]
        This follows immediately from (\ref{eqn:gsp_tiebreak_L}) and Claim 4.
    \end{description}
    Thus, we can assume henceforth that either $f(0) \neq -\infty$, or if $f(0) = -\infty$, then
    \begin{equation*}
        |A^\truth_\ell| \geq 2, \quad |A^\truth_h| \geq 1, \quad |A^\lie_\ell| \geq 1, \quad
        |A^\lie_h| \geq 2, \quad
        |A^\lie_b| \geq 2.
    \end{equation*}
        
    Under $\mathbf{v}'$, since $\mathcal{A}^\lie$ is an \ruletie{} allocation, and the rule \ruletie{} could have chosen the allocation $\mathcal{X}^\lie$, it must be that 
    \begin{equation} \label{gsp_concave_lie}
    \begin{split}
        w_\ell \cdot f(|A_\ell^\lie|) + w_b \cdot f(|A_b^\lie|) & \geq w_\ell \cdot f(|X_\ell^\lie|) + w_b \cdot f(|X_b^\lie|) \\
        & = w_\ell \cdot f(|A_\ell^\lie|+1) + w_b \cdot f(|A_b^\lie|-1),
    \end{split}
    \end{equation}
    where equality holds only if $b < \ell$.
    Note that by our claims, if $f(0) = -\infty$, then none of the terms $|A_\ell^\lie|, |A_b^\lie|, |A_\ell^\lie|+1, |A_b^\lie|-1$ can be $0$.
    
    Similarly, under $\mathbf{v}$, since $\mathcal{A}^\truth$ is an \ruletie{} allocation, and the rule \ruletie{} could have chosen the allocation $\mathcal{X}^\truth$, it must be that
    \begin{equation} \label{gsp_concave_truth}
    \begin{split}
        w_\ell \cdot f(|A_\ell^\truth|) + w_h \cdot f(|A_h^\truth|) 
        & \geq w_\ell \cdot f(|X_\ell^\truth|) + w_h \cdot f(|X_h^\truth|) \\
        & = w_\ell \cdot f(|A_\ell^\truth|-1) + w_h \cdot f(|A_h^\truth|+1),
    \end{split}
    \end{equation}
    where equality holds only if $\ell < h$.
    Note that by our claims, if $f(0) = -\infty$, then none of the terms $|A_\ell^\truth|, |A_h^\truth|, |A_\ell^\truth|-1, |A_h^\truth|+1$ can be $0$.
    
    Now, by (\ref{eqn:gsp_tiebreak_L}) and the concavity of $f$, we have
    \begin{equation*}
    f(|A_\ell^\lie|+1) - f(|A_\ell^\lie|) \ge f(|A_\ell^\truth|) - f(|A_\ell^\truth| - 1),
    \end{equation*}
    which means that
    \begin{equation} \label{gsp_concave_1}
        f(|A_\ell^\truth| - 1) - f(|A_\ell^\truth|) \geq f(|A_\ell^\lie|) - f(|A_\ell^\lie|+1).
    \end{equation}
    Similarly, by (\ref{eqn:gsp_tiebreak_H}) and the concavity of $f$, we have
    \begin{equation} \label{gsp_concave_2}
        f(|A_h^\truth|+1) - f(|A_h^\truth|) \geq f(|A_h^\lie|) - f(|A_h^\lie| - 1),
    \end{equation}
    where by Claim~2, if $f(0) = -\infty$ then $|A_h^\lie| - 1 > 0$.
    It follows that
    \begin{align*}
        0 &\geq w_\ell \cdot f(|A_\ell^\truth|-1) - w_\ell \cdot f(|A_\ell^\truth|) + w_h \cdot f(|A_h^\truth|+1) - w_h \cdot f(|A_h^\truth|) \\
        &= w_\ell \cdot [f(|A_\ell^\truth|-1) - f(|A_\ell^\truth|)] + w_h \cdot [f(|A_h^\truth|+1) - f(|A_h^\truth|)]\\
        &\geq w_\ell \cdot [f(|A_\ell^\lie|) - f(|A_\ell^\lie|+1)] + w_h \cdot [f(|A_h^\lie|) - f(|A_h^\lie| - 1)]\\
        &= w_\ell \cdot f(|A_\ell^\lie|) - w_\ell \cdot f(|A_\ell^\lie|+1) + w_h \cdot f(|A_h^\lie|) - w_h \cdot f(|A_h^\lie|-1) \\
        &\geq w_\ell \cdot f(|A_\ell^\lie|) - w_\ell \cdot f(|A_\ell^\lie|+1) + w_b \cdot f(|A_b^\lie|) - w_b \cdot f(|A_b^\lie|-1) \\
        &\geq 0,
    \end{align*}
    where the first inequality holds by (\ref{gsp_concave_truth}), the second inequality by (\ref{gsp_concave_1}) and (\ref{gsp_concave_2}), 
    the third inequality by (\ref{eqn:gsp_hb}),
    and the last inequality by (\ref{gsp_concave_lie}).
    Note also that the first inequality can be an equality only if $\ell < h$, the third inequality only if $h \le b$, and the last inequality only if $b < \ell$. 
    Hence, the three inequalities can become equalities simultaneously only if $\ell < h \le b < \ell$, which is impossible.
    This gives us the final contradiction and completes the proof.
\end{proof}

\section{Conclusion}

In this work, we have shown that for agents with matroid-rank valuations and arbitrary entitlements, the class of weighted additive welfarist rules with 
 concave functions exhibits desirable monotonicity and strategyproofness properties, thereby extending previous results on the maximum weighted Nash welfare (MWNW) rule and binary additive valuations \citep{SuksompongTe22}.
Combined with the results of \citet{MontanariScSu22}, our findings strengthen the case for the maximum weighted harmonic welfare (MWHW) rule and its variants based on modified harmonic numbers, especially in comparison to MWNW, in both the matroid-rank and the binary additive setting (see further discussion in Section~\ref{sec:intro}).
We also show in Appendix~\ref{app:XOS-subadditive} that several of our positive results cease to hold if one moves from matroid-rank valuations to the more general classes of \emph{binary XOS} and \emph{binary subadditive} valuations.

Possible future directions include exploring other classes of valuation functions such as \emph{$2$-value functions} \citep{AmanatidisBiFi21,AkramiChHo22} and \emph{restricted additive valuations} (also called \emph{generalized binary valuations}) \citep{AkramiReSe22,CamachoFePe23}, both of which also extend binary additive valuations,\footnote{In Appendix~\ref{app:restricted-additive}, we show that under restricted additive valuations, even in the unweighted setting, MWNW---which reduces to the maximum Nash welfare (MNW) rule---fails resource- and population-monotonicity as well as (individual) strategyproofness.} and examining weighted additive welfarist rules with non-concave functions.
Investigating monotonicity and strategyproofness properties of allocation rules beyond weighted additive welfarist rules, including more general weighted welfarist rules which are not necessarily additive, or introducing constraints on the permissible allocations \citep{Suksompong21} may reveal interesting insights as well.

\subsection*{Acknowledgments}

This work was partially supported by the Singapore Ministry of Education under grant number MOE-T2EP20221-0001 and by an NUS Start-up Grant.
We thank Vignesh Viswanathan and Yair Zick for helpful discussions, and the anonymous reviewers for thoughtful comments and suggestions.

\bibliographystyle{plainnat}
\bibliography{main}

\appendix

\section{Binary XOS and Binary Subadditive Valuations}
\label{app:XOS-subadditive}

In this appendix, we show that several positive results that we demonstrated for matroid-rank (i.e., binary submodular) valuations cannot be extended to the more general classes of binary XOS and binary subadditive valuations.
These classes have been studied, for example, by \citet{BabaioffEzFe21-dichotomous} and \citet{BarmanVe21-XOS}; we first recall their definitions.

\begin{definition}
A valuation function $v$ is said to be 
\begin{itemize}
\item \emph{binary} if $v(G'\cup\{g\}) - v(G')\in \{0,1\}$ for all $G'\subseteq G$ and $g\in G\setminus G'$;
\item \emph{XOS} (also known as \emph{fractionally subadditive}) if there exists a collection of additive valuation functions $h_1,\dots,h_k$ such that $v(G') = \max_{j=1}^k h_j(G')$ for all $G'\subseteq G$;
\item \emph{subadditive} if $v(G') + v(G'') \ge v(G'\cup G'')$ for all $G', G'' \subseteq G$.
\end{itemize}
The function $v$ is said to be \emph{binary XOS} if it is both binary and XOS, and \emph{binary subadditive} if it is both binary and subadditive.
\end{definition}

It is well-known that every matroid-rank function is binary XOS and every binary XOS function is binary subadditive.
We establish in the subsequent propositions that, even in the unweighted setting, MNW fails resource- and population-monotonicity as well as (individual) strategyproofness, regardless of tie-breaking.

\begin{proposition}
\label{prop:XOS-resmon}
    In the unweighted setting, under binary XOS valuations, MNW is not resource-monotone regardless of tie-breaking.
\end{proposition}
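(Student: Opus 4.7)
The plan is to exhibit an explicit instance pair $(\mathcal{I}, \mathcal{I}')$ that demonstrates the failure of resource-monotonicity in a way robust to tie-breaking. The construction will rely on a binary XOS valuation that is not submodular, and therefore lies strictly outside the matroid-rank class to which Theorem~\ref{thm:resource_monotonicity} applies. A canonical source of such valuations is the \emph{matching valuation} on the path graph $P_4$: three edges $e_1, e_2, e_3$ are taken as goods, and $v(S)$ is defined to be the size of a maximum matching of $S$. This valuation is binary XOS, since the XOS clauses may be taken to be the indicator vectors of the matchings of $P_4$; however, the marginal of $e_1$ with respect to $\{e_2\}$ is $0$, while its marginal with respect to $\{e_2, e_3\}$ is $1$, witnessing non-submodularity (and therefore non-matroid-rank).

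I would first set up $\mathcal{I}$ with a small number of agents (likely two), combining a non-submodular matching valuation with an additive or otherwise simple counterpart chosen to pin down the MNW allocations as tightly as possible. After enumerating all MNW allocations in $\mathcal{I}$, I would characterize the set of utility vectors they induce. Then I would define $\mathcal{I}'$ by introducing one carefully chosen extra good whose addition exploits the non-submodular structure, creating complementarity with the existing goods that forces MNW strictly to prefer a reallocation pulling some goods away from a specific agent~$i^\star$. After enumerating the MNW allocations in $\mathcal{I}'$, I would argue that every one of them gives $i^\star$ strictly less utility than she receives in any MNW allocation of $\mathcal{I}$.

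The main obstacle will be verifying that the violation holds uniformly across all tie-breaking rules, since the proposition demands ``regardless of tie-breaking''. Concretely, this amounts to establishing two claims: (i) in $\mathcal{I}$ every MNW allocation assigns agent~$i^\star$ utility at least some value $u^\star$, and (ii) in $\mathcal{I}'$ no MNW allocation assigns agent~$i^\star$ utility $\geq u^\star$. Claim (ii) is the harder one: it requires showing that every allocation of $\mathcal{I}'$ that would preserve $i^\star$'s utility at $u^\star$ is strictly dominated in Nash product by some other allocation that reduces $i^\star$'s utility, so that MNW can never select it. The non-submodularity of the matching valuation is precisely what enables this strict dominance, because the exchange property on which the proof of Theorem~\ref{thm:resource_monotonicity} hinges—namely Lemma~\ref{lemma:gsp_BarmanVerma}—fails for non-matroid-rank valuations, and so the compensatory reallocation used there is unavailable in our setting.
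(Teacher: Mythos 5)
There is a genuine gap: what you have written is a plan for a proof, not a proof. The entire content of this proposition is an explicit counterexample together with a verification that the violation survives every tie-breaking, and you never produce either. You do not specify the second agent's valuation, the goods, or the extra good; you do not enumerate the MNW allocations of $\mathcal{I}$ or $\mathcal{I}'$; and you explicitly defer the two claims you yourself identify as essential --- (i) a lower bound $u^\star$ on agent~$i^\star$'s utility in \emph{every} MNW allocation of $\mathcal{I}$, and (ii) the statement that \emph{no} MNW allocation of $\mathcal{I}'$ gives $i^\star$ utility at least $u^\star$. Claim (ii) in particular is exactly where all the work lies, and asserting that the non-submodularity of the $P_4$ matching valuation ``is precisely what enables this strict dominance'' is not an argument: the failure of Lemma~\ref{lemma:gsp_BarmanVerma} outside the matroid-rank class only removes one proof technique for the positive result; it does not by itself yield any instance in which MNW misbehaves. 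Without the concrete numbers and the case analysis, the proposition is unproven.

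For comparison, the paper's proof carries out precisely the program you outline, but with a different (and fully verified) gadget: two agents and ten goods, where agent~$1$ is binary additive on $g_1,\dots,g_6$ and agent~$2$'s binary XOS valuation is capped at $3$ on large bundles unless \emph{both} $g_6$ and $g_{10}$ are present, in which case it reaches $4$ (a complementarity between $g_6$ and $g_{10}$ that witnesses non-submodularity, with XOS-ness certified by an explicit family of additive clauses). A short computation shows that with all ten goods every MNW allocation gives agent~$1$ utility exactly $5$ (since $6\cdot 3 < 5\cdot 4$), while after deleting $g_{10}$ the unique MNW allocation gives agent~$1$ utility $6$; this settles both of your claims (i) and (ii) by direct enumeration. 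Your $P_4$ matching valuation is indeed binary XOS and non-submodular, so a counterexample along your lines may well exist, but until you fix the companion agent, the added good, and check the Nash products of all candidate allocations in both instances, the argument is incomplete.
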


\begin{proof}
    Consider an instance with two agents and ten goods.
    Agent~$1$ has a binary additive valuation~$v_1$ with value~$1$ on each of $g_1,\dots,g_6$ and $0$ on $g_7,\dots,g_{10}$.
    Agent~$2$'s valuation is such that
    \[ 
    v_2(G') = 
    \begin{cases} 
    
      |G'| & \text{ if } |G'| \leq 3; \\
      3 & \text{ if }  |G'| > 3 \text{ and at least one of } g_6,g_{10} \text{ is not in } G'; \\
      4 & \text{ if } |G'| > 3 \text{ and both } g_6,g_{10} \text{ are in } G'. 
   \end{cases}
    \]
    It is clear that $v_2$ is binary; we claim that it is also XOS.
    To this end, for $x,y,z\subseteq\{1,2,\dots,10\}$ such that $x < y < z$, let $h_{x,y,z}$ be a binary additive function with value~$1$ on each of $g_x,g_y,g_z$ and $0$ on all remaining goods.
    Moreover, for $x,y\in\{1,2,3,4,5,7,8,9\}$ such that $x < y$, let $h_{x,y,6,10}$ be a binary additive function with value~$1$ on each of $g_x,g_y,g_6,g_{10}$ and $0$ on all remaining goods.
    One can check that for any $G'\subseteq G$, the value $v_2(G')$ is equal to the maximum among $h_{x,y,z}(G')$ and $h_{x,y,6,10}(G')$ over all functions $h_{x,y,z}$ and $h_{x,y,6,10}$ that we defined.
    This means that $v_2$ is binary XOS.\footnote{On the other hand, note that $v_2$ is not submodular, since $v_2(\{g_1,g_2,g_3,g_6\}) - v_2(\{g_1,g_2,g_3\}) = 3 - 3 < 4 - 3 = v_2(\{g_1,g_2,g_3,g_6,g_{10}\}) - v_2(\{g_1,g_2,g_3,g_{10}\})$.}

    In this instance, one MNW allocation gives $g_1,\dots,g_5$ to agent~$1$ and $g_6,g_7,g_8,g_{10}$ to agent~$2$.
    Moreover, in every MNW allocation, agent~$1$ receives a utility of~$5$ and agent~$2$ a utility of~$4$---indeed, if agent~$1$ receives a utility of~$6$, then $g_6$ must be allocated to agent~$1$, which leaves agent~$2$ with utility no more than~$3$, and we have $6\cdot 3 < 5\cdot 4$.

    Next, consider a modified instance where we remove $g_{10}$.
    In this instance, $v_1$ remains binary additive and $v_2$ binary XOS.
    The unique MNW allocation gives $g_1,\dots,g_6$ to agent~$1$ (for a utility of~$6$) and $g_7,g_8,g_9$ to agent~$2$ (for a utility of~$3$).
    Hence, agent~$1$'s utility increases from $5$ to $6$ upon the removal of $g_{10}$, meaning that MNW fails resource-monotonicity.
\end{proof}

\begin{proposition}
\label{prop:XOS-popmon}
    In the unweighted setting, under binary XOS valuations, MNW is not population-monotone regardless of tie-breaking.
\end{proposition}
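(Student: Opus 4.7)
The plan is to construct a concrete counterexample by reusing the non-submodular binary XOS valuation $v_2$ already built in the proof of Proposition~\ref{prop:XOS-resmon}. First, I take the original two-agent, ten-good instance from that proof: agent~$1$ is binary additive with value~$1$ on each of $g_1,\dots,g_6$, and agent~$2$ has the binary XOS valuation that awards value~$4$ exactly on bundles of size at least~$4$ containing both $g_6$ and $g_{10}$, and is otherwise capped at~$3$ for large bundles. That proof already established that every MNW allocation of this instance has utility vector $(5,4)$, so agent~$1$ receives utility exactly~$5$.

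Next, I will extend the instance by adding a third agent whose valuation is binary additive with value~$1$ only on $g_{10}$ and $0$ elsewhere. Since MNW corresponds to $f = \log$ with $f(0) = -\infty$, it first maximizes the number of agents receiving positive utility. Because assigning $g_{10}$ to agent~$3$ and distributing remaining goods between agents~$1$ and~$2$ gives all three agents positive utility, MNW is forced to assign $g_{10}$ to agent~$3$ in every MNW allocation. Conditional on this, agent~$2$ loses access to the $(g_6,g_{10})$ bonus and is capped at utility~$3$, whereas agent~$1$ can still reach utility up to~$6$. A short case check over feasible utility triples $(a,b,1)$ with $a\le 6$ and $b\le 3$ shows $(6,3,1)$ is the unique maximizer of the Nash product (the runner-up $(5,3,1)$ already has strictly smaller product), so every MNW allocation yields agent~$1$ utility exactly~$6$, realized e.g.\ by $A_1 = \{g_1,\dots,g_6\}$, $A_2 = \{g_7,g_8,g_9\}$, $A_3 = \{g_{10}\}$.

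The proof then concludes by observing that agent~$1$'s utility strictly increases from~$5$ to~$6$ when agent~$3$ is added, directly violating population-monotonicity regardless of tie-breaking. The only non-routine ingredient is already packaged inside the earlier proposition: verifying that the valuation $v_2$ is indeed XOS (expressible as a max of binary additive functions keyed by triples $\{g_x,g_y,g_z\}$ and quadruples $\{g_x,g_y,g_6,g_{10}\}$) and binary but not submodular. Given that, the main obstacle—small as it is—is just making the two uniqueness statements airtight, namely that every MNW allocation before \emph{and} after adding agent~$3$ must produce those specific utility vectors, so that the conclusion holds for \emph{every} tie-breaking rule; both follow from a bounded comparison of Nash products and contribute no technical difficulty beyond bookkeeping.
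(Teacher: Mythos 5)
Your proposal is correct and follows essentially the same route as the paper: reuse the two-agent, ten-good instance from Proposition~\ref{prop:XOS-resmon} (where agent~$1$'s MNW utility is $5$), add a third agent who values only $g_{10}$, note that $g_{10}$ must go to agent~$3$ for positive Nash welfare, and conclude that agent~$1$'s utility rises to $6$ in every MNW allocation. The extra bookkeeping you describe (checking the utility triples $(6,3,1)$ versus alternatives) is exactly the short comparison the paper's proof implicitly relies on.
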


\begin{proof}
    Consider the instance with two agents and ten goods described in the proof of Proposition~\ref{prop:XOS-resmon}.
    Recall that agent~$1$ receives a utility of~$5$ in every MNW allocation.

    Suppose we introduce the third agent who has a binary additive valuation~$v_3$ with value~$1$ on~$g_{10}$ and $0$ on all remaining goods.
    In order for the Nash welfare to be positive in the modified instance, $g_{10}$ must be allocated to agent~$3$.
    Given this, the only way to maximize the Nash welfare is to allocate $g_1,\dots,g_6$ to agent~$1$ and $g_7,g_8,g_9$ to agent~$2$.
    Hence, agent~$1$'s utility increases from~$5$ to $6$ upon the introduction of agent~$3$, meaning that MNW fails population-monotonicity.
\end{proof}

\begin{proposition}
\label{prop:XOS-strategyproof}
    In the unweighted setting, under binary XOS valuations, MNW is not strategyproof regardless of tie-breaking.
\end{proposition}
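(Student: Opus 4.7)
The plan is to recycle the same instance used in the proof of Proposition~\ref{prop:XOS-resmon} and exhibit a profitable unilateral deviation for agent~$1$. Recall the instance: two agents and ten goods, with agent~$1$ having a binary additive valuation $v_1$ of value~$1$ on each of $g_1,\ldots,g_6$ and $0$ on $g_7,\ldots,g_{10}$, and agent~$2$ having the binary XOS valuation $v_2$ defined there (whose maximum value is $4$, attained only on bundles of size at least~$4$ that contain both $g_6$ and $g_{10}$).

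First, I would observe (as in the proof of Proposition~\ref{prop:XOS-resmon}) that under truthful reporting, every MNW allocation yields the utility vector $(5,4)$. The reason is a simple case split on agent~$2$'s achievable utility: to obtain $v_2 = 4$ agent~$2$ must receive both $g_6$ and $g_{10}$ (capping agent~$1$'s utility at~$5$ and the product at $20$), while if agent~$2$'s utility is at most~$3$ the product is at most $6 \cdot 3 = 18$. Hence in every MNW allocation, under every tie-breaking rule, agent~$1$ receives true utility exactly~$5$.

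Next, I would consider the misreport $v'_1$ in which agent~$1$ pretends to value $g_{10}$ as well, i.e., $v'_1$ is the binary additive valuation with value~$1$ on each of $\{g_1,\ldots,g_6,g_{10}\}$ and $0$ elsewhere (this is trivially binary XOS). The key computation is to check that under the profile $(v'_1,v_2)$, the unique MNW allocation gives $\{g_1,\ldots,g_6,g_{10}\}$ to agent~$1$ and $\{g_7,g_8,g_9\}$ to agent~$2$, with reported utility vector $(7,3)$ and product~$21$. The same case split applies: agent~$2$ getting value $4$ forces agent~$1$ to miss both $g_6$ and $g_{10}$, giving product at most $5 \cdot 4 = 20$; any allocation where agent~$2$ gets value at most $3$ is dominated by giving agent~$1$ all seven of her reported positively valued goods, producing $7 \cdot 3 = 21$. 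Lower values of $u_2$ give strictly smaller products.

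Finally, I would observe that agent~$1$'s \emph{true} utility for the bundle $\{g_1,\ldots,g_6,g_{10}\}$ is~$6$ (since $v_1(g_{10}) = 0$), which is strictly greater than the true utility of~$5$ she receives under truthful reporting in \emph{every} MNW allocation. Hence, regardless of tie-breaking, agent~$1$ strictly benefits from the misreport, establishing that MNW is not strategyproof under binary XOS valuations. I do not anticipate a genuine obstacle beyond carefully enumerating the MNW candidates in the two cases; the construction is designed so that the ``phantom'' good $g_{10}$ forces agent~$2$ off her top-value bundle, freeing an extra good from $\{g_1,\ldots,g_6\}$ for agent~$1$.
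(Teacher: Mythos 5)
Your proposal is correct and matches the paper's own proof essentially verbatim: the same ten-good instance from Proposition~\ref{prop:XOS-resmon}, the same misreport in which agent~$1$ raises her value for $g_{10}$ from $0$ to $1$, and the same comparison $7\cdot 3 = 21 > 20 = 5\cdot 4$ showing agent~$1$'s true utility rises from $5$ to $6$ under any tie-breaking. No further changes are needed.
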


\begin{proof}
    Consider the instance with two agents and ten goods described in the proof of Proposition~\ref{prop:XOS-resmon}.
    Recall that agent~$1$ receives a utility of~$5$ in every MNW allocation.

    Suppose agent~$1$ lies by changing her utility for $g_{10}$ from $0$ to $1$.
    With this false valuation, the unique MNW allocation gives $g_1,\dots,g_6$ as well as $g_{10}$ to agent~$1$ and $g_7,g_8,g_9$ to agent~$2$---this yields a Nash welfare of $7\cdot 3 = 21$ (with respect to the false valuation), which is higher than the Nash welfare of $5\cdot 4 = 20$ that can be obtained by allocating $g_6$ and $g_{10}$ to agent~$2$.     
    In particular, agent~$1$ receives a utility of $6$ with respect to her true valuation.
    Hence, agent~$1$'s utility increases from $5$ to $6$ by misreporting her valuation, meaning that MNW fails strategyproofness.
\end{proof}

Since every binary XOS function is also binary subadditive, we immediately obtain the following corollary.

\begin{corollary}
    In the unweighted setting, under binary subadditive valuations, MNW is not resource-monotone regardless of tie-breaking.
    The same holds for population-monotonicity as well as strategyproofness.
\end{corollary}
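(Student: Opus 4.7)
The plan is to observe that this corollary is an immediate consequence of Propositions~\ref{prop:XOS-resmon}, \ref{prop:XOS-popmon}, and \ref{prop:XOS-strategyproof}, combined with the standard inclusion that every binary XOS valuation is binary subadditive. So the proof is essentially a one-line deduction, and no new counterexample needs to be constructed.

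More concretely, I would first recall (or briefly verify) that if $v$ is XOS, then $v$ is subadditive: for any $G', G'' \subseteq G$, writing $v = \max_j h_j$ for additive functions $h_j$ and picking $j^*$ attaining the maximum on $G' \cup G''$, one has $v(G' \cup G'') = h_{j^*}(G') + h_{j^*}(G'') \le v(G') + v(G'')$. Hence binary XOS implies binary subadditive. Therefore every instance used to establish Propositions~\ref{prop:XOS-resmon}--\ref{prop:XOS-strategyproof} is simultaneously a valid instance in the binary subadditive setting, with identical MNW allocations and identical failures of the three properties.

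The only thing I would be careful about is making sure that the notion of ``MNW regardless of tie-breaking'' transfers correctly: since the class of binary subadditive valuations is a superset of the class of binary XOS valuations, the set of allocations realizing an MNW outcome on a binary XOS instance is the same whether we regard the instance as binary XOS or as binary subadditive (MNW depends only on the utility profile, not on the syntactic class of the valuation). So tie-breaking constraints do not change between the two settings, and the counterexamples of Propositions~\ref{prop:XOS-resmon}--\ref{prop:XOS-strategyproof} transfer verbatim. There is no real obstacle here; the statement is essentially a corollary in the strictest sense of the word, and I would present it as a three-sentence proof noting the class inclusion and invoking each of the three propositions in turn.
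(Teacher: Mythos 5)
Your proposal is correct and matches the paper's own argument: the corollary is obtained immediately from Propositions~\ref{prop:XOS-resmon}, \ref{prop:XOS-popmon}, and \ref{prop:XOS-strategyproof} via the standard inclusion that every binary XOS valuation is binary subadditive. The extra care you take about tie-breaking transferring (since MNW depends only on utilities, not the valuation class) is sound but not needed beyond the one-line deduction the paper gives.
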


We remark that, using the same instances and similar proofs, one can check that all results in this appendix hold for the maximum harmonic welfare (MHW) rule as well.

\section{Restricted Additive Valuations}
\label{app:restricted-additive}

When agents have additive valuations and equal weights, \citet{ChakrabortyScSu21} proved that  MNW fails both resource- and population-monotonicity regardless of tie-breaking.
Moreover, a result by \citet{KlausMi02} implies that MNW also fails (individual) strategyproofness.\footnote{See the discussion in Section~1 in the work of \citet{HalpernPrPs20}.}
In this appendix, we show that these negative results continue to hold even under \emph{restricted} additive valuations.
A valuation profile $(v_1,\dots,v_n)$ is said to be \emph{restricted additive} (or \emph{generalized binary}) if $v_1,\dots,v_n$ are additive and there exists a function $h:G\rightarrow\mathbb{R}_{\ge 0}$ such that $v_i(g)\in \{0,h(g)\}$ for all $i\in N$ and $g\in G$ \citep{AkramiReSe22,CamachoFePe23}.

\begin{proposition}
    In the unweighted setting, under restricted additive valuations, MNW is not resource-monotone regardless of tie-breaking.
\end{proposition}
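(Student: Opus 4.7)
The plan is to exhibit a concrete pair of instances $(\mathcal{I}, \mathcal{I}')$ with restricted additive valuations in which the MNW allocation is uniquely determined in both cases and differs in a way that strictly decreases some agent's utility. Because uniqueness holds in both instances, the counterexample will be robust to any tie-breaking rule.

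First I would construct $\mathcal{I}$ with two agents and two goods $g_1, g_2$, where $h(g_1) = 100$ is valued by both agents and $h(g_2) = 1$ is valued by agent~$1$ only (so $v_1(g_1) = 100$, $v_1(g_2) = 1$, $v_2(g_1) = 100$, $v_2(g_2) = 0$). Since agent~$2$ only values $g_1$, any allocation denying her $g_1$ yields Nash welfare $0$; enumerating the remaining possibilities, the unique Pareto-optimal allocation with positive Nash product is $(A_1, A_2) = (\{g_2\}, \{g_1\})$, producing utility vector $(1, 100)$ and Nash product $100$. Hence MNW uniquely chooses this allocation in $\mathcal{I}$.

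Next I would form $\mathcal{I}'$ from $\mathcal{I}$ by adding a single good $g_3$ with base value $h(g_3) = 50$, valued by agent~$2$ only. Enumerating Pareto-optimal allocations, the only two achieving positive Nash welfare are $(\{g_2\}, \{g_1, g_3\})$ with utilities $(1, 150)$ and Nash product $150$, and $(\{g_1, g_2\}, \{g_3\})$ with utilities $(101, 50)$ and Nash product $5050$. The latter is uniquely optimal (the factorization $5050 = 101 \cdot 50$ forces $A_1 = \{g_1, g_2\}$ and $A_2 = \{g_3\}$), so MNW uniquely returns it in $\mathcal{I}'$.

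Comparing the two MNW allocations, agent~$2$'s utility strictly decreases from $100$ to $50$ upon the addition of $g_3$: intuitively, once agent~$2$ can be ``compensated'' by the new good, it becomes Nash-optimal to transfer the high-value good $g_1$ back to agent~$1$, who always valued it highly. Since both MNW allocations are unique, the violation holds regardless of tie-breaking. The only real work is the routine case analysis verifying these MNW uniqueness claims, and I do not anticipate any serious technical obstacle beyond the enumeration.
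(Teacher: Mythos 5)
Your proposal is correct and follows essentially the same approach as the paper: an explicit counterexample with restricted additive valuations in which the MNW allocation is unique both before and after adding the extra good, so the drop in an agent's utility (here agent~2's from $100$ to $50$) rules out resource-monotonicity under any tie-breaking. Your instance is in fact smaller than the paper's (two agents and two/three goods versus three agents and four/five goods), and the enumeration you sketch checks out.
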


\begin{proof}
    Consider an instance with three agents and four goods and the following valuations:
    \begin{center}
        \begin{tabular}{ c | c c c c }
         $\mathbf{v}$ & $g_1$ & $g_2$ & $g_3$ & $g_4$ \\ 
         \hline
         $1$ & $0$ & \circled{$1$} & $0$ & \circled{$3$} \\  
         $2$ & \circled{$5$} & $1$ & $0$ & $3$ \\
         $3$ & $5$ & $0$ & \circled{$2$} & $0$ \\
        \end{tabular}
    \end{center}
    The circled cells represent the unique MNW allocation, which has a Nash welfare of $(1+3)\cdot 5\cdot 2 = 40$.
    To see this, note that agent~$3$ is the only one who values $g_3$ positively, so MNW must allocate $g_3$ to agent~$3$.
    If agent~$3$ also receives $g_1$, then $g_2$ and $g_4$ must be split between agents~$1$ and $2$, resulting in a Nash welfare of only $1\cdot 3\cdot (5+2) = 21$.
    Hence, $g_1$ must be allocated to agent~$2$, and agent~$1$ must receive at least one of $g_2$ and $g_4$.
    One can check that if agent~$1$ receives only one of $g_2$ and $g_4$, the resulting Nash welfare is lower than that of the circled allocation.

    Next, suppose we introduce the fifth good:
    \begin{center}
        \begin{tabular}{ c | c c c c | c }
         $\mathbf{v}$ & $g_1$ & $g_2$ & $g_3$ & $g_4$ & $g_5$ \\ 
         \hline
         $1$ & $0$ & $1$ & $0$ & $3$ & \circled{$6$} \\  
         $2$ & $5$ & \circled{$1$} & $0$ & \circled{$3$} & $0$ \\
         $3$ & \circled{$5$} & $0$ & \circled{$2$} & $0$ & $0$ \\
        \end{tabular}
    \end{center}
    The circled cells represent the unique MNW allocation, which has a Nash welfare of $6\cdot (1+3)\cdot (5+2) = 168$.
    To see this, note again that $g_3$ must be allocated to agent~$3$, and similarly $g_5$ must be allocated to agent~$1$.
    Now, $g_1$ can be allocated to either agent~$2$ or $3$, and $g_2$ and $g_4$ can be allocated to either agent~$1$ or $2$.
    One can check that among these allocations, the circled allocation yields the highest Nash welfare.

    Hence, agent~$2$'s utility decreases from $5$ to $4$ upon the introduction of $g_5$, meaning that MNW fails resource-monotonicity.
\end{proof}

\begin{proposition}
    In the unweighted setting, under restricted additive valuations, MNW is not population-monotone regardless of tie-breaking.
\end{proposition}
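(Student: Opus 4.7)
The plan is to piggy-back on the three-agent counterexample from the preceding proposition. There, with agents $\{1,2,3\}$ and valuations $v_1 = (0,1,0,3)$, $v_2 = (5,1,0,3)$, $v_3 = (5,0,2,0)$ on goods $\{g_1, g_2, g_3, g_4\}$, the preceding proof established that the unique MNW allocation assigns $\{g_2, g_4\}$ to agent~$1$, $\{g_1\}$ to agent~$2$, and $\{g_3\}$ to agent~$3$, with Nash welfare $40$; in particular agent~$2$'s MNW utility is $5$. Let $\mathcal{I}'$ denote this instance and let $\mathcal{I}$ be the two-agent instance obtained by deleting agent~$1$, so that $\mathcal{I}'$ arises from $\mathcal{I}$ by adding one extra agent. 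The valuation profile of $\mathcal{I}$ is restricted additive with the same $h(g_1) = 5$, $h(g_2) = 1$, $h(g_3) = 2$, $h(g_4) = 3$.

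It then remains to determine the MNW allocation of $\mathcal{I}$, which I expect to be uniquely $A_2 = \{g_2, g_4\}$ (utility $4$) and $A_3 = \{g_1, g_3\}$ (utility $7$), with Nash welfare $28$. The argument would be a short case analysis: any allocation with positive Nash welfare must give each of the two agents a positively-valued good, and since goods $g_2, g_4$ are valued only by agent~$2$ and good $g_3$ only by agent~$3$, in an MNW allocation these three goods are forced to their respective unique valuers. The only remaining choice is where to send $g_1$; assigning it to agent~$2$ yields Nash welfare $9 \cdot 2 = 18$, while assigning it to agent~$3$ yields $4 \cdot 7 = 28$. Hence $28$ is attained uniquely by the claimed allocation.

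Combining the two computations, agent~$2$'s MNW utility strictly increases from $4$ in $\mathcal{I}$ to $5$ in $\mathcal{I}'$ upon the addition of agent~$1$, which directly violates population-monotonicity. Because both MNW allocations are unique, this failure is independent of tie-breaking. The bulk of the work is the case enumeration for $\mathcal{I}$, which is completely routine; I foresee no conceptual obstacle, only bookkeeping.
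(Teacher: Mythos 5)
Your argument is correct and, like the paper's, it is a pure counterexample proof in which the MNW allocation is unique on both sides, so the ``regardless of tie-breaking'' clause comes for free. The only real difference is the instance: the paper builds a fresh two-agent instance (valuations $(5,0,2,9)$ and $(5,3,0,9)$ over four goods) and then adds a third agent with valuation $(5,3,0,0)$, whereas you reuse the three-agent instance from the resource-monotonicity proposition and obtain the smaller instance by deleting agent~$1$. Your case analysis of that two-agent instance is sound: any Nash-optimal allocation must give $g_2,g_4$ to agent~$2$ and $g_3$ to agent~$3$, the only remaining choice is $g_1$, and $4\cdot 7=28>9\cdot 2=18$ forces agent~$2$'s utility to be $4$; combined with the earlier proposition's (correct) conclusion that agent~$2$'s utility is $5$ in the unique MNW allocation of the three-agent instance, this is exactly the required violation. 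Your version is slightly more economical since it recycles an existing instance, while the paper's choice is self-contained and does not lean on the uniqueness analysis carried out for the resource-monotonicity counterexample; both are equally valid.
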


\begin{proof}
    Consider an instance with two agents and four goods and the following valuations:
    \begin{center}
        \begin{tabular}{ c | c c c c }
         $\mathbf{v}$ & $g_1$ & $g_2$ & $g_3$ & $g_4$ \\ 
         \hline
         $1$ & $5$ & $0$ & \circled{$2$} & \circled{$9$} \\  
         $2$ & \circled{$5$} & \circled{$3$} & $0$ & $9$ \\
        \end{tabular}
    \end{center}
    The circled cells represent the unique MNW allocation, which has a Nash welfare of $(2+9)\cdot (5+3) = 88$.
    To see this, note that agent~$1$ is the only one who values $g_3$ positively, so MNW must allocate $g_3$ to agent~$1$.
    Similarly, $g_2$ must be allocated to agent~$2$.
    One can check that among the possible ways to allocate $g_1$ and $g_4$, the circled allocation yields the highest Nash welfare.

    Next, suppose we introduce the third agent:
    \begin{center}
        \begin{tabular}{ c | c c c c }
         $\mathbf{v}$ & $g_1$ & $g_2$ & $g_3$ & $g_4$ \\ 
         \hline
         $1$ & \circled{$5$} & $0$ & \circled{$2$} & $9$ \\  
         $2$ & $5$ & $3$ & $0$ & \circled{$9$} \\
         \hline
         $3$ & $5$ & \circled{$3$} & $0$ & $0$ \\
        \end{tabular}
    \end{center}
    The circled cells represent the unique MNW allocation, which has a Nash welfare of $(5+2)\cdot 9\cdot 3 = 189$.
    To see this, note that again $g_3$ must be allocated to agent~$1$.
    We consider three cases based on whom $g_1$ is allocated to.
    \begin{itemize}
    \item Suppose $g_1$ is allocated to agent~$1$.
    Then, agent~$3$ must receive $g_2$ in order to avoid getting zero utility.
    As a consequence, agent~$2$ must receive $g_4$ in order to avoid getting zero utility, and we arrive at the circled allocation.
    \item Suppose $g_1$ is allocated to agent~$2$.
    Again, agent~$3$ must receive $g_2$ in order to avoid getting zero utility.
    If $g_4$ is allocated to agent~$1$, the Nash welfare is $(2+9)\cdot 5\cdot 3 = 165$, whereas if $g_4$ is allocated to agent~$2$, the Nash welfare is $2\cdot (5+9)\cdot 3 = 84$.
    \item Suppose $g_1$ is allocated to agent~$3$.
    Agent~$2$ needs at least one of $g_2$ and $g_4$ in order to avoid getting zero utility.
    \begin{itemize}
    \item If agent~$2$ receives both $g_2$ and $g_4$, then the Nash welfare is $2\cdot (3+9)\cdot 5 = 120$.
    \item If agent~$2$ receives only $g_2$, then $g_4$ must be allocated to agent~$1$, and the Nash welfare is $(2+9)\cdot 3\cdot 5 = 165$.
    \item If agent~$2$ receives only $g_4$, then $g_2$ must be allocated to agent~$3$, and the Nash welfare is $2\cdot 9\cdot (5+3) = 144$.
    \end{itemize}
    \end{itemize}
    Hence, agent~$2$'s utility increases from $8$ to $9$ upon the introduction of agent~$3$, meaning that MNW fails population-monotonicity.
\end{proof}

\begin{proposition}
    In the unweighted setting, under restricted additive valuations, MNW is not strategyproof regardless of tie-breaking.
\end{proposition}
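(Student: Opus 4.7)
The plan is to adapt the two-agent, four-good restricted additive instance from the proof of population-monotonicity failure and exhibit a single-agent misreport that strictly increases that agent's true utility under MNW. Concretely, I will keep the valuation table
\[
\begin{array}{c|cccc}
\mathbf{v} & g_1 & g_2 & g_3 & g_4 \\
\hline
1 & 5 & 0 & 2 & 9 \\
2 & 5 & 3 & 0 & 9 \\
\end{array}
\]
so that $h(g_1)=5$, $h(g_2)=3$, $h(g_3)=2$, $h(g_4)=9$, and have agent~$2$ lie by reporting $v'_2(g_1)=0$ (which is still restricted additive, since the admissible values for $g_1$ are $\{0,h(g_1)\}=\{0,5\}$).

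The first step is to recall, as verified in the preceding proposition, that under the truthful profile the unique MNW allocation is $A_1=\{g_3,g_4\}$, $A_2=\{g_1,g_2\}$, so agent~$2$'s true utility equals~$8$. The second step is to identify the unique MNW allocation under the lie: since $g_3$ is still valued positively only by agent~$1$ and $g_2$ only by agent~$2$, and now $g_1$ is valued positively only by agent~$1$, all three of these assignments are forced. The remaining choice is where to place $g_4$; enumeration over the two residual allocations shows that $A_1=\{g_1,g_3\}$, $A_2=\{g_2,g_4\}$ achieves reported Nash welfare $7\cdot 12 = 84$, while $A_1=\{g_1,g_3,g_4\}$, $A_2=\{g_2\}$ achieves only $16\cdot 3=48$, so the former is the unique MNW allocation (regardless of tie-breaking). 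The third step is simply to evaluate agent~$2$'s \emph{true} utility under this lie, namely $v_2(\{g_2,g_4\})=3+9=12>8$, contradicting strategyproofness.

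The main obstacle is the bookkeeping needed to certify that under the lie the MNW allocation is genuinely unique, so that no tie-breaking rule can save MNW. This is not conceptually hard: once the three forced assignments (of $g_1$, $g_2$, $g_3$) are noted, only one good remains to be placed, leaving two candidate allocations whose Nash welfares differ. The underlying intuition is that, because $g_1$ is valued equally by both agents under truth, hiding her value for $g_1$ forces MNW to swap $g_1$ and $g_4$ between the two agents, and this trade is strictly profitable for agent~$2$ since she values $g_4$ strictly more than $g_1$.
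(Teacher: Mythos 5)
Your proof is correct and follows essentially the same approach as the paper's: exhibit an explicit restricted additive instance, verify that the truthful MNW allocation is unique, and show that a misreport (hiding value for a contested good) yields a unique MNW allocation under the lie that strictly increases the liar's true utility, so no tie-breaking can help. The only difference is the instance — you reuse the two-agent, four-good population-monotonicity example with agent~$2$ concealing $g_1$, whereas the paper builds a smaller three-good instance with agent~$1$ concealing two goods — and your numbers check out (truthful utility $8$ versus $12$ after the lie).
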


\begin{proof}
    Consider an instance with two agents and three goods and the following valuations:
    \begin{center}
        \begin{tabular}{ c | c c c }
         $\mathbf{v}$ & $g_1$ & $g_2$ & $g_3$ \\ 
         \hline
         $1$ & $5$ & \circled{$2$} & \circled{$2$} \\  
         $2$ & \circled{$5$} & $0$ & $2$ \\
        \end{tabular}
    \end{center}
    The circled cells represent the unique MNW allocation, which has a Nash welfare of $(2+2)\cdot 5 = 20$.
    To see this, note that agent $1$ is the only one who values $g_2$ positively, so MNW must allocate $g_2$ to agent~$1$.
    One can check that among the possible ways to allocate $g_1$ and $g_3$, the circled allocation yields the highest Nash welfare.

    Next, suppose agent $1$ lies about her utilities for $g_2$ and $g_3$:
    \begin{center}
        \begin{tabular}{ c | c c c }
         $\mathbf{v}'$ & $g_1$ & $g_2$ & $g_3$ \\ 
         \hline
         $1$ & \circled{$5$} & $0$ & $0$ \\  
         $2$ & $5$ & $0$ & \circled{$2$} \\
        \end{tabular}
    \end{center}
    The circled cells represent one MNW allocation, which has a Nash welfare of $5 \cdot 2 = 10$.
    To see this, note that agent $2$ is the only one who values $g_3$ positively, so MNW must allocate $g_3$ to agent $2$. 
    On the other hand, agent~$1$ must receive $g_1$ in order to avoid getting zero utility. 
    Since no agent values $g_2$ positively, it can be left unallocated (or allocated to either agent). 
    It follows that in every MNW allocation, agent~$1$ receives a utility of~$5$.

    Hence, agent $1$'s utility increases from $4$ to $5$ by misreporting her valuation, meaning that MNW fails strategyproofness.
\end{proof}

\end{document}